\title{\papertitle} 
\author{
Richard
Nock\footnote{The Australian National University $\&$ the University
  of Sydney} $\:\:\:$ Stephen Hardy $\:\:\:$
Wilko Henecka $\:\:\:$ Hamish Ivey-Law\\
Giorgio Patrini\footnote{Now with the University of Amsterdam}  $\:\:\:$ Guillaume Smith $\:\:\:$ Brian Thorne\\\\
{\large N1 Analytics / Data61}\\\\
  \texttt{firstname.lastname@data61.csiro.au}\\
 \texttt{g.patrini@uva.nl}
}
\date{}
\begin{document}
\thispagestyle{empty}
\maketitle

\begin{abstract}

  \noindent Consider two data providers, each maintaining 
   records of different feature sets about common entities. They
   aim to learn a linear model over the whole set of features. This
   problem of \textit{federated learning} over \textit{vertically
   partitioned data} includes a crucial upstream issue:
   \textit{entity resolution}, \textit{i.e.} finding the correspondence between
   the rows of the datasets. It is well known that entity resolution,
   just like learning, is mistake-prone in the real world. Despite the 
importance of the problem, there has been no formal assessment of how errors in entity
   resolution impact learning. 

In this paper, we provide a thorough answer to this question,
answering how optimal classifiers, empirical losses, margins and
generalisation abilities are affected. While our answer spans a wide set of
losses --- going beyond proper, convex, or classification calibrated
---, it brings simple practical arguments to upgrade entity resolution
as a
preprocessing step to learning. One of these suggests that
entity resolution should be aimed at controlling or minimizing the
number of matching errors
between examples of distinct classes. In our experiments, we modify a simple
token-based entity resolution algorithm so that it indeed aims at avoiding
matching rows belonging to different classes, and perform experiments
in the setting where entity resolution relies on noisy data,
which is very relevant to real world domains. Notably, our approach
covers the case where one peer \textit{does not} have classes, or a
noisy record of classes. Experiments display that using the class
information during entity resolution can buy significant uplift for learning at little expense
from the complexity standpoint. 

\end{abstract}

\section{Introduction}\label{sec-int}


With the ever-expanding collection of data, it is becoming common
practice for organisations to cooperate with the objective of leveraging
their joint collections of data \cite{dhhiopstwPP,gascon2017privacy},
with a wider push to create and organise data marketplaces as
followers to the more monolithic data warehouse \cite{wTR}. Organisations are fully aware of the potential gain of combining their 
data assets, specifically in terms of increased statistical power for analytics and 
predictive tasks. For example, hospitals and medical facilities could leverage the medical history of common patients
in order to prevent chronic diseases and risks of future hospitalisation.

The problem of learning
models using the data collected and kept/maintained by different parties ---
\textit{federated learning} for short \cite{kmyrtsbFL} --- has become as much a
necessity as a concrete research challenge, expanding beyond machine
learning through fields
like databases and privacy. 
Among other features, work in the area can be classified in terms of
(a) whether the data is vertically or
horizontally partitioned and (b) the family of
models being learned. The overwhelming majority of previous work on secure distributed
learning considers a \textit{horizontal} data partition in which data
providers record the same features for different entities.  Solutions can take
advantage of the separability of loss functions which decompose the
loss by examples. Relevant approaches can be found \textit{e.g.} in
\cite{DBLP:journals/corr/XieWBB16} (and
references therein). 

In a vertical data partition, which is our setting, data providers can
record \textit{different} features for the same entities. The vertical
data partition case is more challenging than the horizontal one
\cite{gascon2017privacy}. To see this, notice that in the later
case, gathering all the data in one place makes any conventional
learning algorithm fit to learn from the whole data. In the vertical
partition case however, gathering the data in one place would not solve the
problem
since we would still have to figure out the correspondence between entities
of the different datasets to learn from the union of all features. Vertical data partition is more relevant to the
setting where different organisations would sit in the same market,
thus aggregating different features for the same customers. The
technical problem to overcome is that loss functions are in general not separable
over \textit{features}. With the exception of the unhinged loss
\cite{vmwLW}, this would be the case for most proper,
classification calibrated and/or non-convex losses
\cite{bjmCCj,nnBD,rwCB}. 
A way to overcome this problem is to
join the datasets \textit{upstream}, using a broad family of
techniques we refer to as entity resolution (or entity matching, record linkage,
\cite{christen12}). For the whole pipeline
--- from matching to learning ---
to be fully and properly optimized taking into account eventual
additional constraints (like privacy), it is paramount to tackle
and answer the following question: 
\begin{center}
"\textit{how does entity-resolution impact learning ?}",
\end{center}
in
particular because error-free entity resolution is often not
available in the
real-world \cite{hsRW}, see Figure
\ref{fig:entity-resolution}. Case studies report that exact matching can be very
damaging when identifiers are not stable and error-prone: 25$\%$ of
true matches would have been missed by exact matching in a census
operation \cite{sEP,wRL}. In fact, one might expect such errors to
just snowball with those of learning: for example, wrong matches of a hospital database with
pharmaceutical records with the objective to improve preventive
treatments could be disastrous on the predictive performances of a
model learned from the joined databases.

\begin{figure}[t]
\begin{center}
\includegraphics[trim=10bp 210bp 10bp
180bp,clip,width=.88\textwidth]{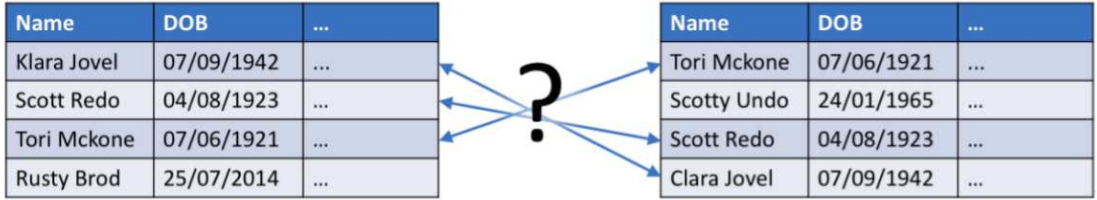} 
\caption{The problem of entity resolution. In this example, peers $\FDP$ and $\LDP$
  share common features (name, date of birth --- DOB) that could
  be used to craft an unique identifier, but the entries
  are noisy so it becomes hard to match rows between peers.}
\label{fig:entity-resolution}
\end{center}
\bignegspace
\end{figure}

To our knowledge, there has been no formal treatment of
this question so far, and the question is open not just for machine
learning as post-processing step to entity-resolution \cite{gmER}. As
a consequence perhaps, some work just assumes that the solution to
entity-resolution is known \textit{a priori} \cite{gascon2017privacy}.\\

\noindent \textbf{Our contribution} --- In this paper, we provide the first
detailed answer to this question and hint on how it can be used to
improve entity resolution as an upstream process to federated
learning with vertically partitioned data. We focus on a popular class of models for federated learning,
linear models \cite{gascon2017privacy,DBLP:journals/corr/XieWBB16}. To
summarize our theoretical contribution, we bound the variation of
several key quantities as computed from the error-prone entity-resolved
dataset on one hand, and also from the
\textit{ideal} dataset for which we would know the optimal
correspondence on the other hand. These
key quantities include:
\begin{itemize}
\item [(i)] the relative deviation between the optimal classifiers;
\item [(ii)] the deviation between their respective losses;
\item [(iii)] the deviation in their respective generalization
abilities;
\end{itemize}
More importantly, we carry this analysis for \textit{any}
Ridge-regularized loss that
satisfies some mild differentiability conditions, thus not necessarily
being convex, nor classification-calibrated, nor even proper. 

Overall, our results shed light on large margin classification in
the context of federated learning, and how it
brings resilience in learning after entity resolution. Indeed, we
show that it yields \textit{immunity} to entity resolution mistakes ---
examples receive the right class from the classifier learned from
error-prone entity-resolved data if they would receive large margin classification from
the optimal, "ideal" classifier learned from the ideal data. Federated
learning in the
vertical partition setting increases the number of features and
is thereby likely to increase margins as well. Hence, such a
theoretical result on immunity represents a very strong argument for federated learning.

On a broader agenda including impacts for practical entity-resolution
algorithms, our analysis suggests that there exists a small set of
controls defined from entity
resolution mistakes that essentially drive \textit{all} deviations highlighted
before. Being able to 
control them essentially leads to a strong handle on how
entity-resolution impacts learning, \textit{from the classifier
learned to its rates for generalization, with respect to the ideal
classifier}. The most prominent of these knobs is the
errors made by entity resolution \textit{across classes},
\textit{i.e.} wrongly linking observations that belong to different
classes. Our theory suggests that focusing on such mistakes during entity
resolution can bring significant leverage for the classifier learned afterwards. We exemplify this experimentally, by modifying a simple token-based
greedy entity-resolution algorithm to integrate the constraint of 
carrying out entity resolution within classes
\cite{cIO,gmER}, assuming that one peer has knowledge of the classes
but the other one may not --- either classes are noisy or just not
present ---. We perform simulated experiments on fifteen distinct UCI domains,
simulated to investigate the key parameters of federated learning in
the setting where peers share the knowledge of \textit{some} features
(such as gender, age, postal code for customers), which can
furthermore be \textit{noisy}. 
Experiments display that even when only one peer has the
knowledge of classes, significant improvements can be obtained over
the approach that performs entity resolution without using classes,
and can even compete with the result of the learner that has access to
the (unknown) ideally entity-resolved data. 

\noindent The rest of this paper is organised as follows. Section
\ref{sec:defs} gives definitions. Section \ref{sec:taylor} shows how
to reduce the analysis for a general loss to that of a specific kind
of loss called \textit{Taylor loss}. Sections \ref{sec:relerr} through
\ref{sec-rade}
develop our theoretical results, and Section \ref{sec:experiments}
provide experiments. A last Section discusses and concludes our
paper. An Appendix, starting page \pageref{sec-toc}, provides all proofs.

\section{Definitions}
\label{sec:defs}

\noindent \textbf{Supervised learning, losses ---} Let $[n] = \{1, 2, ..., n\}$. In the ordinary batch supervised
learning setting, one is given a set of $m$ examples
$\hat{S} \defeq \{(\hat{\ve{x}}_i, y_i), i \in [m]\}$, where $\hat{\ve{x}}_i
\in {\mathcal{X}} \subseteq {\mathbb{R}}^d$ is an observation
(${\mathcal{X}}$ is called the domain) and $y_i
\in \{-1,1\}$ is a label, or class (the "hat" notation shall be
explained below). Our objective is to learn a linear
classifier $\ve{\theta}
\in {\Theta}$ for some fixed ${\Theta} \subseteq {\mathbb{R}}^d$. $\ve{\theta}$ gives
a label to some $\ve{x} \in {\mathcal{X}}$ equal to the sign of
$\ve{\theta}^\top \ve{x} \in {\mathbb{R}}$. The goodness of fit of
$\ve{\theta}$ on $\hat{S}$ is measured by a loss function. We
essentially consider
two categories of losses. The first is the set of Ridge-regularized
losses. Each element, $\ell_F$, is
defined by $\loss_F({\hat{S}}, \ve{\theta};
\gamma, \Gamma) \defeq L + R$ with
\begin{eqnarray}
L\defeq \frac{1}{m}
\cdot \sum_i F(y_i \ve{\theta}^\top \hat{\ve{x}}_i) \:\:, \:\: R \defeq \gamma
\ve{\theta}^\top \Gamma \ve{\theta} \label{ridgeregLoss}\:\:.
\end{eqnarray} 
Here, $\gamma >0$ and $\Gamma$ is symmetric positive
definite. $F : \mathbb{R}\rightarrow \mathbb{R}$ is $C^2$ and satisfies $|F'(0)|, |F''(0)|\ll \infty$ where
"$\ll$" means finite. Note that this is a very general definition as
for example
we do \textit{not} assume that $F$ is convex nor even classification
calibrated \cite{bjmCCj}.

The other set of losses we consider, called \textit{Taylor
  losses}, is such that $L$ simplifies as a degree-two polynomial:
\begin{eqnarray}
L & \defeq & a +
\frac{b}{m} \cdot \sum_i y_i \ve{\theta}^\top \hat{\ve{x}}_i +
\frac{c}{m} \cdot \sum_i (y_i \ve{\theta}^\top \hat{\ve{x}}_i)^2 \:\:, \label{ridgeregTaylorLoss}
\end{eqnarray}
with $a,b, c \in
\mathbb{R}$. Taylor
losses have been used in secure federated learning
\cite{Aono:2016:SSL:2857705.2857731,dhhiopstwPP}. \\

\noindent \textbf{Federated learning ---} In federated learning,
$\hat{S}$ is built from separate data-handling sources, called
\textit{peers}. In our vertical partition setting, we have two peers
\FDP~and \LDP,
each of which has the description of the $m$ examples on a
\textit{subset} of the $d$ features. It may be the case that only one peer
(\FDP~by default) has labels. In addition to
learning a classifier, federated
learning thus faces the mandatory preprocessing step
of \textit{matching} rows in the datasets of \FDP~and \LDP~to build
dataset $\hat{S}$, a preprocessing step we define as \textit{entity
  resolution} \cite{christen12}. 

The observed dataset $\hat{S}$ is created from an \textit{unknown} dataset $S \defeq \{({\ve{x}}_i, y_i), i \in
[m]\}$ whose columns have been split between $\FDP$ and $\LDP$. If we define $\X \in \mathbb{R}^{d \times m}$ as the 
matrix storing (columnwise) observations of $S$, then each row of $\X$ is held by \FDP~or
\LDP. The "or" need not be exclusive as some rows may be present in
both \FDP~and \LDP~\cite{pnhcFL}. Also, duplicating rows in $\X$ does not change the
learning problem. There is thus both an ideal $\X$ and an estimated
observation matrix $\hat{\X}$ giving the observations of
$\hat{S}$ and built from entity-resolution. To understand how the
differences between $\hat{\X}$ and $\X$ impact learning, we need to
drill down into the formalization of $\hat{\X}$.
Both matrices can be represented by block
matrices, with each distinct feature row present exactly once, as:
\begin{eqnarray}
\X \defeq \left[
\begin{array}{c}
\X_\anchor\\\cline{1-1}
\X_\shuffle
\end{array}
\right] & , & \hat{\X} \defeq \left[
\begin{array}{c}
\X_\anchor\\\cline{1-1}
\hat{\X}_\shuffle \defeq \X_\shuffle \PERM_*
\end{array}
\right] \:\:,\label{blockm}
\end{eqnarray}
where $\PERM_* \in
\{0,1\}^{m\times m}$ is a
permutation matrix (unknown) capturing the mistakes of entity-resolution if
$\PERM_* \neq \matrice{i}_m$ (the identity matrix). From convention (\ref{blockm}), the
features of \FDP~are not affected by entity-resolution: we call them
\textit{anchor} features. Because the features of \LDP  are affected by
entity-resolution, we call them~\textit{shuffle}
features. A folklore fact \cite{bierens04} (Chapter
I.5) is that any permutation matrix can be factored as a product of \textit{elementary}
permutation matrices, each of which swaps two rows/columns of
$\matrice{i}_m$. 
So, suppose 
\begin{eqnarray}
\PERM_* & = & \prod_{t=1}^T \PERM_{t} \:\:,
\label{eqFACTPSTAR}
\end{eqnarray}
where $\PERM_{t}$ is an elementary permutation matrix, where $T$,
the \textit{size} of $\PERM_*$, is
unknown. We let $\ua{t}, \va{t}\in [m]$ the two column indexes in
\FDP~affected by $\PERM_t$. 
$\hat{\X}$ can be progressively constructed from a
sequence $\hat{\X}_0, \hat{\X}_1, ..., \hat{\X}_T$ where $\hat{\X}_0 =
\X$, $\hat{\X}_T = \hat{\X}$ and for $t\geq 1$,
\begin{eqnarray}
\hat{\X}_t & \defeq & \left[
\begin{array}{c}
\X_\anchor\\\cline{1-1}
\hat{\X}_{t \shuffle}
\end{array}
\right]\:\:, \hat{\X}_{t \shuffle} \defeq \X_\shuffle \prod_{j=1}^t \PERM_{j}\:\:.
\end{eqnarray}
Let $\hat{\X}_t \defeq [\hat{\ve{x}}_{t1} \:\: \hat{\ve{x}}_{t2} \:\: \cdots
\:\: \hat{\ve{x}}_{tn}]$ denote the column vector decomposition of $\hat{\X}_t$ (with
$\hat{\ve{x}}_{0i} \defeq \ve{x}_i$) and let $\hat{S}_t$ be the training sample obtained from the $t$
first permutations in the sequence. Hence, $\hat{S}_0 =
S$, $\hat{S}_T = \hat{S}$ and $\hat{S}_t \defeq \{(\hat{\ve{x}}_{ti}, y_i), i\in
[m]\}$. 
We let $\ub{t}$
(resp. $\vb{t}$) denote the indices in $[m]$ of the shuffle features in
$\X$ that are in observation $\ua{t}$ (resp. $\va{t}$) and that will
be permuted by $\PERM_t$, creating $\hat{\X}_{t}$ from $\hat{\X}_{t-1}$. For example, if $\ub{t} = \va{t}, \vb{t} = \ua{t}$, then
$\PERM_{t}$ correctly reconstructs observations in indexes $\ua{t}$ and
$\va{t}$ in $\X$. 
\begin{figure}[t]
\centering
\includegraphics[trim=25bp 280bp 330bp
10bp,clip,width=.90\linewidth]{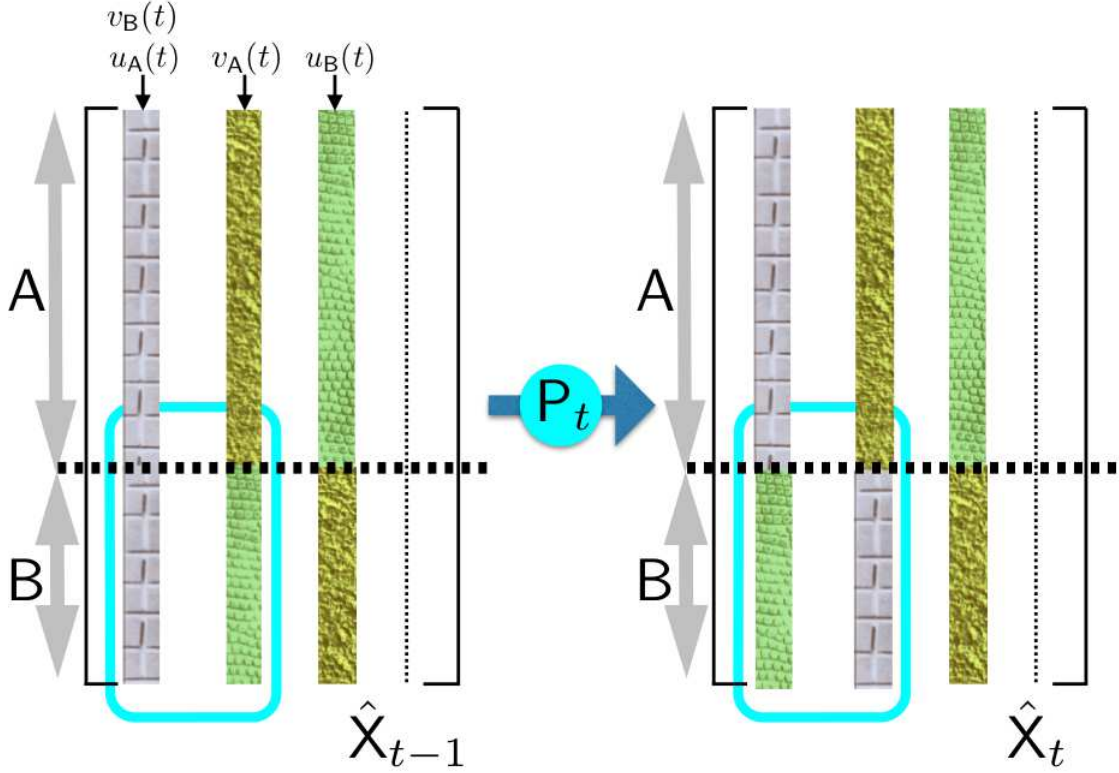} 
\caption{Permutation $\PERM_t$ applied to observation matrix
  $\hat{\X}_{t-1}$ and subsequent matrix $\hat{\X}_t$, using notations $\ua{t}$, $\ub{t}$, $\va{t}$ and
  $\vb{t}$. Textures represent observations of $\X$ (best viewed in color).\label{fig:perm}}
\end{figure}
 Figure \ref{fig:perm} illustrates the use of these
 notations. \\

\noindent \textbf{Key parameters of $\PERM_*$ ---} it is clear that
all mistakes of entity-resolution are captured by $\PERM_*$, so it is
not surprising that all our results depend on some key parameters
of $\PERM_*$. A key property is how errors "accumulate" through the
factorization of $\PERM_*$ in eq. (\ref{eqFACTPSTAR}). Hereafter,
$\ve{w}_{\F}$ for $\ve{w}\in \mathbb{R}^d$ denotes the subvector of $\ve{w}$ containing the
features of peer $\F \in \{\anchor, \shuffle\}$.
\begin{center}
\colorbox{gray!10}{\fbox{%
    \parbox{0.95\textwidth}{
\vspace{-0.4cm}
\begin{definition}\label{defACCURATE}
We say that $\PERM_t$ is $(\epsilon, \tau)$-\textbf{accurate}
for some $\epsilon, \tau\geq 0, \epsilon \leq 1$ iff for any $\ve{w}\in \mathbb{R}^d$,
\begin{eqnarray}
|(\hat{\ve{x}}_{ti} - \ve{x}_{i})_\shuffle^\top \ve{w}_\shuffle| &
\hspace{-0.95cm} \leq \hspace{-0.95cm}& \epsilon \cdot |\ve{x}_{i}^\top \ve{w}| + \tau \|\ve{w}\|_2 , \forall i \in [m]\:, \label{defACCURATE1}\\
|(\ve{x}_{\uf{t}} -
\ve{x}_{\vf{t}})_{\F}^\top\ve{w}_{\F}| & \hspace{-0.95cm}\leq \hspace{-0.95cm}& \epsilon \cdot \max_{i\in
  \{\uf{t}, \vf{t}\}} |\ve{x}_{i}^\top \ve{w}|\nonumber\\
  & & +
\tau \|\ve{w}\|_2\:\:, \forall \F \in \{\anchor, \shuffle\}\:\:.\label{defACCURATE2}
\end{eqnarray}
We say that $\PERM_*$ is $(\epsilon, \tau)$-accurate iff each $\PERM_t$ is
$(\epsilon, \tau)$-accurate, $\forall t = 1, 2, ..., T$. 
\end{definition}
\vspace{-0.4cm}
}}
}
\end{center}
If we consider that vectors $\hat{\ve{x}}_{ti} - \ve{x}_{i}, \ve{x}_{\uf{t}} -
\ve{x}_{\vf{t}}$ quantify errors attributable to 
$\PERM_t$, then $(\epsilon, \tau)$-accuracy postulates that errors along
any direction are bounded by a fraction of the norm of original observations,
plus a penalty. In the
context of the inequalities,
$\tau$ is homogeneous to a norm while 
$\epsilon$ is ``unit-free''. For that reason, we define an important quantity aggregating $\epsilon$ and a ``unit-free'' $\tau$:
\begin{eqnarray}
\xi & \defeq & \epsilon + \frac{\tau}{X_*}\:\:,\label{defXI}
\end{eqnarray}
where $X_* \defeq
  \max_i \|\ve{x}_i\|_2$ is the max norm in (the columns of)
  $\X$. Remark that we always have
\begin{eqnarray}
\xi & \leq & 3 \:\:, \forall \PERM_*\:\:. \label{eqXI}
\end{eqnarray}
Indeed, it is always true that $\hat{\ve{x}}_{ti} \leq
  2X_*$ and so $\|\hat{\ve{x}}_{ti} - \ve{x}_{i}\|_2 \leq 3X_*,
  \forall i \in [m]$, so regardless of $\PERM_*$, we can always choose $\epsilon =
  0, \tau = 3X_*$, making $\xi$ satisfy ineq. (\ref{eqXI}). Much
  smaller values are possible: for
  example, when entity-resolution is so good that errors eventually slightly
  change norms but not directions (\textit{e.g.} $\hat{\ve{x}}_{ti} - \ve{x}_{i} =
  \epsilon_i \cdot \ve{x}_i, \forall i$), then we may end up with
  $\epsilon$ close to zero and $\tau = 0$, resulting in $\xi$ close
  to zero as well. The reason why it is desirable for $\xi$ to be significantly
  smaller is given in the following definition.
\begin{center}
\colorbox{gray!10}{\fbox{%
    \parbox{0.95\textwidth}{
\vspace{-0.4cm}
\begin{definition}\label{defSIZE}
We say that $\PERM_*$ is \textbf{$\alpha$-bounded} 
for some $0< \alpha \leq 1$
iff its size satisfies 
\begin{eqnarray*}
T & \leq & \left(\frac{m}{\xi}\right)^{\frac{1-\alpha}{2}}\:\:.
\end{eqnarray*}
\end{definition}
\vspace{-0.4cm}
}}
}
\end{center}
It is crucial to remark that this puts a constraint on the size $T$
since in all cases we shall require $T = O(\sqrt{m})$ whenever $\xi$
is not small (say $\xi = 3$, ineq. \ref{eqXI}). This
constraint is considerably weakened when the magnitude of errors
($\xi$) gets small, so that we can end up with $\PERM_*$ 
$\alpha$-bounded for $\alpha$ very
close to 1, which shall be a highly desirable feature for the theory
to follow. Notice also that a permutation can always be decomposed
in elementary permutations with $T\leq m$, yet to achieve a
particular level of $(\epsilon,
\tau)$-accuracy, we may need more than the minimal size
factorisation. It seems however more than reasonable to assume that we shall
still have $T = O(m)$ in all cases, which does not fundamentally change the picture of the
constraint imposed by $\alpha$-boundedness. 
Finally, we let $T_+ \leq T$ denote the number of \textit{class mismatch}
permutations in the factorization, \textit{i.e.} for which $y_{\ua{t}}
\neq y_{\va{t}}$ and let
\begin{eqnarray}
\rho & \defeq & \frac{T_+}{T} \in [0,1]\:\:\label{defrho}
\end{eqnarray}
define the proportion of elementary permutations that act between
classes. \\

\noindent \textbf{Key parameters for our results ---} 
Remarkably, all our results on how mistakes of entity resolution
affect learning essentially depend on \textit{three} parameters
only, each characterizing a distinct unknown: the ideal
classifier $\ve{\theta}_0^*$ ($\deltatheta$), permutation $\PERM_*$
($\deltaperm$) and the ideal dataset $S$ ($\deltaset$):\begin{center}
\colorbox{gray!10}{\fbox{%
    \parbox{0.95\textwidth}{
\begin{eqnarray}
\deltatheta \defeq \|\ve{\theta}_0^* \|_2 X_* \:\:, \:\: \deltaperm
\defeq \frac{\sqrt{\xi} \rho}{4}\:\:, \:\: \deltaset \defeq
\left\| \frac{1}{mX_*}\cdot \sum_i
y_i\ve{x}_i \right\|_2 \:\:.\label{defunkwn}
\end{eqnarray}
}}
}
\end{center}
It is not hard to see that $\deltatheta$ is an upperbound on the
margin achieved by the ideal classifier on the true dataset (Section
\ref{sec-marg}), $\deltaperm$ aggregates class mismatch and the
magnitude of errors in permutations, and $\deltaset$ is the norm of a normalized
version of a sufficient statistics for the class in $S$ called the
mean operator \cite{pnrcAN}. These can globally be seen as penalties
--- the smaller they are, the less impact has $\PERM_*$ on
learning. The most important with respect to the design of entity
resolution algorithms for federated learning, $\deltaperm$, displays an interesting regime: 
when $\PERM_*$ is "good enough" that $\rho = 0$ --- that is,
we make no entity-resolution mistakes \textit{between} classes ---,
we have $\deltaperm = 0$, which can bring substantially better bounds on
all our results.
\section{Anayzing Losses via Taylor losses}
\label{sec:taylor}

We let $\taylorloss$
denote a Taylor loss. The importance of Taylor losses in our
context is provided by our first Theorem below which, by means of
words, says that any of our losses $\loss_F$ sufficiently regularized can be locally
approximated in a neighborhood of any local minimum by a particular
convex Taylor loss with very specific parameterization, crucial for
our next results. Following \cite{bMA}, we let $\lambda_1^\uparrow(.)$
denote the smallest eigenvalue. We let
$\mathcal{C}$ denote the set of local minima of $\loss_F({\hat{S}}, \ve{\theta};
\gamma, \Gamma)$ --- omitting dependences in ${\hat{S}}, \gamma, \Gamma$ --- and for any $\ve{\theta}^*\in \mathcal{C}$, we let
$\mathcal{N}(\ve{\theta}^*)$ denote an open neighborhood
of $\ve{\theta}^*$
over which $\loss_F$ is convex, which is guaranteed to be non empty by the
assumptions on $F$. Parameters $a, b, c$ below refer to
those in eq. (\ref{ridgeregTaylorLoss}).
\begin{theorem}\label{lemmaTaylor}
$\forall \lambda^\circ > 0$ and sample $\hat{S}$, there exists
$\lambda^*>0$ such that for any loss $\loss_F({\hat{S}}, \ve{\theta};
\gamma, \Gamma_F)$ satisfying $\gamma \lambda_1^\uparrow(\Gamma_F)
\geq \lambda^*$ and any $\ve{\theta}^* \in \mathcal{C}$, there exists a
convex Taylor loss $\opttaylorloss({\hat{S}}, \ve{\theta};
\gamma, \Gamma_T)$ such that\\
\noindent (i) $a = F(0), b = F'(0)$,\\
\noindent (ii) $ \arg\min_{\ve{\theta}} \opttaylorloss({\hat{S}}, \ve{\theta};\gamma,
\Gamma_T) = \ve{\theta}^*$, and\\
\noindent (iii) $\gamma
\lambda_1^\uparrow(\Gamma_T) \geq \lambda^\circ$.\\
Furthermore, if $F$ is strictly
convex, then $c>0$.
\end{theorem}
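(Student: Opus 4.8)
\noindent\emph{Proof (sketch).} The idea is to build the Taylor loss from a second-order expansion of the \emph{gradient} of $\loss_F$ at the origin, so that the stationarity of a local minimum is reproduced exactly. First set $a\defeq F(0)$, $b\defeq F'(0)$, as (i) forces. Since $F$ is $C^2$, $F'(z)=F'(0)+z\,\bar F''(z)$ with $\bar F''(z)\defeq\int_0^1 F''(tz)\,\mathrm{d}t$ (so $\bar F''(0)=F''(0)$ and $|\bar F''(z)|$ is bounded by the supremum of $|F''|$ on $[-|z|,|z|]$). Substituting this into $\nabla_{\ve{\theta}}\loss_F(\hat{S},\ve{\theta}^*;\gamma,\Gamma_F)=\ve{0}$ for a local minimum $\ve{\theta}^*\in\mathcal{C}$ and using $y_i^2=1$ (which turns $(y_i\ve{\theta}^{*\top}\hat{\ve{x}}_i)\,y_i\hat{\ve{x}}_i$ into $\hat{\ve{x}}_i\hat{\ve{x}}_i^\top\ve{\theta}^*$) gives
\begin{equation}
F'(0)\,\ve{\mu}+\left(M^*+2\gamma\Gamma_F\right)\ve{\theta}^*=\ve{0}\:\:,\label{eq:taylorstat}
\end{equation}
where $\ve{\mu}\defeq\frac1m\sum_i y_i\hat{\ve{x}}_i$, $z_i^*\defeq y_i\ve{\theta}^{*\top}\hat{\ve{x}}_i$, and $M^*\defeq\frac1m\sum_i\bar F''(z_i^*)\,\hat{\ve{x}}_i\hat{\ve{x}}_i^\top$ (symmetric) is the ``effective curvature'' of $\loss_F$ at $\ve{\theta}^*$.

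Next I would parametrise the candidate through an auxiliary symmetric matrix $P$: writing $\hat{\Sigma}\defeq\frac1m\sum_i\hat{\ve{x}}_i\hat{\ve{x}}_i^\top$, set $\Gamma_T\defeq\gamma^{-1}(P-c\hat{\Sigma})$. Then $\nabla_{\ve{\theta}}\opttaylorloss(\hat{S},\ve{\theta};\gamma,\Gamma_T)=F'(0)\,\ve{\mu}+2P\ve{\theta}$ with Hessian $2P$, so for \emph{any} $c$, if $P$ is positive definite then $\opttaylorloss$ is strictly convex and $\ve{\theta}^*$ is its unique minimiser exactly when $2P\ve{\theta}^*=-F'(0)\,\ve{\mu}$, i.e.\ $P\ve{\theta}^*=\ve{q}\defeq-\frac12 F'(0)\,\ve{\mu}$. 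A symmetric positive definite $P$ with prescribed action $P\ve{\theta}^*=\ve{q}$ on a single vector exists iff $\ve{\theta}^{*\top}\ve{q}>0$ (the case $\ve{\theta}^*=\ve{0}$ being trivial), and then --- through a rank-$\leq 2$ perturbation of a large multiple of $\matrice{i}_d$ --- its smallest eigenvalue $\lambda_1^\uparrow(P)$ can be made arbitrarily close, from below, to the ceiling $\ve{\theta}^{*\top}\ve{q}/\|\ve{\theta}^*\|_2^2$. Dotting (\ref{eq:taylorstat}) with $-\frac12\ve{\theta}^*$ yields the key identity $\ve{\theta}^{*\top}\ve{q}=\frac12\ve{\theta}^{*\top}M^*\ve{\theta}^*+\gamma\,\ve{\theta}^{*\top}\Gamma_F\ve{\theta}^*$, which decides when the program succeeds.

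When $F$ is strictly convex, $F''\geq 0$ forces $\bar F''(z_i^*)\geq 0$, hence $M^*$ positive semidefinite and $\ve{\theta}^{*\top}\ve{q}\geq\gamma\,\ve{\theta}^{*\top}\Gamma_F\ve{\theta}^*\geq\gamma\lambda_1^\uparrow(\Gamma_F)\,\|\ve{\theta}^*\|_2^2$; so $\ve{\theta}^{*\top}\ve{q}>0$ and the ceiling is at least $\gamma\lambda_1^\uparrow(\Gamma_F)\geq\lambda^*$. Choosing $\lambda^*\defeq 2\lambda^\circ$ I can then take $P$ with $\lambda_1^\uparrow(P)\geq\frac32\lambda^\circ$ and $c\defeq\lambda^\circ/(2\|\hat{\Sigma}\|_2)>0$ (spectral norm); then $\Gamma_T=\gamma^{-1}(P-c\hat{\Sigma})$ is positive definite and, by Weyl's inequality, $\gamma\lambda_1^\uparrow(\Gamma_T)\geq\lambda_1^\uparrow(P)-c\|\hat{\Sigma}\|_2\geq\lambda^\circ$, settling (iii) with $c>0$ and (i)--(ii) by construction. (When moreover every $z_i^*\neq 0$, one has $\bar F''(z_i^*)>0$ for all $i$ and may push $c$ up to $\frac12\min_i\bar F''(z_i^*)$.)

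The delicate case is $F$ merely $C^2$: then $M^*$ may be indefinite and $\ve{\theta}^{*\top}\ve{q}$ need not be positive, so one must pick $\lambda^*$ (depending on $\lambda^\circ$, $\hat{S}$ and, through the regularity of $F$, on $F$) large enough that $\gamma\lambda_1^\uparrow(\Gamma_F)\geq\lambda^*$ makes $\gamma\,\ve{\theta}^{*\top}\Gamma_F\ve{\theta}^*$ dominate $\frac12\|M^*\|_2\|\ve{\theta}^*\|_2^2$ and exceed $(\lambda^\circ+\frac12\|M^*\|_2)\|\ve{\theta}^*\|_2^2$; then the construction above (with, say, $c=0$) goes through. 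Since $\|M^*\|_2\leq\max_i|\bar F''(z_i^*)|\cdot\max_i\|\hat{\ve{x}}_i\|_2^2$ and $|\bar F''(z_i^*)|\leq\sup\{|F''(z)|:|z|\leq\|\ve{\theta}^*\|_2\max_i\|\hat{\ve{x}}_i\|_2\}$, this reduces to an a priori radius bound $\|\ve{\theta}^*\|_2\leq R(\lambda^\circ,\hat{S})$ holding for \emph{every} local minimum once the regularisation passes $\lambda^*$. I would extract such a bound from (\ref{eq:taylorstat}), which gives $2\gamma\lambda_1^\uparrow(\Gamma_F)\|\ve{\theta}^*\|_2\leq|F'(0)|\,\|\ve{\mu}\|_2+\|M^*\|_2\,\|\ve{\theta}^*\|_2$, by resolving this self-referential inequality using the finiteness of $F'(0),F''(0)$ and the boundedness of $F''$ on the (compact) range of pre-activations. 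Converting that implicit inequality into an explicit radius, valid precisely once $\gamma\lambda_1^\uparrow(\Gamma_F)$ is past a threshold, is the main obstacle --- it is where the regularisation hypothesis earns its keep, and also where a mild growth control on $F$ (e.g.\ $F$ bounded below, so that a global minimiser already satisfies $\|\ve{\theta}^*\|_2^2\leq(F(0)-\inf F)/(\gamma\lambda_1^\uparrow(\Gamma_F))$) makes the argument transparent.
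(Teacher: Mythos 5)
Your route is genuinely different from the paper's. The paper Taylor--Lagrange-expands $F$ itself around $0$, gets a degree-two polynomial whose quadratic coefficient is a $\ve{\theta}^*$-dependent average $c^*$ of the $F''(z_i)$'s (so that the two losses \emph{agree in value} at $\ve{\theta}^*$), and then forces (ii) by setting $\Gamma_T \defeq \matrice{k}+\Gamma_F$ where $\matrice{k}$ absorbs the discrepancy $\mathrm{Diag}(\ve{c}-c^*\ve{1})$; condition (iii) then drives the choice of $\lambda^*$. You instead expand $F'$, write stationarity as $F'(0)\ve{\mu}+(M^*+2\gamma\Gamma_F)\ve{\theta}^*=\ve{0}$, reparametrise $\Gamma_T$ through an auxiliary SPD $P$ so the Taylor gradient is $F'(0)\ve{\mu}+2P\ve{\theta}$, and reduce (ii)--(iii) to the existence of an SPD $P$ with $P\ve{\theta}^*=\ve{q}$ and $\lambda_1^\uparrow(P)$ large; that this is possible iff $\ve{\theta}^{*\top}\ve{q}>0$, via a rank-$\le 2$ perturbation of $\lambda\matrice{i}_d$, is correct. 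This is cleaner in that it isolates exactly the scalar quantity $\ve{\theta}^{*\top}\ve{q}=\tfrac12\ve{\theta}^{*\top}M^*\ve{\theta}^*+\gamma\,\ve{\theta}^{*\top}\Gamma_F\ve{\theta}^*$ deciding feasibility, and for strictly convex $F$ your argument is complete and even explains the ``Furthermore'' by construction ($c=\lambda^\circ/(2\|\hat\Sigma\|_2)>0$). It does lose the paper's side remark that the two losses coincide in \emph{value} at $\ve{\theta}^*$, which you don't enforce.

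There is, however, a genuine gap in the non-convex case, and you name it yourself: you never close the self-referential bound on $\|\ve{\theta}^*\|_2$ and instead gesture at an extra hypothesis ($F$ bounded below, or explicit growth control on $F''$) that is not part of the theorem. The paper dispatches this via eq.~(\ref{bsupnorm}): from the first-order condition, $\|\ve{\theta}^*\|_2 \le |F'(0)|\hat X_*/(2\lambda_1^\uparrow(\Gamma_F))$, a bound that decreases as regularisation grows. That derivation uses PSD-ness of $\hat{\X}\,\mathrm{Diag}(\ve{c})\,\hat{\X}^\top$, i.e.\ $F''\ge 0$; in your language this is exactly $M^*\succeq 0$, which you also only get for convex $F$. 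The paper handles the non-convex case by halving $\lambda^\circ$ in ineq.~(\ref{blambdastar2}) and relying on continuity of $F''$ at $0$ to make $\mathbb J$ nonempty (eq.~(\ref{defJJ})), so the radius bound from the convex step is reused implicitly. To complete your argument you would need a version of this: once $\gamma\lambda_1^\uparrow(\Gamma_F)$ exceeds, say, $|F''(0)|\hat X_*^2$, the inequality $2\gamma\lambda_1^\uparrow(\Gamma_F)R\le |F'(0)|\hat X_*+\phi(R)R$ with $\phi(R)\defeq\hat X_*^2\sup\{|F''(z)|:|z|\le R\hat X_*\}$ has a smallest positive fixed point $R_0(\gamma\lambda_1^\uparrow(\Gamma_F))\to 0$, and one must argue that every local minimum lies below $R_0$ (not merely below \emph{some} root of the fixed-point equation). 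As stated, your proposal is correct and self-contained for the strictly convex clause but leaves this step open for the general statement.
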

(Proof in Appendix, Subsection
\ref{proof_lemmaTaylor}) Even when not as crucial as for (i -- iii),
the proof of Theorem \ref{lemmaTaylor} shows that we also have $\loss_F({\hat{S}}, \ve{\theta}^*;
\gamma, \Gamma_F) = \opttaylorloss({\hat{S}}, \ve{\theta}^*;
\gamma, \Gamma_T)$, \textit{i.e.} both losses coincide at the local
optimum for $\ell_F$. 

A natural question is what is the strength of the regularization imposed on
$\ell_F$ ($\lambda^*$). While one can figure out worst cases $F$ --- not defining usual
losses --- for which $\lambda^*$ is large, we show that a
popular subset of proper losses yield reasonable values for
$\lambda^*$ \cite{nnBD}: such losses are strictly convex, non-negative and have no class-dependent
misclassification cost. It can be shown for any such loss that there exists
a \textit{permissible} $\psi$ such that
 $F \defeq
F_\psi$ with
\begin{eqnarray}
F_\psi (z) & \defeq & \frac{\psi(0) + \psi^\star(-z)}{\psi(0) - \psi(1/2)}
\defeq a_\psi + \frac{\psi^\star(-z)}{b_\psi} \:\:, \label{deffpsi}
\end{eqnarray}
where $\star$ is the convex conjugate \cite{nnBD}. A permissible $\psi$ satisfies $\mathrm{dom}(\psi)
\supseteq [0,1]$, $\psi$ strictly convex, differentiable and symmetric
with respect to $1/2$. We add the condition that $\psi'$ is
concave on $[0,1/2]$ and denote this set of losses as 
\textit{regular symmetric proper losses} (\rspl). Popular examples of \rspl s include the square,
logistic and Matsushita losses \cite{nnBD}, the square loss also
being a Taylor loss. We let $\hat{X}_* \defeq \max_i \|\hat{\ve{x}}_i\|_2$.
\begin{lemma}\label{lemmaTaylor2}
Whenever $\ell_F$ is a \rspl~in Theorem \ref{lemmaTaylor}, we can pick
$\lambda^* \defeq \lambda^\circ + F_{\psi}''(0) \hat{X}^2_* / 2$. 
\end{lemma}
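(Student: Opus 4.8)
The plan is to reopen the construction inside the proof of Theorem~\ref{lemmaTaylor} and read off exactly which scalar the quantity $\gamma\lambda_1^\uparrow(\Gamma_F)$ must dominate, and then to evaluate that scalar for an \rspl. In that proof one starts from a local minimum $\ve{\theta}^*\in\mathcal{C}$ of $\loss_F$, forces $a=F(0)$, $b=F'(0)$ (condition (i)), and must then exhibit a coefficient $c$ and a symmetric positive-definite $\Gamma_T$ so that the quadratic $\opttaylorloss(\ve{\theta})=a+b\hat{\ve{\mu}}^\top\ve{\theta}+\ve{\theta}^\top(c\hat{\Sigma}+\gamma\Gamma_T)\ve{\theta}$, with $\hat{\ve{\mu}}\defeq\frac{1}{m}\sum_i y_i\hat{\ve{x}}_i$ and $\hat{\Sigma}\defeq\frac{1}{m}\sum_i\hat{\ve{x}}_i\hat{\ve{x}}_i^\top$, is convex and has $\ve{\theta}^*$ as its minimiser. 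Combining the stationarity of $\opttaylorloss$ at $\ve{\theta}^*$ with the stationarity of $\ve{\theta}^*$ for $\loss_F$ and a second-order (mean-value) expansion of $F$ around $0$ turns the minimiser requirement into an identity $\gamma\Gamma_T\ve{\theta}^*=\gamma\Gamma_F\ve{\theta}^*+\frac{1}{2}\hat{H}\ve{\theta}^*-c\hat{\Sigma}\ve{\theta}^*$, with $\hat{H}\defeq\frac{1}{m}\sum_i F''(\zeta_i)\hat{\ve{x}}_i\hat{\ve{x}}_i^\top$ and $\zeta_i$ between $0$ and $y_i\ve{\theta}^{*\top}\hat{\ve{x}}_i$. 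The only freedom left is the restriction of $\Gamma_T$ to $\ve{\theta}^{*\perp}$, and a symmetric $M$ with prescribed $M\ve{w}=\ve{v}$ admits $\lambda_1^\uparrow(M)\ge\lambda^\circ$ iff $\langle\ve{v},\ve{w}\rangle\ge\lambda^\circ\|\ve{w}\|_2^2$; so condition (iii) is attainable exactly when $\langle\gamma\Gamma_T\ve{\theta}^*,\ve{\theta}^*\rangle\ge\lambda^\circ\|\ve{\theta}^*\|_2^2$.

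The second step is to bound that inner product and extract $\lambda^*$. Writing it out via the identity above and using $\hat{H}\succeq 0$ (valid since $F$ is convex on $\mathcal{N}(\ve{\theta}^*)$) and $\hat{\Sigma}\succeq 0$, one gets $\langle\gamma\Gamma_T\ve{\theta}^*,\ve{\theta}^*\rangle\ge(\gamma\lambda_1^\uparrow(\Gamma_F)-c\,\lambda_{\max}(\hat{\Sigma}))\,\|\ve{\theta}^*\|_2^2$, together with $\lambda_{\max}(\hat{\Sigma})=\max_{\|\ve{\theta}\|_2=1}\frac{1}{m}\sum_i(\ve{\theta}^\top\hat{\ve{x}}_i)^2\le\hat{X}_*^2$; moreover the legitimacy of the construction (convexity of $\opttaylorloss$ and domination of the Hessian of $L_F$ near $\ve{\theta}^*$, which is $\preceq(\sup_z F''(z))\hat{X}_*^2\matrice{i}_d$) asks only for $c\ge\frac{1}{2}\sup_z F''(z)\ge0$. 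Taking $c=\frac{1}{2}\sup_z F''(z)$, condition (iii) and all other constraints hold as soon as $\gamma\lambda_1^\uparrow(\Gamma_F)\ge\lambda^\circ+\frac{1}{2}\hat{X}_*^2\sup_z F''(z)$, i.e.\ $\lambda^*\defeq\lambda^\circ+\frac{1}{2}\hat{X}_*^2\sup_z F''(z)$ is admissible (and $c>0$ whenever $F$ is strictly convex, which is the last clause of Theorem~\ref{lemmaTaylor}). It therefore remains only to show that for $F=F_\psi$ an \rspl\ one has $\sup_z F_\psi''(z)=F_\psi''(0)$.

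The third step, the \rspl-specific one, is a short computation. From $F_\psi=a_\psi+\psi^\star(-\cdot)/b_\psi$ together with $(\psi^\star)'=(\psi')^{-1}$ and $(\psi^\star)''=1/(\psi''\circ(\psi^\star)')$, one gets $F_\psi''(z)=\bigl(b_\psi\,\psi''((\psi')^{-1}(-z))\bigr)^{-1}$; here $b_\psi=\psi(0)-\psi(1/2)>0$ because the strictly convex $\psi$ is symmetric about $1/2$ (hence minimised there), which is exactly what keeps $F_\psi''>0$ (and hence makes the $c>0$ clause and the convexity of $\opttaylorloss$ automatic). That same symmetry gives $\psi'(1/2)=0$, so $(\psi')^{-1}(-z)=1/2$ precisely for $z=0$. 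Finally, ``$\psi'$ concave on $[0,1/2]$'' --- the extra clause in the definition of \rspl\ --- means $\psi''$ is nonincreasing on $[0,1/2]$, and the symmetry relation $\psi''(u)=\psi''(1-u)$ then makes it nondecreasing on $[1/2,1]$; hence $\psi''$ is minimised over $[0,1]$ at $u=1/2$, so $F_\psi''(z)=(b_\psi\psi''((\psi')^{-1}(-z)))^{-1}$ is maximised at $z=0$. Plugging $\sup_z F_\psi''(z)=F_\psi''(0)$ into the expression for $\lambda^*$ from the second step gives $\lambda^*=\lambda^\circ+F_\psi''(0)\hat{X}_*^2/2$, as claimed.

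I expect the main obstacle to be the first step: faithfully extracting from the proof of Theorem~\ref{lemmaTaylor} that $\lambda^*$ really has the shape ``$\lambda^\circ$ plus $\frac{1}{2}\hat{X}_*^2$ times a supremum of $F''$'', i.e.\ identifying that supremum as the precise cost of simultaneously demanding that $\opttaylorloss$ be convex, be minimised at $\ve{\theta}^*$, and still satisfy $\gamma\lambda_1^\uparrow(\Gamma_T)\ge\lambda^\circ$. Everything after that --- the spectral bound $\lambda_{\max}(\hat{\Sigma})\le\hat{X}_*^2$ and the fact that $F_\psi''$ peaks at $0$ --- is routine, and it is exactly the concavity of $\psi'$ on $[0,1/2]$ built into the definition of \rspl\ that buys the clean value.
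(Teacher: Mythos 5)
Your final step --- the computation that $F_\psi''$ is maximised at $z=0$ via $F_\psi''(z)=1/(b_\psi\,\psi''(\psi'^{-1}(-z)))$, the monotonicity of $\psi''$ forced by concavity of $\psi'$ on $[0,1/2]$ plus symmetry, and $\psi'(1/2)=0$ --- is exactly the paper's proof of Lemma~\ref{lemmaTaylor2}. The difference is upstream. The paper's proof of this lemma is only that final step: it simply substitutes $\sup F''(\cdot)=F_\psi''(0)$ into the set $\mathbb{J}(\lambda^\circ,\hat{X}_*)$ already defined in eq.~(\ref{defJJ}) inside the proof of Theorem~\ref{lemmaTaylor}, collapsing it to $\{z\in\mathbb{R}_+: z\ge\lambda^\circ+F_\psi''(0)\hat{X}_*^2/2\}$ and reading off $\lambda^*$ as its infimum (eq.~\ref{defLAMBDASTAR}). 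Your first two paragraphs re-derive, from scratch, the fact that $\lambda^*$ must have the shape ``$\lambda^\circ$ plus $\tfrac12\hat{X}_*^2$ times a supremum of $F''$''; that is genuinely work that Theorem~\ref{lemmaTaylor}'s proof has already done (via eq.~\ref{blambdastar} and the choice $\Gamma_T=\matrice{k}+\Gamma_F$), so the lemma's proof does not need to redo it. Your re-derivation is also slightly different in route: you treat $c$ as a free parameter and take $c=\tfrac12\sup_z F''(z)$, whereas the paper fixes $c=c^*$ via the Taylor--Lagrange remainder so that the two losses actually coincide at $\ve{\theta}^*$, and then bounds the spectrum of the correction $\matrice{k}$; you also optimise over all symmetric $\Gamma_T$ consistent with the prescribed action $\Gamma_T\ve{\theta}^*$, whereas the paper just exhibits one choice. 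Both routes give the same threshold, and for a regular symmetric proper loss the paper's sharper interval-restricted supremum $\sup F''(\mathbb{I})$ and your global $\sup_z F''(z)$ agree at $z=0$, so no gap results. One small technical wrinkle in your detour: the claim that a symmetric $M$ with $M\ve{w}=\ve{v}$ admits $\lambda_1^\uparrow(M)\ge\lambda^\circ$ iff $\langle\ve{v},\ve{w}\rangle\ge\lambda^\circ\|\ve{w}\|_2^2$ is only correct with strict inequality (or by noting the infimum over the off-diagonal completion is a limit, not attained, unless $\ve{v}\parallel\ve{w}$); this does not affect the result here, but the paper's direct construction of $\Gamma_T$ sidesteps the issue entirely.
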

(Proof in Appendix, Subsection
\ref{proof_lemmaTaylor2}) As examples, $F_{\psi}''(0)$ is respectively
$1/4$, $1/2$ for the logistic and Matsushita losses \cite{nnBD}, which results in
a relatively small value for $\lambda^*$.

We can briefly summarize this Section as follows: when sufficiently
regularized, essentially \textit{any} locally optimal classifier for \textit{any} loss
$\ell_F$ is also optimal for \textit{some} specific regularized convex Taylor loss
in which we have $a = F(0), b = F'(0)$. So, we focus in
what follows on the sequence of optimal classifiers for such
Taylor losses, in which the \textit{sequence} is defined by a
progressive application to $S$ of the unknown elementary permutations
defining $\PERM_*$ as in eq. (\ref{eqFACTPSTAR}):
\begin{eqnarray}
\ve{\theta}^*_t & \defeq & \arg\min_{\ve{\theta}}
\taylorlossparams{a}{b}{c} ({\hat{S}}_t, \ve{\theta}; \gamma,
\Gamma)\:\:,\label{optTaylor}
\end{eqnarray}
for $a = F(0), b = F'(0)$ and $c\in \mathbb{R}_*$, where $a, b, c$
refer to coefficients in
eq. (\ref{ridgeregTaylorLoss}). The particular case $c=0$ yields similar bounds with weaker assumptions but
essentially corresponds to a single loss, the unhinged loss
\cite{vmwLW}. We shall assume without loss of generality that the null
vector is not optimal for the Taylor loss, which is in fact guaranteed
by the regularization --- alternatively, it would also hold when $F'(0) \neq 0$, which is
for example ensured by classification calibrated losses
\cite{bjmCCj}. 
\begin{figure}[t]
\centering
\includegraphics[trim=80bp 240bp 400bp
50bp,clip,width=.60\linewidth]{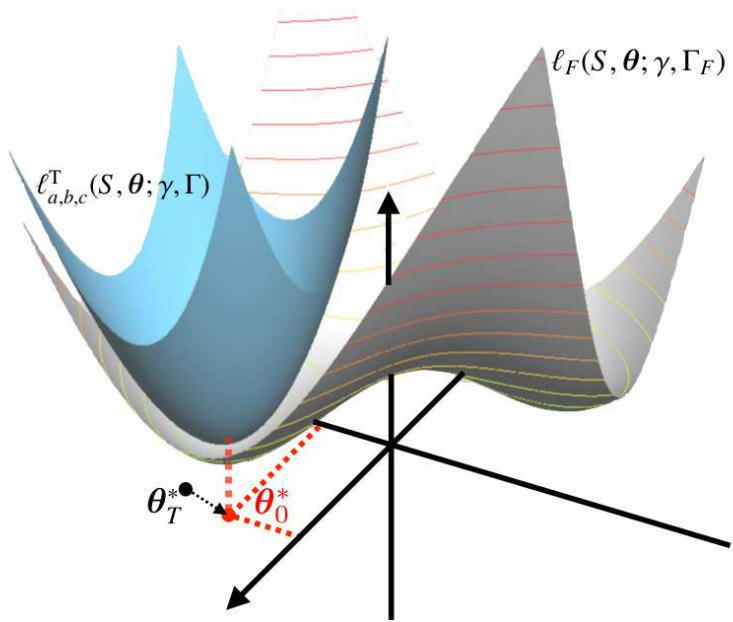} 
\caption{We study the variations of the Taylor loss $\taylorlossparams{a}{b}{c}$ (blue) which approximates
  a loss $\loss_F$ around a local minimum (grey), as a function of the mistakes
  done by entity resolution: all our results essentially show
  convergence of the optimal classifier learned from such mistakes,
  $\ve{\theta}^*_T$, to the optimal classifier learned from the ideal
  dataset, $\ve{\theta}^*_0$. This convergence holds for both the
  Taylor loss and the original loss as well (see text).\label{fig:approxloss}}
\end{figure}

The interest in focusing on Ridge-regularized Taylor losses is
three-folds: (i)  we
are trivially able to compute the optimum of such losses given the
constraints imposed on Theorem \ref{lemmaTaylor}; we thus get access
to a
fine-grained analysis of $\ve{\theta}^*_T$; (ii) our fine-grained bounds are
\textit{directly} relevant to the numeroux approaches to secure encrypted
federated learning that choose to directly optimize a Taylor loss
(often, the square loss Ridge-regularized)
\cite{eahEA,gascon2017privacy,gjpyPP,nwijbtPP}, or an approximation of
a loss function via a Taylor loss
\cite{Aono:2016:SSL:2857705.2857731,dhhiopstwPP}; (iii) provided
$\loss_F$ is continuous, which is a very weak assumption, our bounds
that analyze the convergence of classifiers or losses also
apply asymptotically to \textit{any} loss $\loss_F$ defined in Section
\ref{sec:defs}. This is summarized in Figure \ref{fig:approxloss}.
Our next objective is to compute the deviation between
$\ve{\theta}^*_0$ and $\ve{\theta}^*_T$ --- respectively
the ideal classifier (optimal on the ideal, perfectly
entity-resolved dataset $S$) and the classifier we learn after our
mistake-prone entity-resolved dataset $\hat{S}$ --- and then see how this impacts learning from
a variety of different standpoints. 

\section{Bounds on the relative deviation between optimal classifiers}
\label{sec:relerr}

To obtain our results, we shall need an assumption regarding
the data and learning problem parameters. We denote $\sigma(\mathcal{S})$
as the standard deviation of a discrete set $\mathcal{S} \subset
\mathbb{R}$ and define the \textbf{stretch} of vector $\ve{x}$ along
direction $\ve{w}\neq \ve{0}$ as: $\vstretch(\ve{x},\ve{w}) \defeq \|\ve{x}\|_2
|\cos(\ve{x}, \ve{w})|$. Let us denote for short
$v_{\mathrm{s}}(\ve{w})$ (resp. $\mu_{\mathrm{s}}(\ve{w})$) the
variance of stretches (resp. average of squared stretches) in $S$:
\begin{eqnarray}
\mu_{\mathrm{s}}(\ve{w}) & \defeq & \expect [\vstretch(\ve{x},\ve{w})^2]\:\:,\nonumber\\
v_{\mathrm{s}}(\ve{w}) & \defeq & \mu_{\mathrm{s}}(\ve{w}) 
- (\expect [\vstretch(\ve{x},\ve{w})])^2\:\:.
\end{eqnarray}
Notice that both $\mu_{\mathrm{s}}$ and $v_{\mathrm{s}}$ are invariant
to changes in the norm of $\ve{w}$. For 
$i\in \{-1, +1\}$, let
\begin{eqnarray}
U(i) \defeq \inf_{\ve{w}\neq \ve{0}} c\cdot \left\{
\begin{array}{rcl}
 (1-\epsilon)^2 v_{\mathrm{s}}(\ve{w}) & \mbox{if} & i = +1\\
(1+\epsilon)^2 \mu_{\mathrm{s}}(\ve{w})  +
  \tau^2 & \mbox{if} & i = -1
\end{array}
\right..
\end{eqnarray}
\begin{definition}\label{def:DMC}
We say that the \textbf{data-model calibration} assumption holds iff
the following two constraints are satisfied: (a)
(\textit{Maxnorm-variance regularization}) Ridge regularization parameters $\gamma, \Gamma$
  are chosen so that
\begin{eqnarray*}
\frac{ X_*^2}{\frac{1}{2} U(\mathrm{sign}(c))  + \gamma
  \lambda_1^\uparrow(\Gamma)} & \hspace{-0.2cm} \leq \hspace{-0.2cm}&
\frac{1}{2}\cdot \min\left\{\frac{1}{|F'(0)|}, \frac{1}{2|c|}\right\}\:,
\end{eqnarray*}
(b) (\textit{Minimal data size}): $m \geq 4 \xi$.
\end{definition}
Condition (a) imposes the Taylor loss to be sufficiently regularized
and explains why we state Theorem \ref{lemmaTaylor} with its condition
(iii). We remark that constraint (a) is all the less demanding as
$\epsilon,\tau$ are small, and $|U|$ is always $O(X_*^2)$, so the
constraint on regularization is roughly $\gamma
  \lambda_1^\uparrow(\Gamma) \geq u X_*^2$, for a constant $u\geq 0$ that
  can be very small when $F$ is convex ($c>0$, Theorem
  \ref{lemmaTaylor}). Condition (b) just postulates that $m$ is larger
  than a small constant, \textit{e.g.} $m\geq 12$ if we consider
  ineq. (\ref{eqXI}).

We now state our first result on how
$\ve{\theta}^*_T$ deviates from $\ve{\theta}^*_0$. 
\begin{theorem}\label{thAPPROX1}
Suppose $\PERM_*$ is $(\epsilon, \tau)$-accurate and the data-model
calibration assumption holds. Then we have:
\begin{eqnarray}
\frac{\|\ve{\theta}_T^* - \ve{\theta}_0^* \|_2}{\|\ve{\theta}_0^* \|_2} & \leq & \frac{\xi}{n} \cdot T^2 \cdot \left( 1+
\frac{\deltaperm}{\deltatheta}\right) \:\:.\label{eqthAPPROX00}
\end{eqnarray}
If, furthermore, $\PERM_*$ is $\alpha$-bounded, then 
$\|\ve{\theta}_T^* - \ve{\theta}_0^* \|_2 / \|\ve{\theta}_0^* \|_2\leq
C(m) \cdot (1+ (\deltaperm/\deltatheta))$, with 
\begin{eqnarray}
C(m) & \defeq & \left(\frac{\xi}{m}\right)^{\alpha}\:\:.\label{defCM}
\end{eqnarray}
\end{theorem}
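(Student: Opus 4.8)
The plan is to work directly with the closed form of $\ve{\theta}^*_t$. Since the Taylor loss $\taylorlossparams{a}{b}{c}(\hat{S}_t,\ve{\theta};\gamma,\Gamma)$ is convex with a positive definite Hessian --- this is exactly what Theorem~\ref{lemmaTaylor}(iii) together with the maxnorm--variance regularization of Definition~\ref{def:DMC}(a) buys us --- its minimiser is the unique root of the normal equations, so with $\ve{y}\defeq(y_1,\dots,y_m)^\top$,
\[
\ve{\theta}^*_t \;=\; A_t^{-1}\ve{b}_t,\qquad
A_t\;\defeq\;\frac{2c}{m}\hat{\X}_t\hat{\X}_t^\top+2\gamma\Gamma,\qquad
\ve{b}_t\;\defeq\;-\frac{b}{m}\hat{\X}_t\ve{y}\;=\;-\frac{b}{m}\sum_i y_i\hat{\ve{x}}_{ti}.
\]
First I would extract from Definition~\ref{def:DMC}(a) a \emph{uniform} spectral floor $\lambda_1^\uparrow(A_t)\ge\mu>0$ valid for every $t\in\{0,\dots,T\}$, together with the companion estimates $\|A_t^{-1}\|\,|c|X_*^2=O(1)$ and $\|A_t^{-1}\|\,|b|X_*=O(1)$. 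When $c>0$ this is immediate because $\frac{2c}{m}\hat{\X}_t\hat{\X}_t^\top$ is positive semidefinite, whence $\lambda_1^\uparrow(A_t)\ge 2\gamma\lambda_1^\uparrow(\Gamma)$; when $c<0$ (the unhinged regime) it is delicate, since $\frac{2c}{m}\hat{\X}_t\hat{\X}_t^\top$ is then negative semidefinite, and one must control $\lambda_{\max}(\hat{\X}_t\hat{\X}_t^\top)$ through $(\epsilon,\tau)$-accuracy --- which is precisely where the $U(-1)$ branch of Definition~\ref{def:DMC}(a), and hence the stretch quantities, enter.

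Next I would isolate the effect of one elementary permutation. Because $\PERM_t$ only swaps the shuffle parts of the two columns $\ua{t},\va{t}$, a short computation shows that $A_t-A_{t-1}$ equals $\frac{2c}{m}$ times a symmetric matrix of rank at most two, supported on the anchor--shuffle cross blocks, whose cross block is $\ve{a}_t\ve{s}_t^\top$ with $\ve{a}_t\defeq(\ve{x}_{\ua{t}}-\ve{x}_{\va{t}})_\anchor$ and $\ve{s}_t\defeq(\hat{\ve{x}}_{t-1,\va{t}}-\hat{\ve{x}}_{t-1,\ua{t}})_\shuffle=(\hat{\ve{x}}_{t,\ua{t}}-\hat{\ve{x}}_{t-1,\ua{t}})_\shuffle$; hence $\|A_t-A_{t-1}\|=\frac{2|c|}{m}\|\ve{a}_t\|_2\,\|\ve{s}_t\|_2$. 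Likewise $\ve{b}_t-\ve{b}_{t-1}=-\frac{b}{m}(y_{\ua{t}}-y_{\va{t}})$ times $\ve{s}_t$ placed on the shuffle coordinates (zero elsewhere), which \emph{vanishes unless $y_{\ua{t}}\neq y_{\va{t}}$}: only the $T_+=\rho T$ class-mismatch permutations displace $\ve{b}$. Feeding the unit vector along $\ve{a}_t$ (supported on anchor coordinates) into ineq.~(\ref{defACCURATE2}) with $\F=\anchor$ gives $\|\ve{a}_t\|_2\le\epsilon X_*+\tau=\xi X_*$; writing $\ve{s}_t=(\hat{\ve{x}}_{t,\ua{t}}-\ve{x}_{\ua{t}})_\shuffle-(\hat{\ve{x}}_{t-1,\ua{t}}-\ve{x}_{\ua{t}})_\shuffle$ and applying ineq.~(\ref{defACCURATE1}) at steps $t$ and $t-1$ gives $\|\ve{s}_t\|_2\le 2\xi X_*$ (each factor is also trivially $\le 2X_*$, which is what keeps the final dependence linear in $\xi$). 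So $\|A_t-A_{t-1}\|=O(|c|\xi X_*^2/m)$ and $\|\ve{b}_t-\ve{b}_{t-1}\|=O(|b|\xi X_*/m)$ on mismatch steps, $0$ otherwise. The resolvent identity $\ve{\theta}^*_t-\ve{\theta}^*_{t-1}=A_t^{-1}(\ve{b}_t-\ve{b}_{t-1})-A_t^{-1}(A_t-A_{t-1})\ve{\theta}^*_{t-1}$, combined with the conditioning bounds above, then yields a per-step estimate of the form $\|\ve{\theta}^*_t-\ve{\theta}^*_{t-1}\|_2\le \frac{c_1\xi}{m}\|\ve{\theta}^*_{t-1}\|_2+\frac{c_2\xi}{mX_*}\cdot\mathbf{1}[y_{\ua{t}}\neq y_{\va{t}}]$; Definition~\ref{def:DMC}(b) ($m\ge4\xi$) plus the smallness of $T$ relative to $m/\xi$ (either $\alpha$-boundedness, Definition~\ref{defSIZE}, or the folklore $T=O(m)$) let a short bootstrap keep $\|\ve{\theta}^*_t\|_2$ within a constant factor of $\|\ve{\theta}^*_0\|_2$ for all $t$.

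Finally I would sum the per-step estimates over $t=1,\dots,T$ by the triangle inequality, bound every $\|\ve{\theta}^*_{t-1}\|_2$ via the bootstrap, and count the $T_+=\rho T$ class-mismatch contributions separately; recognising $\deltatheta=\|\ve{\theta}^*_0\|_2X_*$ and $\deltaperm=\sqrt{\xi}\rho/4$, this collects into a bound of order $\frac{\xi}{m}T^2\|\ve{\theta}^*_0\|_2\,(1+\deltaperm/\deltatheta)$, and dividing by $\|\ve{\theta}^*_0\|_2$ is (\ref{eqthAPPROX00}). The $\alpha$-bounded refinement is then one line: $\alpha$-boundedness says $T\le(m/\xi)^{(1-\alpha)/2}$, i.e. $T^2\le(m/\xi)^{1-\alpha}$, hence $\frac{\xi}{m}T^2\le(\xi/m)^\alpha=C(m)$. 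The main obstacle is the first step --- securing the uniform conditioning $\lambda_1^\uparrow(A_t)\ge\mu>0$ along the whole permutation sequence, especially for $c<0$ where it genuinely requires the $U(-1)$ half of Definition~\ref{def:DMC}(a) --- and making the bootstrap airtight so that the accumulated drift in both $\hat{\X}_t\hat{\X}_t^\top$ and $\|\ve{\theta}^*_t\|_2$ stays controlled under the mild size hypotheses; everything else is bookkeeping of constants.
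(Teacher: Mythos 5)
Your plan tracks the paper's own proof closely. The paper also starts from the closed-form solution of the Taylor-loss normal equations (their Lemma~\ref{lem11}, $\ve{\theta}^*_t=\nu\matrice{v}_t\ve{\mu}_t$), analyses the per-permutation change of $\hat{\X}_t\hat{\X}_t^\top$ as a symmetric rank-two perturbation concentrated on the anchor--shuffle cross blocks (eq.~(\ref{eqXTXTM1})), and, as you do, observes that the mean-operator drift $\ve{\mu}_{t+1}-\ve{\mu}_t$ vanishes exactly when $y_{\ua{t}}=y_{\va{t}}$ --- this is what isolates the $T_+=\rho T$ class-mismatch steps. Where you invoke the resolvent identity and a bootstrap, the paper applies Sherman--Morrison twice to get the difference $\matrice{v}_t-\matrice{v}_{t-1}$ exactly (their Theorem~\ref{thmeq1}) and then unrolls the linear recursion $\ve{\theta}^*_{t+1}=(\matrice{i}_d+\Lambda_t)\ve{\theta}^*_t+\ve{\lambda}_t$ into a closed form (Theorem~\ref{thEXACT}); this makes the $(1+\xi/m)^T-1\le T^2\xi/m$ step, and hence both the $T^2$ factor and the $\alpha$-bounded refinement, transparent, which is precisely what your bootstrap must deliver. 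The spectral floor $\lambda_1^\uparrow(A_t)\ge\mu>0$ you flag as the main obstacle is exactly what their Lemmas~\ref{lemV2}, \ref{lemBOUNDCT} and \ref{corBOUNDCT} establish from data-model calibration, including the $U(-1)$ branch when $c<0$ as you correctly anticipate. So the organising ideas coincide; your version is arguably cleaner to state but trades the exact decomposition for an induction you would still have to close.

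The one place you would not land on the stated constants is the $(\epsilon,\tau)$-accuracy estimates. You bound $\|\ve{a}_t\|_2\le\xi X_*$ by feeding the unit vector \emph{along $\ve{a}_t$} into (\ref{defACCURATE2}), and $\|\ve{s}_t\|_2\le2\xi X_*$ by two applications of (\ref{defACCURATE1}) with the test direction along the error vector itself. The paper's Lemma~\ref{lemV} instead tests (\ref{defACCURATE2}) against the directions of the original observations $\ve{x}_{\ub{t}},\ve{x}_{\vb{t}}$ (resp. $\ve{x}_{\ua{t}},\ve{x}_{\va{t}}$) and folds the two bounds together, yielding $\|\ve{a}_t\|_2,\|\ve{b}_t\|_2\le\sqrt{2\xi}\,X_*$. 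The square-root dependence is what ultimately produces the $\sqrt{\xi}$ hidden in $\deltaperm=\sqrt{\xi}\rho/4$ and keeps the $\Lambda_t$ spectral bound at $O(\xi/m)$ (their Lemma~\ref{boundLAMBDAT}), which in turn gives the $\frac{\xi}{m}T^2$ prefactor of (\ref{eqthAPPROX00}). Your linear-in-$\xi$ estimates are tighter than the paper's for small $\xi$ but weaker for $\xi$ of order one (recall $\xi$ can be as large as $3$), so as written the bound you would prove has $\xi$ where the theorem needs $\sqrt{\xi}$ and does not match the stated form uniformly in $\xi$. This is recoverable --- replace your test directions by those of Lemma~\ref{lemV} --- but it is a genuine gap relative to the exact statement.
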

(proof in Appendix, Section \ref{proof_thAPPROX1})
Remember that the most stringent assumption
is that of $\alpha$-boundedness, so essentially as long as we have
access to more data that can be linked by keeping entity-resolution errors bounded enough
\textit{in size} (say, $T = o(\sqrt{m})$), the impact of
entity-resolution on the drift between optimal classifiers \textit{vanishes}
with $m$:
\begin{eqnarray}
C(m) & \rightarrow_{+\infty} & 0 \:\:, \forall \alpha \in (0, 1]\:\:.
\end{eqnarray}
We also remark that when $\deltaperm = 0$, which happens when there
are no entity-resolution mistakes between classes due to $\PERM_*$, then under the
assumptions of Theorem \ref{thAPPROX1}, we simply have
\begin{eqnarray}
\frac{\|\ve{\theta}_T^* - \ve{\theta}_0^* \|_2}{\|\ve{\theta}_0^*
  \|_2} & \leq & C(m) \:\:.\nonumber
\end{eqnarray}
We now quantify how the bounded drifts guaranteed by Theorem
\ref{thAPPROX1} translate to learning.
\section{Optimal large margin classification implies
  immunity to entity resolution mistakes} \label{sec-marg}
We 
show that under the conditions of Theorem \ref{thAPPROX1}, large
margin classification by the ideal classifier ($\ve{\theta}_0^* $)
survives the changes brought by $\PERM_*$ on learning, in the sense that
the related examples will also be given
the \textit{same, right} class by the classifier we learn, $\ve{\theta}_T^*$  --- the corresponding
margin, however, may vary. We formalize the definition now.
\begin{definition}\label{defIMMUNEMARGIN}
Fix $\kappa > 0$. We say that $\ve{\theta}_T^*$ is immune to entity resolution at margin $\kappa$ iff for any example $(\ve{x},
y)$, if $y (\ve{\theta}_0^*)^\top \ve{x} > \kappa$, then $y (\ve{\theta}_T^*)^\top \ve{x} > 0$.
\end{definition}
Hence, $(\ve{x},
y)$ receives the right class by both $\ve{\theta}_0^*$ and $\ve{\theta}_T^*$.
We can now formalize the immunity property.
\begin{theorem}\label{thIMMUNE}
Suppose $\PERM_*$ is $(\epsilon, \tau)$-accurate and
$\alpha$-bounded, and the data-model calibration assumption
holds. For any $\kappa > 0$, $\ve{\theta}_T^*$ is immune to
entity resolution at margin $\kappa$ if
\begin{eqnarray}
 m & > &  \xi \cdot \left(
  \frac{\deltatheta + \deltaperm}{\kappa} \right)^{\frac{1}{\alpha}}\:\:.\label{eqNXIKAPPA}
\end{eqnarray}
\end{theorem}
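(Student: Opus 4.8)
The plan is to read Theorem~\ref{thIMMUNE} as a one-line corollary of the relative-deviation bound of Theorem~\ref{thAPPROX1} combined with Cauchy--Schwarz. Fix an example $(\ve{x},y)$ with $y(\ve{\theta}_0^*)^\top\ve{x} > \kappa$ and $\|\ve{x}\|_2 \leq X_*$, and decompose $y(\ve{\theta}_T^*)^\top\ve{x} = y(\ve{\theta}_0^*)^\top\ve{x} + y(\ve{\theta}_T^* - \ve{\theta}_0^*)^\top\ve{x}$. The first term is $> \kappa$ by hypothesis. Since $|y| = 1$, Cauchy--Schwarz bounds the second term from below by $-\|\ve{\theta}_T^* - \ve{\theta}_0^*\|_2\,\|\ve{x}\|_2 \geq -X_*\,\|\ve{\theta}_T^* - \ve{\theta}_0^*\|_2$. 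Hence it is enough to prove $X_*\,\|\ve{\theta}_T^* - \ve{\theta}_0^*\|_2 < \kappa$ under the stated hypothesis on $m$.

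To control $\|\ve{\theta}_T^* - \ve{\theta}_0^*\|_2$ I would invoke the $\alpha$-bounded branch of Theorem~\ref{thAPPROX1}, which applies precisely because we assume $(\epsilon,\tau)$-accuracy, $\alpha$-boundedness, and the data-model calibration assumption: it gives $\|\ve{\theta}_T^* - \ve{\theta}_0^*\|_2 \leq C(m)\,\|\ve{\theta}_0^*\|_2\,(1 + \deltaperm/\deltatheta)$ with $C(m) = (\xi/m)^\alpha$. Multiplying through by $X_*$ and using $\deltatheta = \|\ve{\theta}_0^*\|_2 X_*$ collapses the right-hand side to $C(m)(\deltatheta + \deltaperm)$, so that $y(\ve{\theta}_T^*)^\top\ve{x} > \kappa - C(m)(\deltatheta + \deltaperm)$.

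It remains to rearrange the positivity condition $C(m)(\deltatheta+\deltaperm) \le \kappa$. With $C(m) = (\xi/m)^\alpha$ and $\alpha > 0$, raising to the power $1/\alpha$ shows this is equivalent to $\xi/m \le (\kappa/(\deltatheta+\deltaperm))^{1/\alpha}$, i.e.\ to $m \ge \xi\,((\deltatheta+\deltaperm)/\kappa)^{1/\alpha}$; the strict inequality assumed in ineq.~(\ref{eqNXIKAPPA}) then delivers $C(m)(\deltatheta+\deltaperm) < \kappa$, hence $y(\ve{\theta}_T^*)^\top\ve{x} > 0$, which is exactly the immunity claim of Definition~\ref{defIMMUNEMARGIN}. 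There is no genuinely hard step: all the work is hidden in Theorem~\ref{thAPPROX1}. The only points deserving care are (i) tracking strict versus non-strict inequalities so that the conclusion reads $> 0$ rather than $\ge 0$, and (ii) making explicit the use of $\|\ve{x}\|_2 \le X_*$ in the Cauchy--Schwarz step --- immediate from the definition of $X_*$ when $(\ve{x},y)$ is one of the training examples, and otherwise the standard bounded-domain assumption on $\mathcal{X}$.
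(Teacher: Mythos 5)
Your proof is correct and follows essentially the same route as the paper: Cauchy--Schwarz on the decomposition $y(\ve{\theta}_T^*)^\top\ve{x} = y(\ve{\theta}_0^*)^\top\ve{x} + y(\ve{\theta}_T^*-\ve{\theta}_0^*)^\top\ve{x}$, the $\alpha$-bounded bound from Theorem~\ref{thAPPROX1} rewritten as $X_*\|\ve{\theta}_T^*-\ve{\theta}_0^*\|_2 \leq C(m)(\deltatheta+\deltaperm)$, and rearrangement of $C(m)(\deltatheta+\deltaperm) < \kappa$ into the condition on $m$. Your explicit note about $\|\ve{x}\|_2 \leq X_*$ for arbitrary (not just training) examples is a point the paper's proof uses silently, so it is a reasonable caveat rather than a gap.
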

(proof in Appendix, Section \ref{app:proof-thIMMUNE})
Eq. (\ref{eqNXIKAPPA}) is interesting for the relationships between
$m$ (data), $\xi$ (permutation) and $\kappa$ (margin) to achieve
immunity. Consider a permutation $\PERM_*$ for which $\rho = 0$. Since the maximal optimal margin is bounded by $\deltatheta$ by Cauchy-Schwartz
inequality, Theorem \ref{thIMMUNE} says that picking $\kappa \defeq \delta \cdot \deltatheta$ for
$0<\delta<1$ brings immunity at margin $\kappa$ if  $\delta >  C(m)$
where $C(m)$ is defined in Theorem \ref{thAPPROX1}, so the lowest
possible margin from which immunity holds converges to zero at rate
$1/m^\alpha$. When it is not the case that $\rho = 0$ however, the
picture can be very different if $\deltatheta$ is very small compared
to $\deltaperm$.
\section{Taylor losses of optimal classifiers on
  the ideal data}\label{sec:diffloss}

In this Section, we essentially show that under the
assumptions of Theorem \ref{thIMMUNE}, it holds that (little-oh
wrt $m \rightarrow \infty$):
\begin{eqnarray}
\taylorlossparams{a}{b}{c} (S, \ve{\theta}^*_T; \gamma,
\Gamma) - \taylorlossparams{a}{b}{c} (S, \ve{\theta}^*_0; \gamma,
\Gamma) & = & o(1)\:\:,
\end{eqnarray} 
\textit{i.e.} our classifier converges to the same loss \textit{on the
ideal data} $S$ as the ideal classifier,
and the convergence is governed by
  $C(m)$, therefore displaying a rate
  proportional to $1/m^\alpha$.
\begin{theorem}\label{thDIFFLOSS}
Denote for short $\taylorloss(\ve{\theta}) \defeq \taylorlossparams{a}{b}{c} (S, \ve{\theta}; \gamma,
\Gamma)$, with $a, b, c$ as in eq. (\ref{optTaylor}). If $\PERM_*$ is $(\epsilon, \tau)$-accurate and
$\alpha$-bounded, and the data-model calibration assumption
holds, then:
\begin{eqnarray}
\taylorloss(\ve{\theta}^*_T) - \taylorloss(\ve{\theta}^*_0) & \leq &
C(m) \cdot (\deltatheta + \deltaperm) 
\cdot A\:\:,\label{eqDIFFLOSS1}
\end{eqnarray}
where $A \defeq |F'(0)|\deltaset + \left( 3\deltatheta +
  2\deltaperm\right)(|c|+ d
      \gamma\lambda^\downarrow_1(\Gamma) / X_*^2)$.
\end{theorem}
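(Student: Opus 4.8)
\medskip
\noindent\textbf{Proof plan.} The idea is to expand $\taylorloss(\ve{\theta}_T^*) - \taylorloss(\ve{\theta}_0^*)$ explicitly and then invoke Theorem~\ref{thAPPROX1}. Writing $\taylorloss(\ve{\theta}) = a + (b/m)\langle \ve{\theta}, \sum_i y_i\ve{x}_i\rangle + (c/m)\|\X^\top\ve{\theta}\|_2^2 + \gamma\,\ve{\theta}^\top\Gamma\ve{\theta}$ and using the polarisation identities $\|\ve{u}\|_2^2 - \|\ve{v}\|_2^2 = \langle \ve{u}+\ve{v}, \ve{u}-\ve{v}\rangle$ and $\ve{u}^\top\Gamma\ve{u} - \ve{v}^\top\Gamma\ve{v} = \langle \ve{u}-\ve{v}, \Gamma(\ve{u}+\ve{v})\rangle$ on the quadratic pieces, pure algebra gives
\begin{eqnarray*}
\taylorloss(\ve{\theta}_T^*) - \taylorloss(\ve{\theta}_0^*) & = & \frac bm\Big\langle \ve{\theta}_T^* - \ve{\theta}_0^*, \sum_i y_i\ve{x}_i\Big\rangle + \frac cm\big\langle \X^\top(\ve{\theta}_T^* - \ve{\theta}_0^*), \X^\top(\ve{\theta}_T^* + \ve{\theta}_0^*)\big\rangle\\
 & & + \gamma\big\langle \ve{\theta}_T^* - \ve{\theta}_0^*, \Gamma(\ve{\theta}_T^* + \ve{\theta}_0^*)\big\rangle\:.
\end{eqnarray*}
Bounding each inner product by Cauchy--Schwarz and factoring out $\|\ve{\theta}_T^* - \ve{\theta}_0^*\|_2$ reduces the theorem to three estimates: (i) of $\|\ve{\theta}_T^* - \ve{\theta}_0^*\|_2$; (ii) of $\|\ve{\theta}_T^* + \ve{\theta}_0^*\|_2$; and (iii) of $\|\sum_i y_i\ve{x}_i\|_2$, of $\|\X\X^\top\|_{\mathrm{op}}$ and of the action of $\Gamma$.

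For (i), the hypotheses here are exactly those of Theorem~\ref{thAPPROX1}, so multiplying its $\alpha$-bounded conclusion by $\|\ve{\theta}_0^*\|_2 = \deltatheta/X_*$ (the definition of $\deltatheta$) gives $\|\ve{\theta}_T^* - \ve{\theta}_0^*\|_2 \leq (C(m)/X_*)(\deltatheta + \deltaperm)$. For (ii), the triangle inequality together with (i) and with $C(m) = (\xi/m)^\alpha \leq 1$ --- which holds because condition~(b) of the data-model calibration assumption forces $\xi/m \leq 1/4$ --- gives $\|\ve{\theta}_T^*\|_2 \leq \|\ve{\theta}_0^*\|_2 + \|\ve{\theta}_T^* - \ve{\theta}_0^*\|_2 \leq (2\deltatheta + \deltaperm)/X_*$, hence after a harmless relaxation $\|\ve{\theta}_T^* + \ve{\theta}_0^*\|_2 \leq (3\deltatheta + 2\deltaperm)/X_*$, which is the factor appearing in $A$. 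For (iii): $\|\sum_i y_i\ve{x}_i\|_2 / m = X_*\deltaset$ by the definition of $\deltaset$ in~(\ref{defunkwn}); $\|\X\X^\top\|_{\mathrm{op}} = \lambda^\downarrow_1(\X\X^\top) \leq \mathrm{tr}(\X\X^\top) = \sum_i\|\ve{x}_i\|_2^2 \leq mX_*^2$; and the $\Gamma$-term is handled through $\lambda^\downarrow_1(\Gamma)$, the dimension factor $d$ in $A$ arising from the coarse estimate one makes on $\gamma\langle\ve{\theta}_T^* - \ve{\theta}_0^*, \Gamma(\ve{\theta}_T^* + \ve{\theta}_0^*)\rangle$. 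Recalling $b = F'(0)$ from~(\ref{optTaylor}) and combining, the three inner products contribute respectively $|F'(0)|\deltaset$, $|c|(3\deltatheta + 2\deltaperm)$ and $(d\gamma\lambda^\downarrow_1(\Gamma)/X_*^2)(3\deltatheta + 2\deltaperm)$, each times $C(m)(\deltatheta + \deltaperm)$; their sum is exactly $C(m)(\deltatheta + \deltaperm)\,A$.

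The expansion above needs no convexity --- it is algebra --- but it is worth noting that $\taylorloss$ is a convex quadratic (guaranteed by condition~(a) of the data-model calibration assumption and condition~(iii) of Theorem~\ref{lemmaTaylor}: once $\gamma\lambda_1^\uparrow(\Gamma)$ dominates $|c|X_*^2$, the Hessian $(2c/m)\X\X^\top + 2\gamma\Gamma$ is positive definite for every $\hat{S}_t$, $S$ included), so that $\ve{\theta}_0^*$ in~(\ref{optTaylor}) is a genuine minimiser and the bounded quantity is in fact non-negative. There is consequently no conceptual obstacle here: the only delicate part is the constant bookkeeping in estimates~(ii)--(iii), notably using condition~(b) to secure $C(m) \leq 1$ before applying the crude triangle-inequality bound on $\|\ve{\theta}_T^*\|_2$, and choosing the coarse norm estimates so that the three contributions land precisely on the coefficients $|c|$ and $d\gamma\lambda^\downarrow_1(\Gamma)/X_*^2$ of $A$.
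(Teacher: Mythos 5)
Your proof is correct and follows essentially the same route as the paper's: expand $\taylorloss(\ve{\theta}_T^*) - \taylorloss(\ve{\theta}_0^*)$ into a linear term, a data-quadratic term, and a regulariser term; bound each by Cauchy--Schwarz; then insert Theorem~\ref{thAPPROX1} for $\|\ve{\theta}_T^*-\ve{\theta}_0^*\|_2$ and a triangle-inequality bound for $\|\ve{\theta}_T^*+\ve{\theta}_0^*\|_2$, using $C(m)\leq 1$ (condition~(b)) to relax the latter to $(3\deltatheta+2\deltaperm)/X_*$. The only cosmetic differences are your use of the polarisation identity for a cleaner algebraic split and bounding the quadratic-in-data term via $\|\X\X^\top\|_{\mathrm{op}}\leq\mathrm{tr}(\X\X^\top)\leq mX_*^2$ rather than the paper's per-summand estimate; both give $|c|X_*^2$ and the same final constants, including the (deliberately coarse) $d$ in front of $\gamma\lambda_1^\downarrow(\Gamma)$.
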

(proof in Appendix, Section \ref{app:proof-thDIFFLOSS}) We remark a
difference with Theorem \ref{thIMMUNE}: the bound also depends on
$\deltaset$, and so on the norm of a sufficient statistics for the
class, the mean operator \cite{pnrcAN}.

\section{Generalization abilities}\label{sec-rade}
\newcommand{\genB}{Q}

Suppose that ideal sample $S$ is obtained
i.i.d. from some unknown distribution $\mathcal{D}$, before it is
"split" between \FDP~and \LDP, and then reconstructed to form our
training sample $\hat{S}$. What is the
generalization ability of classifier $\ve{\theta}_T^*$, learned on $\hat{S}$ ? This question
is non-trivial because it entails the impact of entity
resolution on generalization, and not just on training, that is, we
want to upperbound $\Pr_{(\ve{x}, y)\sim \mathcal{D}} [y (\ve{\theta}_T^*)^\top \ve{x}
\leq 0 ]$ with high probability given that the data we have
access to may not exactly reflect sampling from $\mathcal{D}$, or even
existing examples. In
essence, we show that provided the size of $\PERM_*$ is further
bounded (say, $T = o(m^{1/4})$), the guarantees on the rate of convergence for
generalization of $\ve{\theta}^*_T$ are
of the \textit{same order} as the one for $\ve{\theta}^*_0$. To get
this result, we first note that Ridge regularization implies that the norm
of $\ve{\theta}^*_0$ is bounded, say as $\|\ve{\theta}^*_0\|_2 \leq
\theta_*$ for some $\theta_*$. Let us then define $R^*_m \defeq
X_*\theta_* / \sqrt{m}$, 
which is an upperbound for the empirical Rademacher complexity of
$\ve{\theta}^*_0$ \cite{kstOT} (Theorem 3). It comes that with probability
$\geq 1-\delta$, we shall have $\Pr_{(\ve{x}, y)\sim \mathcal{D}} \left[y (\ve{\theta}^*_0)^\top \ve{x}
\leq 0 \right] \leq \genB$ with 
\begin{eqnarray}
\genB \defeq \taylorlossparams{F(0)}{F'(0)}{c} (S, \ve{\theta}^*_0; \gamma,
\Gamma) + 2LR^*_m + \sqrt{\frac{\ln(2/\delta)}{2m}}\label{defBM1}
\end{eqnarray}
\cite{bmRA} (Theorem 7), where $L$ is the Lipschitz constant of the
Ridge-regularized Taylor loss. The question we answer now is how we
can bound $\Pr_{(\ve{x}, y)\sim \mathcal{D}} [y (\ve{\theta}_T^*)^\top \ve{x}
\leq 0 ]$ as a function of $\genB$, which, we recall, quantifies
the generalization abilities of $\ve{\theta}^*_0$. 
\begin{theorem}\label{thGENTHETA}
With probability at least $1-\delta$ over the sampling of $S$
according to $\mathcal{D}^m$, as long as permutation $\PERM_*$ that creates $\hat{S}$ from $S$ is $(\epsilon, \tau)$-accurate and
$\alpha$-bounded and the data-model calibration assumption
holds, it holds that
\begin{eqnarray}
\lefteqn{\Pr_{(\ve{x}, y)\sim \mathcal{D}} \left[y (\ve{\theta}^*_T)^\top \ve{x}
\leq 0 \right]}\nonumber\\
 & \leq & \genB + C(m) \cdot (\deltatheta + \deltaperm) 
\cdot\left( A + \frac{2L}{\sqrt{m}}\right)\:\:,\label{eqGENER1}
\end{eqnarray}
where $A$ is defined in Theorem \ref{thDIFFLOSS}.
\end{theorem}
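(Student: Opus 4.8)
\noindent The plan is to re-run, now anchored at $\ve{\theta}_T^*$ rather than at $\ve{\theta}_0^*$, the \emph{same} uniform Rademacher-complexity argument that produced $\genB$, and then to pay for this substitution with the two deviation bounds already established: Theorem \ref{thAPPROX1} for the drift of the optimal classifier and Theorem \ref{thDIFFLOSS} for the drift of the Ridge-regularized Taylor loss evaluated on the ideal sample $S$. Everything else is bookkeeping with the non-negative regularizer.

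\noindent \emph{Step 1 (re-anchor the uniform bound).} Under the hypotheses, Theorem \ref{thAPPROX1} gives the \emph{deterministic} bound $\|\ve{\theta}_T^* - \ve{\theta}_0^*\|_2 \leq C(m)\,(1 + \deltaperm/\deltatheta)\,\|\ve{\theta}_0^*\|_2$; multiplying by $X_*$ and using $\deltatheta = \|\ve{\theta}_0^*\|_2 X_*$ this reads $\|\ve{\theta}_T^* - \ve{\theta}_0^*\|_2\, X_* \leq C(m)\,(\deltatheta + \deltaperm)$. Hence $\ve{\theta}_T^*$ lies in the $\ell_2$-ball of radius $\theta_* + \|\ve{\theta}_T^* - \ve{\theta}_0^*\|_2$, which exceeds $\theta_*$ only by this controlled amount. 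Applying the bound of \cite{bmRA} (Theorem 7), with the empirical Rademacher-complexity estimate of \cite{kstOT} (Theorem 3) for that ball and with $L$ taken as a Lipschitz constant valid on the ball enclosing both $\ve{\theta}_0^*$ and $\ve{\theta}_T^*$ (this being the $L$ that then also occurs in $\genB$), yields, with probability at least $1-\delta$ over $S\sim\mathcal{D}^m$,
\[
\Pr_{(\ve{x},y)\sim\mathcal{D}}\!\left[y(\ve{\theta}_T^*)^\top\ve{x}\leq 0\right] \;\leq\; \taylorlossparams{F(0)}{F'(0)}{c}(S,\ve{\theta}_T^*;\gamma,\Gamma) \;+\; \frac{2L\,(\theta_* + \|\ve{\theta}_T^* - \ve{\theta}_0^*\|_2)\, X_*}{\sqrt{m}} \;+\; \sqrt{\frac{\ln(2/\delta)}{2m}}\:,
\]
exactly as in the derivation of $\genB$ (and, as there, the empirical Taylor-loss average is freely upper-bounded by the full Ridge-regularized Taylor loss, since the regularizer is non-negative). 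Since $\theta_* X_*/\sqrt{m} = R^*_m$, the complexity term is $2L R^*_m + 2L\|\ve{\theta}_T^* - \ve{\theta}_0^*\|_2 X_*/\sqrt{m} \leq 2L R^*_m + C(m)(\deltatheta+\deltaperm)\cdot 2L/\sqrt{m}$ by the displayed deterministic bound.

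\noindent \emph{Step 2 (pay with the loss drift and assemble).} For the first term on the right, Theorem \ref{thDIFFLOSS} gives directly $\taylorlossparams{F(0)}{F'(0)}{c}(S,\ve{\theta}_T^*;\gamma,\Gamma) \leq \taylorlossparams{F(0)}{F'(0)}{c}(S,\ve{\theta}_0^*;\gamma,\Gamma) + C(m)(\deltatheta+\deltaperm)\,A$. Substituting this and the complexity estimate of Step 1 into the displayed inequality, and recalling $\genB = \taylorlossparams{F(0)}{F'(0)}{c}(S,\ve{\theta}_0^*;\gamma,\Gamma) + 2L R^*_m + \sqrt{\ln(2/\delta)/(2m)}$, the right-hand side collapses to $\genB + C(m)(\deltatheta+\deltaperm)\,(A + 2L/\sqrt{m})$, which is exactly inequality (\ref{eqGENER1}).

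\noindent The step I expect to be delicate is Step 1: legitimizing the uniform-convergence bound when the radius of the hypothesis class containing $\ve{\theta}_T^*$ is itself sample-dependent (through $C(m)$, $\xi$, $X_*$ and $\|\ve{\theta}_0^*\|_2$), and making the Lipschitz constant there agree with the one appearing in $\genB$. The resolution is that Theorem \ref{thAPPROX1} controls that radius \emph{deterministically} given the structural assumptions, so one may run the Rademacher bound over a fixed enclosing ball --- a data-independent radius capping $\|\ve{\theta}_T^*\|_2$ is available from the Ridge lower bound on $\taylorlossparams{F(0)}{F'(0)}{c}(\hat{S}_T,\cdot;\gamma,\Gamma)$ at its minimizer --- and then re-tighten back to the $\theta_* + \|\ve{\theta}_T^*-\ve{\theta}_0^*\|_2$ form using the deterministic estimate; the rest is routine.
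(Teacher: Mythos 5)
Your proof is correct and takes essentially the same route as the paper: apply the generalization bound of Bartlett--Mendelson at $\ve{\theta}_T^*$, control the Rademacher-complexity term through $\|\ve{\theta}_T^*\|_2 \leq \|\ve{\theta}_0^*\|_2 + \|\ve{\theta}_T^*-\ve{\theta}_0^*\|_2$ and Theorem~\ref{thAPPROX1}, control the empirical Taylor loss on $S$ through Theorem~\ref{thDIFFLOSS}, and combine. Your extra remark about the sample-dependent radius (resolved by the deterministic nature of the Theorem~\ref{thAPPROX1} bound under the structural assumptions) is a technical subtlety the paper glosses over, but it does not change the argument.
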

(proof in Appendix, Section \ref{app:proof-thGENTHETA}) Hence, with
high probability, entity resolution
affects generalization \textit{only} through the additional penalty to
$\genB$ in ineq. (\ref{eqGENER1}), which is factored by $C(m)$. In
consequence, if $\PERM_*$ is "small" enough so that $\alpha \geq 1/2$,
then we keep the rate of order $O(1/\sqrt{m})$ of the entity-resolution-free
case of ineq. (\ref{defBM1}). 

\renewcommand\thesubfigure{(\alph{subfigure})}

\section{Experiments}
\label{sec:experiments}

\subsection{Setting}\label{sub:sett}

We consider
the setting in which peers $\FDP$ and $\LDP$ have a small set of their
features which is present in both peers, features that we
call \textit{shared features} and that are used for entity resolution. This setting is realistic
considering \textit{e.g.} that many businesses or government bodies would share basic
information about their customers (such as gender, postal code, age,
contact number, etc.) \cite{pnhcFL}. We then put noise in those shared features as a
slider to
vary the hardness of the task. Notice the challenging aspect of the
task: entity resolution is computed from a relatively small set of eventually
noisy features, after which learning is carried out. This typically
corresponds to the example of Figure \ref{fig:entity-resolution}. We
adopt a simple noise injection process, inspired by thorough analyses in the
area \cite{cpAS}. Let $p$ be the noise probability. Each shared value is
replaced with probability $p$ by a \textit{neighbor} in the
feature's domain, \textit{i.e.} if we assume a total order in the feature
values (which is available for most: binary, real or ordinal), we replace with probability $p$ the feature value by a
\textit{neighbor} in the order: if the feature is binary, then it is
replaced by the other value; otherwise, we pick uniformly at random a
value in the set of neighboring $\pm u$ indexes, clamped to the observed
set of values --- \textit{i.e.} we do not generate unobserved feature
values. If there are more than 20 recorded values for the feature,
then $u=10$; otherwise, $u=2$. Such a neighbor noise process follows
the observed pattern that errors in the real world often generate
neighboring values, for a neighbor relationship that can belong to the
phonetic, typographic, OCR or just keyboard spaces \cite{cpAS}. We
measure the similarity of observed shared vectors using the cosine
similarity, which is a convenient similarity measure for token-based
entity-resolution approaches (the other leading approaches are called
edit based \cite{kssRL}). Given one observation from
$\FDP$, $\ve{x}_{\FDP}$, and one from $\LDP$,
$\ve{x}_{\LDP}$, the cosine similarity between the subvectors of
shared values is denoted $\mathrm{cosSim}(\mathrm{shared}(\ve{x}_{\FDP}),
   \mathrm{shared}(\ve{x}_{\LDP}))$. 

To make sure that there is no difference between the learning
algorithm used after entity-resolution, we always use AdaBoost
\cite{ssIBj}, run for 1000 iterations to learn a linear classifier. We use AdaBoost because of its
guaranteed convergence rates under a weak assumption which fits well
to our setting. We also notice that AdaBoost provably minimizes the
exponential loss, which fits to our theory. However, it is not
possible to find the optimal classifier $\ve{\theta}_0^*$ in closed
form for this loss. Thus, we shall rather \textit{learn} it from
ideally entity-resolved data. This is what we discuss in the following Section.

\subsection{Algorithms and baselines for entity-resolution}

\begin{algorithm}[tb]
   \caption{\greedy($\mathcal{S}$)}
   \label{alg:gree}
   \noindent\hspace{0.5cm} {\bfseries Input:} set $ [m]^2 \supset \mathcal{S} \defeq
   \{(i_{\FDP},i_{\LDP})\}$, where $i_{\FDP}$ (resp. $i_{\LDP}$)
   belongs to indexes of $\FDP$ (resp. $\LDP$)\;
\noindent\hspace{0.5cm} $\mathcal{S}_{\mbox{\tiny{g}}} \leftarrow \emptyset$\;
\noindent\hspace{0.5cm} \textbf{repeat}\\
\noindent\hspace{0.9cm} let $(i^*_{\FDP},i^*_{\LDP}) \defeq \arg\max_{(i, i') \in
  \mathcal{S}} \mathrm{cosSim}(\mathrm{shared}({\ve{x}_{\FDP}}_i),
   \mathrm{shared}({\ve{x}_{\LDP}}_{i'}))$\;
\noindent\hspace{0.9cm} $\mathcal{S}_{\mbox{\tiny{g}}} \leftarrow
\mathcal{S}_{\mbox{\tiny{g}}} \cup \{(i^*_{\FDP},i^*_{\LDP})\}$\;
\noindent\hspace{0.9cm} delete $i^*_{\FDP}$ from $\mathcal{S}$\;
\noindent\hspace{0.9cm} delete $i^*_{\LDP}$ from $\mathcal{S}$\;
\noindent\hspace{0.5cm} \textbf{until} {$\mathcal{S} = \emptyset$}\;
  \noindent\hspace{0.5cm} \textbf{return} $\mathcal{S}_{\mbox{\tiny{g}}}$
\end{algorithm}
\noindent \textbf{The max-weighted matching problem and the \greedy~routine} --- there is a particularly interesting
routine that we call \greedy, which delivers a fast approximation to a
problem that generalizes ours for entity-resolution: maximum weighted
matching for balanced bipartite graphs \cite{aAS}. The instance of this problem
is a balanced complete bipartite graph with non-negative weights, a feasible
solution is a subset of edges covering all vertices, in which each
vertex appears once. The criterion to be maximized is the sum of
weights. If we take as the total (sum-of) cosine similarity the criterion to be
maximized for entity-resolution and note that maximizing the
criterion for the cosine similarities is equivalent to maximizing the
same criterion for (1+cosine similarity)\textit{es}, which is non-negative, then
\greedy, provided in Algorithm \ref{alg:gree}, provides a fast
approximation to entity-resolution, namely $O(|\mathcal{S}|^2 \log
|\mathcal{S}|)$ for a non-optimized implementation. Let us denote
$C^*$ the optimal value of the total cosine similarity. There exists a
long-known method, the Hungarian algorithm, that provably achieves the optimum \cite{kTH}, yet
it requires a significantly more sophisticated implementation to even
reach $O(|\mathcal{S}|^3)$ time complexity. We stick to the greedy
algorithm \greedy~not just for computational reasons and its
straightforwardness of implementation: we in fact do not seek the
optimal solution to entity-resolution but rather wish to find one that
is going to prove good for learning. Whether we can win from both
standpoints --- having a good approximation of the entity resolution
criterion while having the best possible solution for learning ---
shall be discussed as well, and we can
already remark that
\greedy~provides a guaranteed very good constant approximation to $C^*$.
\begin{lemma}\cite{aAS} (Theorem 4)
Let us denote $C_{\mbox{{\tiny \greedy}}}$ as the total entity-resolution similarity retrieved
by \greedy. Then $C_{\mbox{{\tiny \greedy}}} \geq C^*/2$.
\end{lemma}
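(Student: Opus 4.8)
The plan is to prove this by the classical charging argument for greedy maximum-weight matching, which delivers exactly the factor $2$. First I would fix notation: since maximizing $\sum \mathrm{cosSim}$ over matchings is equivalent to maximizing $\sum (1+\mathrm{cosSim})$ and the latter weights are non-negative, we may assume all edge weights $w(\cdot)\geq 0$ without loss of generality. Let $e_1, e_2, \dots$ denote the edges picked by \greedy\ in the order in which they are selected, so that $C_{\mbox{{\tiny \greedy}}} = \sum_i w(e_i)$, and let $O$ be any optimal matching, with $\sum_{e\in O} w(e) = C^*$.

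Next I would record the two structural facts that drive the bound. (a) \greedy\ runs until $\mathcal{S}=\emptyset$, and an edge leaves $\mathcal{S}$ exactly at the iteration at which one of its two endpoints is selected; hence every edge $e\in O$ is deleted at a well-defined iteration $t(e)$, and at that iteration $e$ shares an endpoint with $e_{t(e)}$. (b) Since $O$ is a matching, at most one edge of $O$ is incident to each of the two endpoints of $e_{t(e)}$, so at most two edges of $O$ have the same value $t(\cdot)=i$.

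I would then charge $w(e)$ to $e_{t(e)}$ for every $e\in O$. The key inequality is that any $e\in O$ with $t(e)=i$ is still present in $\mathcal{S}$ immediately before iteration $i$ (it is deleted \emph{at} iteration $i$), and \greedy\ always picks a globally maximum-weight available edge, so $w(e)\leq w(e_i)$. Combining with (b), the total weight charged to $e_i$ is at most $2\, w(e_i)$; and by (a) every edge of $O$ is charged to some $e_i$. Therefore $C^* = \sum_{e\in O} w(e) \leq \sum_i 2\, w(e_i) = 2\, C_{\mbox{{\tiny \greedy}}}$, i.e. $C_{\mbox{{\tiny \greedy}}}\geq C^*/2$.

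There is no serious obstacle here; the only point requiring care is the bookkeeping in (a)--(b): one must verify that an $O$-edge charged to $e_i$ was genuinely \emph{available} at iteration $i$, so that the max-weight selection property of \greedy\ applies to it, and that the two endpoints of $e_i$ can absorb at most two such edges precisely because $O$ is a matching. Note also that the argument never uses balancedness or completeness of the bipartite graph --- only that \greedy\ selects a maximum-weight remaining edge at each step and that $O$ is a matching --- so it applies verbatim to the entity-resolution instances handled by Algorithm \ref{alg:gree}.
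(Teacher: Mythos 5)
The paper does not actually prove this lemma: it is cited as Theorem~4 of \cite{aAS}, and no proof appears in the Appendix. So there is no in-paper argument to compare against; your charging argument is a self-contained substitute, and it is correct. The bookkeeping is the standard one: every edge $e$ of the optimal matching $O$ leaves $\mathcal{S}$ at a unique iteration $t(e)$, at which moment $e$ is still available and therefore $w(e)\leq w(e_{t(e)})$ by the max-weight selection rule; since $O$ is a matching, at most two $O$-edges can be deleted at any one iteration (one per endpoint of $e_{t(e)}$); summing gives $C^*\leq 2C_{\mbox{\tiny \greedy}}$, and you correctly observe that neither completeness nor balancedness of the bipartite graph is needed.

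The one step you should phrase more carefully is the opening ``without loss of generality.'' Replacing $\mathrm{cosSim}$ by $1+\mathrm{cosSim}$ makes every perfect-matching objective shift by the same additive constant $m$, so it preserves the \emph{argmax} (which is all the paper itself claims when it introduces the shift), but it does \emph{not} preserve a multiplicative approximation ratio: a factor-$\tfrac12$ guarantee on the shifted objective $\sum(1+\mathrm{cosSim})$ only translates into $C_{\mbox{\tiny \greedy}}\geq (C^*-m)/2$ on the raw cosine sum. The clean way to state the lemma is therefore to take $C^*$ and $C_{\mbox{\tiny \greedy}}$ to denote totals of the non-negative weights $1+\mathrm{cosSim}$ (or, equivalently, to assume non-negative similarities), which is evidently what \cite{aAS} and the paper intend. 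Your proof then goes through exactly as written; it just should not be advertised as a no-cost normalization.
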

It is also believed that the actual worst-case approximation provided
by \greedy~is even better \cite{dmsTC}. In our experiments, we test
and compare several algorithms for entity-resolution in various environments.\\

\noindent \textbf{$\LDP$ does not use classes: \greedyER} ---  In
this case, peer $\LDP$ does not have the knowledge of classes and does
not use the knowledge of classes for entity resolution: linking
proceeds from a straightforward use of routine \greedy, as explained
in the boxed algorithm below, where ${\mathcal{S}} \defeq \{(i, i'), i \in [m], i' \in [m]\}$.
\begin{center}
\colorbox{gray!10}{\fbox{%
    \parbox{0.95\textwidth}{Algorithm \greedyER(${\mathcal{S}}$) --- Let
\begin{eqnarray}
\mathcal{S}_{\mbox{\tiny{g}}} & \leftarrow & \greedy(\mathcal{S})\:\:.
\end{eqnarray}
Link all data following $\mathcal{S}_{\mbox{\tiny{g}}}$, return $\hat{S}$.
}}
}
\end{center}

\noindent \textbf{$\LDP$ has classes: \greedyERPC} --- This approach can be implemented
when both $\FDP$ and $\LDP$ have the knowledge of the true class for
their respective observations, which is the setting of \cite{pnhcFL}. The algorithm simply
consists in running \greedy~over the positive class only, then
\greedy~over the negative class only and finally linking the datasets
according to the outputs of \greedy. More formally, if we denote for
short $S_{\FDP} \defeq \{({\ve{x}_{\FDP}}_i , {y_{\FDP}}_i) : i = 1,
2, ..., m\}$ the sample from $\FDP$, and $S_{\LDP} \defeq \{({\ve{x}_{\LDP}}_i , {y_{\LDP}}_i) : i = 1,
2, ..., m\}$ the sample from $\LDP$, then the algorithm can be
summarized as follows, with ${\mathcal{S}}^+ \defeq \{(i, i') : {y_{\FDP}}_i =
      {y_{\LDP}}_{i'} = +1\}$ and ${\mathcal{S}}^- \defeq \{(i, i') : {y_{\FDP}}_i =
      {y_{\LDP}}_{i'} = -1\}$.
\begin{center}
\colorbox{gray!10}{\fbox{%
    \parbox{0.95\textwidth}{Algorithm \greedyERPC(${\mathcal{S}}^+, {\mathcal{S}}^-$) --- Let
\begin{eqnarray}
\mathcal{S}^+_{\mbox{\tiny{g}}}
      & \leftarrow & \greedy(\mathcal{S}^+)\:\:,\\
\mathcal{S}^-_{\mbox{\tiny{g}}}
      & \leftarrow & \greedy(\mathcal{S}^-)\:\:.
\end{eqnarray}
Link the datasets following
$\mathcal{S}^+_{\mbox{\tiny{g}}}$ and
$\mathcal{S}^-_{\mbox{\tiny{g}}}$, return $\hat{S}$.
}}
}
\end{center}

\noindent \textbf{$\LDP$ does not have classes but \textit{learns} them: \greedyERPCZ} ---  In
this case, peer $\LDP$ does not have the knowledge of classes but
\textit{computes} classes using a simple four-steps practical approach relying on
shared features: (i) we run \greedy~as in \greedyER~and then
\textit{discard} couples in $\mathcal{S}_{\mbox{\tiny{g}}}$ whose
similarity is below the median similarity. We then assign a label to
the observations of $\LDP$ still appearing in
$\mathcal{S}_{\mbox{\tiny{g}}}$, by using the correspondence with
$\FDP$ in $\mathcal{S}_{\mbox{\tiny{g}}}$. To complete labelling in
$\LDP$, (ii) we use a simple $k$-NN algorithm inside $\LDP$ which
gives a label to the remaining observations based on the labels
computed from step (i) only. At this
stage, all observations in $\LDP$ are given a class. We then (iii) run
$\greedyERPC$ using the predicted classes for $\LDP$. 

\textit{Notice that we
have no guarantee that the proportion of classes in $\LDP$ will be the
same as in $\FDP$}. For that reason, we end up in general with a subset of
observations in $\FDP$ and $\LDP$ being not linked. To complete
linkage, (iv) we
just run \greedyER~in the subset of remaining observations. The
overall algorithm is sketched in the box below.

\begin{center}
\colorbox{gray!10}{\fbox{%
    \parbox{0.95\textwidth}{Algorithm \greedyERPCZ --- Let
      ${\mathcal{S}} \defeq \{(i, i'), i \in [m], i' \in [m]\}$, and
\begin{eqnarray}
\mathcal{S}_{\mbox{\tiny{g}}} & \leftarrow & \greedy(\mathcal{S})\:\:.
\end{eqnarray}
Let $\varsigma$ be the median similarity in
$\mathcal{S}_{\mbox{\tiny{g}}}$. Discard from
$\mathcal{S}_{\mbox{\tiny{g}}}$ all couples with similarity below
$\varsigma$ and affect classes to observations of $\LDP$ using the
remaining couples:
\begin{eqnarray}
\forall (i, i') \in \mathcal{S}_{\mbox{\tiny{g}}}, {y_{\LDP}}_{i'} &
\leftarrow & {y_{\FDP}}_{i}\:\:.
\end{eqnarray}
Let $S^{\emptyset}_{\LDP}$ denote the subset of observations of $\LDP$
without a label, and $S^{c}_{\LDP}$ denote the subset of observations of $\LDP$
with a label (the total set of observations of $\LDP$ is
$S^{\emptyset}_{\LDP} \cup S^{c}_{\LDP}$). Use a $k$-NN rule to give a
label to observations from $S^{\emptyset}_{\LDP}$:
\begin{eqnarray}
\forall {\ve{x}_{\LDP}}_{i'} \in S^{\emptyset}_{\LDP}, {y_{\LDP}}_{i'} &
\leftarrow & k\mbox{-NN}(S^{c}_{\LDP})\:\:.
\end{eqnarray}
Let
      ${\mathcal{S}}^+ \defeq \{(i, i') : {y_{\FDP}}_i =
      {y_{\LDP}}_{i'} = +1\}$ and ${\mathcal{S}}^- \defeq \{(i, i') : {y_{\FDP}}_i =
      {y_{\LDP}}_{i'} = -1\}$. 
Run \greedyERPC(${\mathcal{S}}^+, {\mathcal{S}}^-$). Let $\mathcal{S}_{\FDP} \subseteq [m]$ and
$\mathcal{S}_{\LDP} \subseteq [m]$ denote (indexes of) the subsets of
observations not linked in $\FDP$ and $\LDP$ (we have $|\mathcal{S}_{\FDP}|=|\mathcal{S}_{\LDP}|$). Run
\greedyER($\mathcal{S}_{\FDP} \times \mathcal{S}_{\LDP}$),  link all
data, return $\hat{S}$.
}}
}
\end{center}

\noindent \textbf{$\LDP$ has \textit{noisy} classes: \greedyERPCN} ---  This corresponds to
running \greedyERPC~in an environment where $\FDP$ has the knowledge of the true class but $\LDP$ has a
knowledge of \textit{noisy} classes. To conform with the vertical
partition setting, we simulate permutation noise over classes in $\LDP$ by the following
process: starting from setting \greedyERPC~/ true classes, given a proportion $p'$, we permute a random positive class
and a random negative class for $[m p']$ iterations in $\LDP$, where $[.]$ gives
integer rounding. We then run \greedyERPC~as in the noise-free
setting. To distinguish with the noise-free environment, we call this
approach \greedyERPCN($p'$). We consider $p' \in \{0.01,
0.02, 0.03, 0.04, 0.05, 0.1, 0.15, 0.2\}$.

\noindent \textbf{"\ideal"} --- because we use simulated domains, we are
able to compute the performances of the ideal entity-resolution
algorithm which essentially returns $S$ instead of $\hat{S}$, and so
$\PERM_* = \mathrm{I}_m$ (the identity matrix) in
eq. (\ref{defXHAT}). 
\begin{center}
\colorbox{gray!10}{\fbox{%
    \parbox{0.95\textwidth}{Algorithm "\ideal" --- return $S$.
}}
}
\end{center}
This gives our "optimal" baseline to compare
against the practical approaches to entity-resolution developed
thereafter. Remark the quotes: we are in fact running AdaBoost to
learn the classifier as seen in Section \ref{sub:sett}, so we cannot
ascertain that we indeed learn $\ve{\theta}^*_0$, but rather compute
an approximation to $\ve{\theta}^*_0$. What we can however certify is
that approximations to the ideal classifier $\ve{\theta}^*_0$ come
only from AdaBoost and are not due to errors in entity resolution.

\subsection{Domains}

To have reliable baselines against which to
compare our algorithms, we have used UCI domains \cite{bkmUR} from
which we have generated our distributed data using the following
process: given a set of shared features, split randomly the remaining
features between $\FDP$ and $\LDP$. The shared features of $\LDP$ are
then noisified using the process describes above in Subsection
\ref{sub:sett}. $\FDP$ always has access to the classes. Remark that
since only the shared features of $\LDP$ are noisified, this
guarantees that the final observation matrix, $\hat{\X}$, obtained
after entity-resolution indeed meets
\begin{eqnarray}
\hat{\X} & \defeq & \left[
\begin{array}{c}
\X_\anchor\\\cline{1-1}
\X_\shuffle \PERM_* 
\end{array}
\right]\label{defXHAT}
\end{eqnarray}
for some unknown $\PERM_*$. This guarantees that the differences
between learning algorithms are not due to the (variable) effect of noise in
features but to the errors of $\PERM_*$ following mistakes in entity
resolution. Table \ref{tab:dom} presents the domains we have used. For
two of them (phishing, transfusion), we have considered two versions,
one in which the shared attributes are highly correlated with the
class ($H$) and one in which they are not ($L$).

\begin{table*}[t]
\centering
{\footnotesize
\begin{tabular}{lccc||cc|c}
Domain & $m$ & $d$ & $s$ & shared & linear correlations wrt class & C.Err\\ \hline \hline
magic & 19020 & 10 & 4 & 0, 1, 2, 3 & $0.29, 0.25, 0.11, -0.02$& $10^{-4}$ \\ \hline
page & 5473  &
10 &
3 & 0, 1, 2 & $-0.12, -0.03, -0.09$& 1.83\\\hline
sonar & 208 & 60 & 3 & 0, 1, 2 & $0.27, 0.23, 0.19$& 3.69 \\ \hline
winered & 1599 & 11 & 2 & 7, 8 & $-0.15, -0.003$& 6.02 \\ \hline
eeg & 14980 & 14 & 4 & 0, 1, 2, 3 & $0.01, -0.08, 0.04, -0.08$& 6.08
\\ \hline 
phishing$_H$ & 11055  &
30 &
5 & 5, 6, 7, 13, 25 & $0.34, 0.30, 0.71, 0.69, 0.34$  & 7.39 \\ \hline
winewhite & 4898 & 11 & 3 & 0, 1, 2 & $-0.08, -0.21, -0.0007$& 8.57\\ \hline
breast-wisc & 699  &
9 &
2 & 0, 1 & $-0.68, -0.78$ & 9.21\\ \hline
fertility & 100  &
9 &
3 & 2, 3, 4 & $-0.02, -0.09, 0.03$ & 12.22\\\hline
banknote & 1372 & 4 & 1 & 0 & $-0.72$& 13.14 \\ \hline
creditcard & 14599  & 23  & 4  & 1, 2, 3, 4 & $-0.02, 0.01, -0.02, 0.004$& 14.96\\ \hline
qsar & 1055  &
41  &
4 & 2, 5, 8, 9 & $-0.28, -0.16, -0.05, 0.16$& 16.67\\\hline
transfusion$_H$ & 748  &
4  &
1 & 0 & $-0.24$& 17.36\\ \hline
transfusion$_L$ & 748  &
4  &
1 & 3 & $-0.03$& 17.80 \\ \hline
firmteacher & 10800  &
16  &
2 & 0, 1, 2, 3, 4 & $-0.22, 0.29, -0.25, 0.18, 0.10$& 19.78 \\ \hline
ionosphere & 351  &
33  &
1 & 0 & 0.45 & 20.57\\ \hline
phishing$_L$ & 11055  &
30 &
4 & 0, 1, 2, 3 & $0.09, 0.05, -0.06, 0.05$& 24.35 \\ \hline \hline
\end{tabular}
}
\caption{UCI domains used \cite{bkmUR}. For each domain, we indicate
  the total number of examples ($m$), total number of features ($d$)
  and the number of shared features used in our simulations ($s$). We
  then indicate the list of shared features (indexes as recorded in
  the UCI) and the list of linear correlations with the class for each
  of them. We finally indicate the average
  class errors in entity resolution for \greedyER~(C.Err), \textit{i.e.} the proportion of examples
  from one class matched with examples from the other class. Domains
  are listed in increasing value of C.Err.
\label{tab:dom}}
\end{table*}

\subsection{General results}


\begin{table*}[t]
\centering
\scalebox{.55}{
\begin{tabular}{cc||d||d|rrrr|r|rrrrrrrr}\hline \hline
Domain & Noise $p$ & "Ideal" & \multicolumn{14}{a}{\greedyER[as is |
  \greedyERPCZS~| \greedyERPCS~| \greedyERPCNS]} \\
& & & as is & \multicolumn{4}{c|}{\greedyERPCZS($k$)} &
\greedyERPCS & \multicolumn{8}{c}{\greedyERPCNS($p'$)}\\
  & & & & $1$ & $2$ & $5$ & $10$ & & $0.01$ & $0.02$ & $0.03$ &
  $0.04$ & $0.05$ & $0.10$ & $0.15$ & $0.20$ \\ \hline \hline
\multirow{3}{*}{magic} & 0.05 & 21.14 & 21.15 & \specialcell{21.08} & \specialcell{21.08} & \specialcell{21.06} &
\specialcell{21.11} & 21.15 & \specialcell{21.04} & \specialcell{21.10} & \specialcell{21.08} & 21.19 & 21.21 & 21.65 & \tred{*}{22.28} &
\tred{*}{22.70}\\
 & 0.1 & 21.14 & 21.19 & 21.42 & \tred{*}{21.53} & 21.21 & 21.17 & \specialcell{21.08} & 21.21 &
 \specialcell{21.08} & 21.17 & 21.18 & 21.33 & \tred{*}{21.82} & \tred{*}{22.57} & \tred{*}{23.58}\\
 & 0.3 & 21.16 & \specialcell{21.14} & 21.26 & 21.33 & \specialcell{21.14} & 21.16 & \specialcell{21.14} & \specialcell{21.06} &
 21.16 & 21.24 & \tred{*}{21.62} & \tred{*}{21.72} & \tred{**}{22.34} & \tred{**}{23.75} & \tred{**}{25.31}\\ \hline
\multirow{3}{*}{page} & 0.05 & 27.62 & \specialcell 25.16 & \specialcell 25.31	&\specialcell 25.36	&\specialcell 25.31	&\specialcell 25.14 &
\specialcell 25.85	& \specialcell 25.82	& \specialcell\tred{*}{27.26}& 	\specialcell\tred{*}{27.28}	& \tred{*}{27.66}	& \tred{*}{28.02}	& \tred{*}{29.23}	&
\tred{*}{30.31}	& \tred{*}{31.04}\\ 
 & 0.1 & 27.17 & \specialcell 26.11 & \specialcell 26.03 &	\specialcell 25.43 &	\specialcell\tgreen{*}{24.65} &\specialcell	\tgreen{*}{24.61} &\specialcell 25.63
 & \specialcell 26.63	& \specialcell 26.31	& \specialcell 26.55	& \specialcell 26.48	& 27.39	& 27.96	& \tred{*}{31.32}	&
 \tred{*}{34.59}\\
 & 0.3 & 27.66 & \specialcell 24.83 & \specialcell\tred{*}{26.43} &	\specialcell 26.18 &	\specialcell 25.79 &	\specialcell 25.14 & \specialcell\tred{*}{25.87}
 & \specialcell 25.67	& \specialcell 26.24	& \specialcell 26.49	 &\specialcell 26.82	& \specialcell 26.65	& 28.10	&
 29.21	 & 33.44\\ \hline
\multirow{3}{*}{sonar} & 0.05 & 26.93 & \specialcell 25.95 & \specialcell 25.00 & 28.31 & \specialcell 24.95
& 
\specialcell 26.40 & \specialcell 26.92 & \specialcell 24.50 & \specialcell 25.45 & \specialcell 22.55 & \specialcell 24.05 & \specialcell 24.48 & \specialcell 23.07 & \specialcell 23.05
& \specialcell 25.90\\
 & 0.1 & 26.88 &\specialcell 26.45 & \specialcell 25.45 & 28.38 & \specialcell 24.52 & 27.45 & 26.88 & \specialcell 24.98 &
 27.40 & \specialcell 25.48 & \specialcell 26.50 & 28.40 & \specialcell 25.02 & 29.33 & \specialcell 25.55\\
 & 0.3 & 26.02 & \specialcell 25.05 & 26.45 & \specialcell 24.55 & \specialcell 24.98 & \specialcell 25.02 & \specialcell 25.50 & \specialcell 23.59 &
 \specialcell 22.14 & \specialcell 22.62 & \specialcell 24.07 & \specialcell 24.55 & \specialcell 25.50 & 27.02 & 26.90\\ \hline
\multirow{3}{*}{winered} & 0.05 & 26.08 & 26.58 & 26.20 & 26.32 & 26.45 &
26.58 & 26.77 & 26.45 & 26.83 & 26.20 & 26.45 & 26.33 & 26.64 & 28.20 &
\tred{*}{28.77} \\
 & 0.1 & 26.57 & 26.76 & 27.26 & 27.26 & 26.82 & 27.57 & 27.01 & 26.82 &
 26.82 & 26.76 & 26.39 & 26.88 & 26.76 & 27.95 & 28.64 \\
 & 0.3 & 26.58 & 27.58 & 27.01 & 27.20 & 27.64 & 26.89 & 26.89 & 26.83
 & 
 26.70 & 27.01 & 26.64 & 26.83 & 26.76 & 26.82 & 27.08 \\ \hline
\multirow{3}{*}{eeg} & 0.05 & 45.18 & \specialcell 45.05 & \specialcell 44.43 & \specialcell 44.99 & \specialcell 43.88
& 
\specialcell 44.16 & \specialcell 45.16 & 45.29 & 45.59 & 45.72 & 45.62 & 45.84 & 46.46 & 46.47 &
46.50\\ 
 & 0.1 & 45.79 & 45.92 & \specialcell 45.47 & \specialcell 45.61 & \specialcell 45.20 & 45.80 & 45.79 & 45.83 &
 46.03 & \specialcell 45.74 & 46.19 & 46.45 & \specialcell 43.96 & \specialcell\tgreen{*}{43.42} & \specialcell 43.99 \\
 & 0.3 & 45.19 & 46.10 & 46.08 & 46.58 & 46.68 & 46.60 & \specialcell 45.18 & 45.40 &
 45.45 & 45.96 & \specialcell 45.02 & 45.61 & \specialcell 45.07 & \specialcell 45.07 & \specialcell 45.04\\ \hline
\multirow{3}{*}{phishing$_H$} & 0.05 & 8.03 & 8.40 &
8.08 & 8.18 & 8.70 & 8.62 & 8.17 & 8.05 & 8.05 & 8.14 & 8.23 &
\tred{*}{8.95} & \tred{*}{10.84} & \tred{**}{12.85} & \tred{**}{15.41}\\ 
 & 0.1 & 7.92 & 8.35 & 8.23 & 8.16 & 8.36 & 8.51 & 7.92 & \specialcell\tgreen{*}{7.76} &
 8.01 & 8.21 & 8.72 & 8.76 & \tred{*}{9.61} & \tred{**}{12.86} & \tred{**}{15.29}\\ 
 & 0.3 & 7.96 & 8.90 & 9.15 & 8.91 & 9.01 & 8.86 & 8.39 & \tgreen{*}{8.09} &
 \tgreen{*}{8.17} & 8.46 & 8.67 & 9.05 & \tred{*}{11.49} & \tred{**}{13.5} & \tred{**}{15.46}\\ \hline
\multirow{3}{*}{winewhite} & 0.05 & 30.58 & \specialcell 30.34 & \specialcell 30.30 & \specialcell 30.28 &
\specialcell 30.36 & \specialcell 30.31 & 30.60 & 30.65 & 30.65 & 30.69 & 30.65 & 30.85 & \specialcell 30.30 &
30.69 & \specialcell 30.30\\
 & 0.1 & 30.99 & 31.11 & \specialcell 30.91 & \specialcell 30.97 & \specialcell 30.97 & \specialcell 30.85 & \specialcell 30.75 & \specialcell 30.95 &
 31.03 & 31.26 & 31.03 & 31.11 & 31.58 & 31.60 & 31.95\\
 & 0.3 & 30.95 & 32.79 & \tgreen{*}{31.44} & \tgreen{*}{31.36} & \tgreen{*}{31.31} & \tgreen{*}{31.16} & \tgreen{*}{30.97} & \tgreen{*}{31.19} &
 \tgreen{*}{31.17} & \tgreen{*}{30.89} & \tgreen{*}{31.07} &
 \tgreen{*}{30.99} & \tgreen{*}{31.48} & 32.01 & 32.97\\  \hline
\multirow{3}{*}{breast-wisc} & 0.05 & 3.00 & 3.71 & 3.43 & 3.57 &
3.57 & 3.43 & \specialcell\tgreen{*}{2.43} & \specialcell 2.57 & 3.29 & 3.29 & 3.14 & 3.57 & 3.43 &
3.57 & 3.00\\
 & 0.1 & 3.00 & 3.86 & 3.86 & 3.43 & 3.29 & 4.29 & 3.15 & 3.29 &
 3.00 & 3.29 & 3.00 & 3.72 & 4.01 & 4.29 & 5.30 \\
 & 0.3 & 2.71 & 6.29 & 5.58 & 5.72 & 5.57 & \tgreen{*}{4.86} & \tgreen{*}{3.28} & \tgreen{*}{3.43} &
 \tgreen{*}{3.43} & \tgreen{*}{3.85} & \tgreen{*}{3.86} &
 \tgreen{*}{4.14} & \tgreen{*}{4.43} & 4.86 & 6.01\\ \hline
\multirow{3}{*}{fertility} & 0.05 & 43.00 & 49.00 & 43.00 & \specialcell 33.00 & \specialcell 41.00 &
44.00 &\specialcell 42.00 & 43.00 & 52.00 & 50.00 & 52.00 & 52.00 & 43.00 & 48.00 & 55.00\\
 & 0.1 & 43.00 & \specialcell 41.00 & \specialcell 41.00 & 47.00 & \specialcell 42.00 & 52.00 & 45.00 & 44.00 & 44.00 &
 50.00 & 47.00 & 44.00 & 53.00 & 47.00 & 55.00\\
 & 0.3 & 46.00 & 50.00 & 55.00 & 58.00 & 49.00 & 59.00 & 49.00 & 49.00 & 54.00 &
 50.00 & 54.00 & 54.00 & 53.00 & 55.00 & 43.00\\ \hline
\multirow{3}{*}{banknote} & 0.05 & 2.77 & 13.26 & 13.71 & 13.92 & 12.83 &
13.92 & \tgreen{*}{7.95} & \tgreen{*}{7.65} & \tgreen{*}{7.43} & \tgreen{*}{7.80} & \tgreen{*}{7.72} & \tgreen{*}{8.31} & \tgreen{*}{9.98} & 12.82 &
14.93\\
 & 0.1 & 2.77 & 14.94 & 14.79 & 14.50 & 15.23 & 14.79 & \tgreen{*}{11.88} & \tgreen{*}{11.51} &
 12.68 & \tgreen{*}{12.31} & 12.53 & 13.63 & 14.72 & 16.25 & 17.27 \\
 & 0.3 & 2.91 & 12.89 & 13.84 & 12.74 & 12.39 & 12.97 & 10.06 & 10.64 &
 11.15 & 10.78 & 11.73 & 12.03 & 13.55 & 14.69 & 16.91 \\\hline
\multirow{3}{*}{creditcard} & 0.05 & 23.26 & 23.26 &  23.26 &  23.26 &
23.26 &  23.26 &  23.26 &  23.26 &  23.26 &  23.26 &  23.26 &  23.26 &
23.26 &  23.26 &  23.26 \\
 & 0.1 & 23.26 & 41.87 & 40.66 & 41.46 & \tgreen{*}{36.91} & 36.89 & \tgreen{**}{23.26} & \tgreen{**}{23.26} &
 \tgreen{**}{23.26} & \tgreen{**}{23.26} & \tgreen{**}{23.26} & \tgreen{*}{26.19} & 42.65 & 43.08 & 44.36 \\
 & 0.3 & 23.26 & 42.49 & 41.19 & 42.03 & \tgreen{*}{38.82} & \tgreen{*}{36.51} & \tgreen{**}{23.26} & \tgreen{**}{23.26} &
 \tgreen{**}{23.26} & \tgreen{*}{24.72} & \tgreen{*}{25.01} &
 \tgreen{*}{32.28} & 40.87 & 41.75 & 40.89\\ \hline
\multirow{3}{*}{qsar} & 0.05 & 21.80 & 23.51 & 23.60 & 22.94 & 23.03 &
24.17 & \specialcell 21.62 & 21.90 & \specialcell 21.72 & \specialcell 21.72 & 22.19 & 22.19 & 22.28 & 22.38 &
23.51\\ 
 & 0.1 & 21.51 & 23.22 & 23.02 & 23.40 & 23.78 & 23.31 & 22.27 & 21.79 &
 21.70 & 21.99 & 21.79 & 21.89 & 22.36 & 22.75 & 22.75\\ 
 & 0.3 & 21.81 & 22.85 & 23.13 & 22.19 & 23.13 & 22.27 & 22.19 & 22.28 &
 22.28 & 21.81 & \specialcell 21.71 & 21.90 & 22.00 & 22.66 & 22.76 \\ \hline
\multirow{3}{*}{transfusion$_H$} & 0.05 & 39.57 & \specialcell 36.10 & \specialcell\tgreen{*}{33.82} & \specialcell\tgreen{*}{34.09} &
\specialcell\tgreen{*}{34.89} & \specialcell\tgreen{*}{34.36} & \specialcell\tred{*}{39.03} & \tred{*}{39.84} & \tred{*}{39.71} & \specialcell\tred{*}{38.77} & \specialcell\tred{*}{38.64} & \specialcell 37.44 & \specialcell\tgreen{*}{34.89} &
\specialcell\tgreen{*}{33.43} & \specialcell 35.03\\
 & 0.1 & 39.72 & \specialcell 35.83 & \specialcell 36.09 & \specialcell 35.95 & \specialcell\tgreen{*}{33.55} & \specialcell\tgreen{*}{33.28} & \tred{*}{40.38} & \specialcell\tred{*}{38.92} &
 \specialcell\tred{*}{37.57} & \specialcell 36.89 & \specialcell 34.88 & \specialcell 34.89 & \specialcell\tgreen{*}{33.16} & \specialcell\tgreen{*}{33.82} & \specialcell 34.63\\
 & 0.3 & 38.37 & \specialcell 35.55 & \specialcell 35.83 & \specialcell 34.08 & \specialcell 34.89 & \specialcell 34.89 & \tred{*}{38.65} & \specialcell\tred{*}{37.98} &
 \specialcell\tred{*}{37.44} & \specialcell 37.04 & \specialcell 37.17 & \specialcell 35.97 & \specialcell 35.62 & \specialcell 35.03 & \specialcell 34.89\\ \hline
\multirow{3}{*}{transfusion$_L$} & 0.05 & 38.64 & \specialcell 34.65 & \specialcell 34.77 &
\specialcell 34.76 &
\specialcell 34.22 & \specialcell 34.89 & \specialcell 38.23 & \specialcell 36.90 & \specialcell 37.43 & \specialcell 34.90 & \specialcell 33.96 & \specialcell 34.23 & \specialcell 33.17 &\specialcell
34.63 & \specialcell 33.57\\
 & 0.1 & 39.02 & \specialcell 35.15 & \specialcell 34.75 & \specialcell 35.29 & \specialcell 35.16 & \specialcell 34.48 &\specialcell 37.16 & \specialcell 38.09 &
\specialcell 38.09 & \specialcell 37.17 & \specialcell 36.36 & \specialcell 33.68 & \specialcell 33.55 & \specialcell 33.55 & \specialcell 33.41 \\
 & 0.3 & 39.29 & \specialcell 34.09 & \specialcell 33.41 & \specialcell 35.16 & \specialcell 34.76 & \specialcell 34.63 & \tred{*}{39.29} & \specialcell\tred{*}{38.77} &\specialcell
 \tred{*}{37.82} & \specialcell 35.82 & \specialcell\tred{*}{37.02} & \specialcell 35.82 & \specialcell 35.56 & \specialcell 34.08 & \specialcell 32.48\\ \hline
\multirow{3}{*}{firmteacher} & 0.05 & 12.45 & 17.57 & 18.03 & 18.23 &
17.75 & 18.00 & \tgreen{**}{12.71} & \tgreen{**}{12.68} & \tgreen{**}{12.71} & \tgreen{*}{13.06} & \tgreen{**}{13.02} & \tgreen{**}{13.35} & \tgreen{*}{14.81} &
\tgreen{*}{15.90} & 17.38\\ 
 & 0.1 & 12.39 & 21.03 & 21.06 & 21.29 & 21.51 & 21.54 & \tgreen{**}{12.89} & \tgreen{**}{12.71} &
 \tgreen{**}{12.73} & \tgreen{**}{12.72} & \tgreen{**}{13.14} & \tgreen{**}{13.36} & \tgreen{**}{14.82} & \tgreen{*}{16.98} & \tgreen{*}{18.06} \\
 & 0.3 & 12.35 & 20.45 & \tred{*}{21.12} & \tred{*}{21.16} & 20.32 & 20.34 & \tgreen{**}{12.54} & \tgreen{**}{12.45} &
 \tgreen{**}{12.42} & \tgreen{**}{12.73} & \tgreen{**}{12.81} &
 \tgreen{**}{13.00} & \tgreen{**}{14.54} & \tgreen{**}{16.05} &
 \tgreen{**}{17.44} \\ \hline
\multirow{3}{*}{ionosphere} & 0.05 & 11.95 & 19.04 & 19.34 & 19.61 &
20.19 & 19.34 & 14.51 & 14.23 & 16.49 & 14.79 & 16.48 & 14.80 & 17.63 &
19.35 & 19.35 \\
 & 0.1 & 10.28 & 16.25 & 14.54 & 15.13 & 15.96 & 16.54 & 13.42 & 14.56 &
 15.41 & 15.68 & 15.39 & 15.68 & 15.39 & 15.97 & 16.26\\ 
 & 0.3 & 10.84 & 17.95 & 19.38 & \tred{*}{22.50} & 20.23 & 17.93 & \tgreen{*}{13.97} & \tgreen{*}{13.40} &
 13.70 & 14.55 & 15.98 & 17.38 & 15.68 & 18.24 & 18.23\\ \hline
\multirow{3}{*}{phishing$_L$} & 0.05 & 7.97 & 14.80 & 14.82 & 14.99 &
15.02 &
14.98 & \specialcell\tgreen{**}{7.91} & \tgreen{**}{8.27} & \tgreen{**}{8.44} & \tgreen{**}{8.45} & \tgreen{**}{8.61} & \tgreen{**}{8.83} & \tgreen{**}{9.94} & \tgreen{**}{10.16} &
\tgreen{**}{11.18}\\ 
 & 0.1 & 7.89 & 11.11 & 11.11 & 11.11 & 11.11 & 11.11 & \tgreen{**}{8.02} & \tgreen{**}{7.92} &
 \specialcell\tgreen{**}{7.82} & \specialcell\tgreen{**}{7.82} & \tgreen{**}{7.91} & \tgreen{**}{8.11} & \tgreen{**}{8.50} & \tgreen{**}{9.32} & \tgreen{*}{10.65} \\
 & 0.3 & 7.91 & 13.73 & 13.73 & 13.73 & 13.73 & 13.73 & \tgreen{**}{8.29} & \tgreen{**}{8.51} &
 \tgreen{**}{8.47} & \tgreen{**}{8.44} & \tgreen{**}{8.60} &
 \tgreen{**}{8.80} & \tgreen{**}{9.16} & \tgreen{**}{9.81} & \tgreen{**}{10.54} \\
\hline \hline
\end{tabular}
}
\caption{{ Results (test errors) comparing, for three values of the shared features
  noise ($p$), the various approaches built on top of \greedyER~to
  "Ideal". Domains are listed in the same order as in Table
  \ref{tab:dom}. Grey shaded cells are the results of "Ideal" and
  \greedyER~(indicated "as is"). Blue shaded cells denote results that are better (but not
  necessarily statistically better) than "Ideal". Red text denote
  results that are statistically outperformed by \greedyER; green text
  denote results of \greedyER[\greedyERPCZS~| \greedyERPCS~|
  \greedyERPCNS] statistically better than greedyER. One
  star ($*$) indicated $p$-value in $(10^{-6}, 10^{-2}]$, two stars
  ($**$) indicated $p$-value $\leq 10^{-6}$ (best viewed in color).
\label{tab:res}}}
\end{table*}

Results are displayed in Table \ref{tab:res}. From those results, several
observations come to the fore. First, the larger the number of errors
of entity resolution among classes for \greedyER~(Table \ref{tab:dom},
C.Err), the more beneficial
are the approaches using the class information for entity
resolution. On domains firmteacher, ionosphere, phishing, using the
class information is almost always on par with or (significantly)
better than \greedyER. Second, the improvement can be extremely
significant as witnessed by domains creditcard or firmteacher, with
almost 20 $\%$ improvement when using (even noisy) classes on
creditcard, and still up to 6$\%$ improvement when using predicted
classes (\greedyERPCZ) on creditcard. This is
very good news because the shared features we used on creditcard --- sex,
education, marriage, age --- are typically those that would be shared
in a federated learning setting. 

Another observation may be made: on all domains but one (banknote),
carrying entity-resolution is susceptible to compete against
"Ideal". On the majority of domains, there exists a version of
\greedyER[as is |
  \greedyERPCZS~| \greedyERPCS~| \greedyERPCNS]~which \textit{beats} "Ideal" --- even when not statistically
in most cases ---. On few domains, page, sonar, transfusion (both $H$
and $L$), using class
information yields results that almost always beat the "Ideal"
baseline. One explanation comes from the fact that all learners,
including "Ideal", use AdaBoost for the same number of
iterations. On these domains, the models learned after entity-resolution tend to be
slightly less sparse than for "Ideal". So, it seems reasonable that entity resolution, when carefully used as may be the case
with class information, may force the spread of AdaBoost's feature leveraging to a
larger number of relevant features, compared to "Ideal" which focuses
on a smaller set during the thousand iterations allocated and thus comes up with
a model that can be more sparse but less accurate. Also, considering phishing, we see
that having shared features that are more correlated with the class
(phishing$_H$ vs phishing$_L$) certainly helps to compete against
"Ideal", in particular when one peer does not have classes. The same
observation can be made for transfusion, even when the gains with more
correlated features are less important in this case, which can be due
to the small number of shared features.

If we now compare the two approaches of \greedyER~using class information (with classes, even
noisy, vs without), then it is apparent that having noisy classes --- with up to $20\%$ noise --- can very
significantly help against \greedyER~compared to carrying out entity
resolution without ground class information (but learning classes) as
in \greedyERPCZ. Our approach that learns classes in \greedyERPCZ~is
simple but still manages to deliver significant improvements in some
cases, typically high noise for shared features (winewhite, creditcard) or shared features
sufficiently correlated with class (transfusion$_H$).

Finally, we keep in mind that these results are obtained for
simulations that include in general a small number of shared features
(2.8 on average) and a shared feature noise that ranges up to
$p=$30$\%$, which would correspond to relatively challenging practical
settings. This suggests that if we exclude pathological domains like
banknote in our benchmark, there would be for most domains good reasons to carry out
tailored approaches to entity resolution for learning with the
ambition to challenge the unknown learner having access to
the ideally entity-resolved data. This is not surprising: it is known
that the sufficient statistics for the class is very simple for many
relevant losses \cite{pnrcAN}, so we should not expect perfect entity
resolution to be necessary to improve learning performance.

\subsection{Experimental check of immunity to $\PERM_*$ of large
margin classification}

\begin{figure}[!t]
\centering
\begin{tabular}{cc} \hline\hline
\includegraphics[width=.45\linewidth]{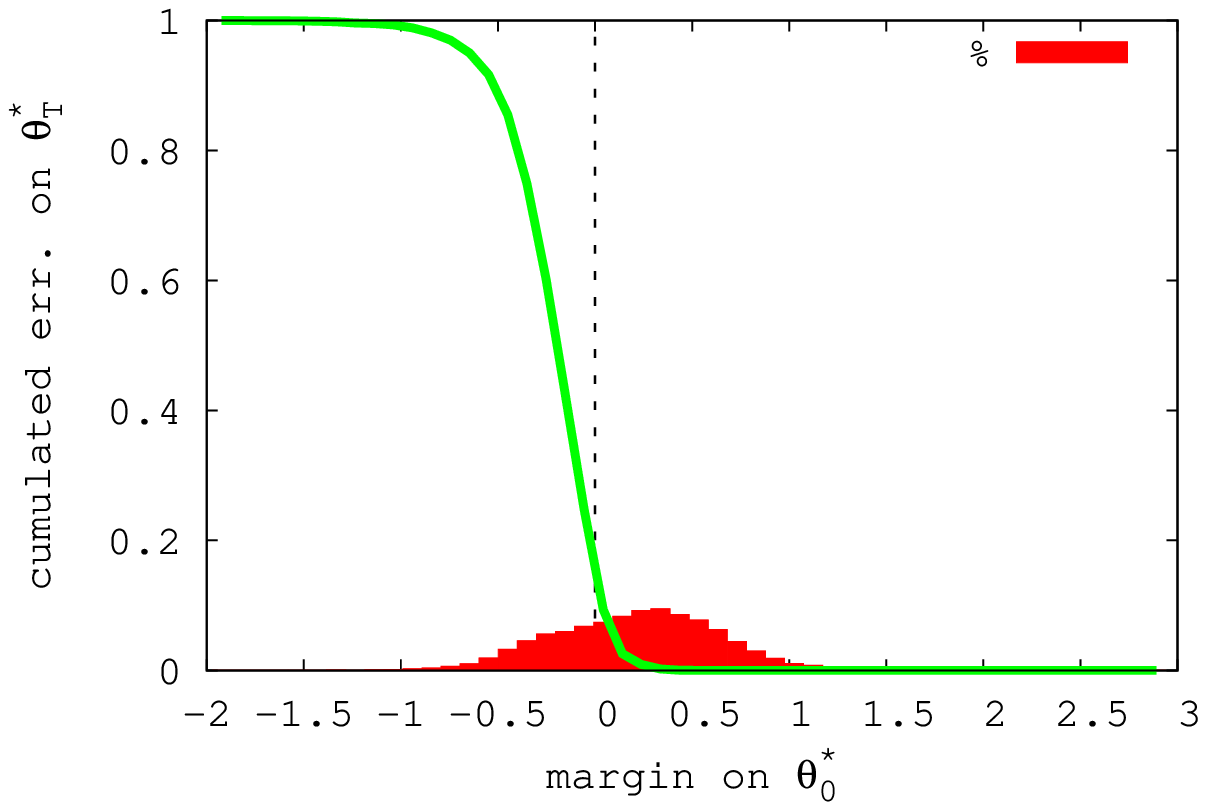}
& \includegraphics[width=.45\linewidth]{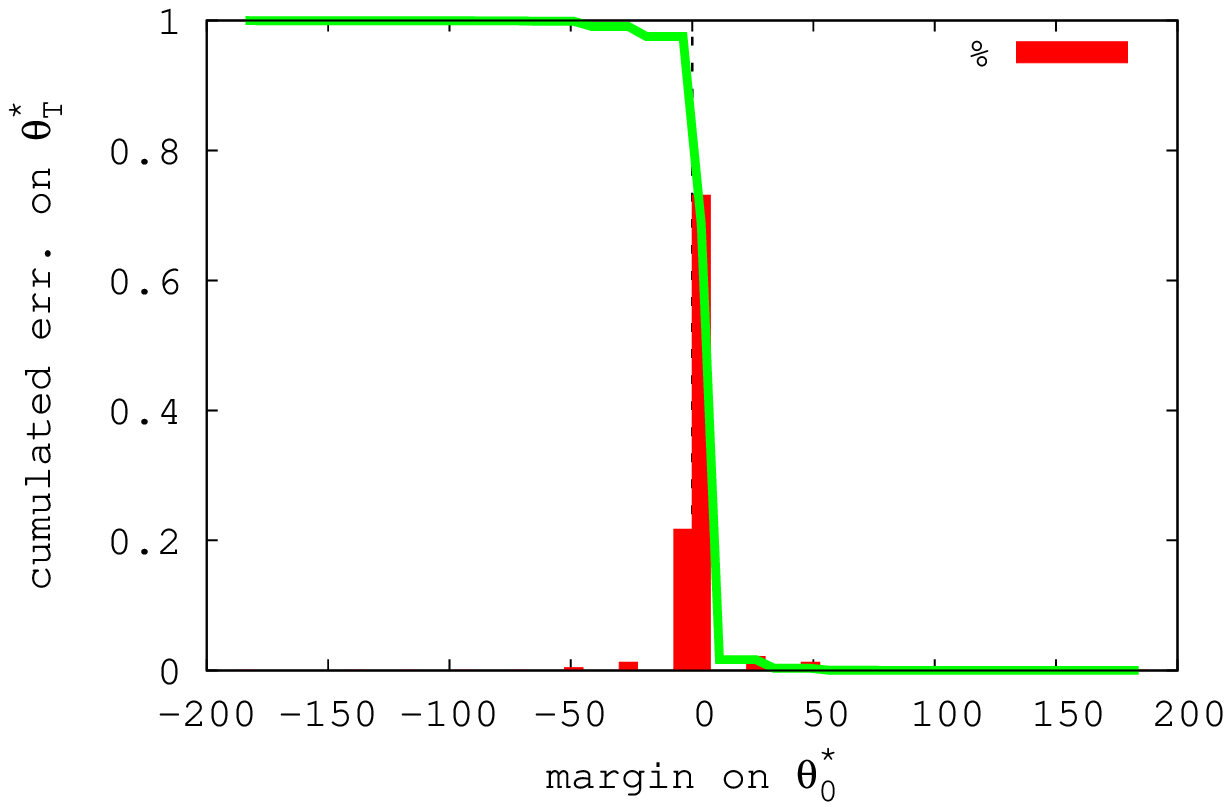}\\
winewhite, \greedyERPCZ(5) & creditcard, \greedyERPCZ(10)
\\ \hline\hline
\end{tabular}
\caption{Margin distribution on two domains with shared attribute
  noise $p=0.3$. The red histogram displays the distribution of
  margins of $\ve{\theta}_0^*$ on training. The green curve is
  the cumulated relative error of $\ve{\theta}_T^*$ \textit{above} some margin
  $x$. For example, on winewhite, less than $20\%$ of the errors on training happen on
  examples with positive margin, and approximately
  \textit{no} error happens on examples with positive margin
  above 0.5 --- in other words, \textit{all} examples with margin above 0.5 on
  $\ve{\theta}_0^*$ receive the right class from $\ve{\theta}_T^*$ and
  so, following Definition \ref{defIMMUNEMARGIN}, $\ve{\theta}_T^*$
  happens to be \textbf{immune} to entity
  resolution at margin 0.5. Since the maximal margin recorded for
  $\ve{\theta}_0^*$ is $\approx 3.0$, we see in this example that
  immunity occurs for a comparatively small positive margin (best viewed in color, see text for details).
\label{fig:marg}}
\end{figure}

\begin{figure}[!t]
\centering
\begin{tabular}{c} \hline\hline
\includegraphics[width=.93\linewidth]{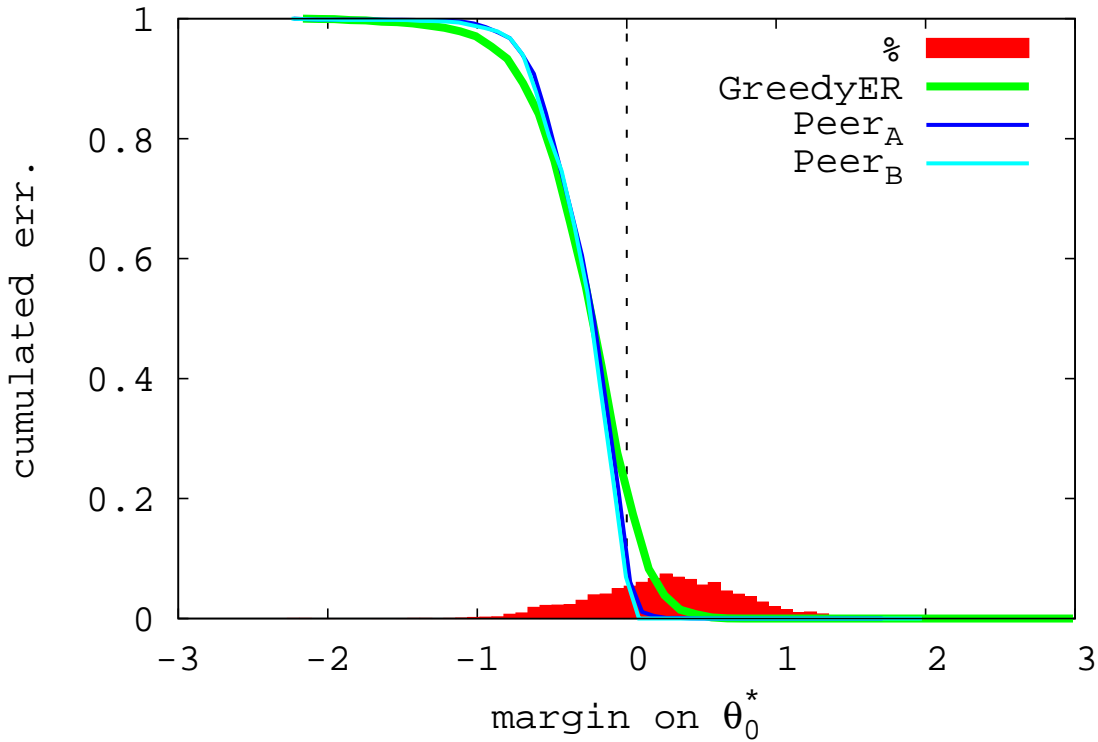}
\end{tabular}
\caption{Margin distribution on $\ve{\theta}_0^*$ on domain winered, and cumulative errors
  comparing \greedyER~and the two peers $\FDP$ and $\LDP$. Convention
  follows Figure \ref{fig:marg} (best viewed in color, see text for details).
\label{fig:marg2}}
\end{figure}

In Section \ref{sec-marg}, we essentially
show that all examples receiving large margin classification on
$\ve{\theta}_0^*$ are given the right class by
$\ve{\theta}_T^*$. To our knowledge, such a result has never been
documented, even experimentally, but it would represent a significant
support for federated learning since one can hope, by joining diverse
databases, to increase not just the accuracy of classifiers but in
fact the margins over examples, thereby bringing immunity to the mistakes of
entity resolution for examples that would attain sufficiently large
margins. But how "large" a margin is necessary ? On each
domain, we have computed the
margin distributions of $\ve{\theta}_0^*$ --- approximated by the
output of AdaBoost
ran on the training sample $S$ for twice the usual number of
iterations, that is, 2000\footnote{In fact, we do this for all cross
  validation folds.}. We
then compute, for all examples, whether they are given the right
class by $\ve{\theta}_T^*$. We finally compute the cumulative error
distribution, in between 0 and 1, of $\ve{\theta}_T^*$. For any $x \in
[\kappa_{m}, \kappa_M]$ (the interval of observed margins), the
cumulative error on $x$ is just the proportion of errors occurring for
margins in the interval $[x, \kappa_M]$. When $x = \kappa_{m}$, this
is just 1. Figure \ref{fig:marg} provides two examples of curves obtained,
which does not just validate immunity: on winewhite, it shows that it can happen for
a quite small margin ($\approx 0.5$) with respect to the maximal
margin ($\kappa_M \approx 3.0$), which reinforces the
support for federated learning. On creditcard, we have $\kappa_M
\approx 188$ while immunity happens at margin $\approx 100$. Less than
$1\%$ of mistakes have margin larger than $30$.

In Figure \ref{fig:marg2}, we provide an example comparison for
domain winered, against the two peers $\FDP$ and $\LDP$. In this case,
\greedyER~achieves error more than $1.7\%$ lower than both peers. We can see
from the plots that errors occur on peers for smaller margins than for
\greedyER, yet the cumulative error slope is much steeper for both peers,
indicating that \greedyER~achieves a better job at classifying hard
examples (small optimal margin), an observation that can perhaps be attributed to the fact
that \greedyER~successfully handles a set of features which is bigger
than that of each peer.

Finally, in table \ref{tab:immmag}, we provide the minimal immunity margin on
one domain for \greedyERPCN, for different values of $p'$, that is,
the minimal $x$ for which there is no error on examples with margin
$x$ on $\ve{\theta}_0^*$. We can see that this minimal margin
largely increases with noise, and so increasing noise in the
entity resolution process degrades the margin picture, which is also
consistent with the fact that the error of $\ve{\theta}_T^*$ also
significantly increases.

\begin{table*}[t]
\centering
{\footnotesize
\begin{tabular}{rrrrrrrrr}\hline \hline
$p' = 0$ & $p' = 0.01$ & $p' = 0.02$ & $p' = 0.03$ & $p' = 0.04$ & $p'
= 0.05$ & $p' = 0.1$ & $p' = 0.15$ & $p' = 0.2$\\ \hline
0.068 & 0.086 & 0.218 & 0.359 & 0.362 & 0.513 & 0.891 & 1.113 & 0.913 \\ \hline \hline
\end{tabular}
}
\caption{Minimal immunity margin on domain magic ($p = 0.3$) for \greedyERPCN.
\label{tab:immmag}}
\end{table*}
\section{Discussion and conclusion}
\label{sec:conclusion}

This paper describes a global picture guaranteeing that the errors
of an approximate entity resolution algorithm do not snowball with
those of learning linear models, in the framework of federated
learning. The key parts rely on essential properties of the entity
resolution algorithm and, to a lesser extent, on the design
(regularization) of the loss. At this moderate price, the main message that comes from our
results is very general as it roughly states that
\begin{center}
\textit{"\textbf{any} entity resolution algorithm making errors
  bounded in size and magnitude, used before minimizing \textbf{any}
  sufficiently regularized loss, yields a minimizer classifier that converges to the optimum learned knowing the \textbf{perfect} entity resolution"}
\end{center}

Indeed our result holds for
a broad class of losses, not even restricted to convex nor
classification calibrated losses, thereby
generalizing very significantly a result developed in the privacy setting
for a Taylor approximation to the logistic loss
\cite{hhinpstPF}. 

Experimentally, the part of our theory that relies on entity resolution
suggests some very simple
modification(s) that can be carried out on existing entity resolution
algorithms to bring algorithms tailored to be a pre-processing
stage to learning. Drilling down into such a link is not the purpose
of our paper, yet our experiments on simple modifications of
a greedy token-based approach displays potential for
significant improvements. We exemplify this on two modifications: (i)
when both peers have classes but one has noisy classes and
(ii) when only one peer has classes. Even with such simple approaches
to integrate the knowledge of classes, our experiments already display the
possibility to compete with the learner that would have access to
the ideally linked data.

We leave two important open questions: (1) on the formal side, the extension of our results to the case
where vertical partition does not hold anymore and some examples of
one peer do not necessarily have a correspondence in the other peer
(and we do not know which ones), (2) on the privacy side,
the question as to how our
results can be pushed to efficient algorithms in a \textit{secure}
federated learning environment where entity resolution
has to comply with privacy constraints. The strength of our results makes it reasonable
to believe that a substantial weakening of the vertical partition
setting to get to (1) is available at affordable formal expense
for the pipeline entity resolution-learning. This is crucial because
this pipeline is pivotal to federated learning: to our knowledge,
there is only \textit{one} exception to this
pipeline \cite{pnhcFL}. It was shown there how one can learn a model
from sufficient statistics of the class instead of examples, many of
which would not require entity-resolution to be considered. However, this
approach suffers four shortcomings with respect to ours: (a)
the results are developed for the square loss only, (b) 
building these sufficient statistics always require all peers to have
the classes, (c) the federated learning theory does not give a quantitative account of
the deviations to the ideal classifier that compares with ours and (d)
experimentally, the approach does not compare to the ideal
classifier, even when shared features are noise-free.

In all cases, our results are a very strong advocacy for federated
learning, and signals the existence of non-trivial tradeoffs for
entity-resolution to be optimized with the objective of learning from
linked data. We hope such results will contribute to spur related
research in the active and broad field of entity resolution, and
contribute in a broader agenda to technically shape data marketplaces.

\bibliographystyle{plain}
\bibliography{references,bibgen}

\begin{thebibliography}{10}

\bibitem{Aono:2016:SSL:2857705.2857731}
Y.~Aono, T.~Hayashi, L.~Trieu~Phong, and L.~Wang.
\newblock Scalable and secure logistic regression via homomorphic encryption.
\newblock In {\em CODASPY}, 2016.

\bibitem{aAS}
D.~Avis.
\newblock A survey of heuristics for the weighted matching problem.
\newblock {\em Networks}, 13:475--493, 1983.

\bibitem{bjmCCj}
P.~Bartlett, M.~Jordan, and J.~D. McAuliffe.
\newblock Convexity, classification, and risk bounds.
\newblock {\em J. of the Am. Stat. Assoc.}, 101:138--156, 2006.

\bibitem{bmRA}
P.-L. Bartlett and S.~Mendelson.
\newblock Rademacher and gaussian complexities: Risk bounds and structural
  results.
\newblock {\em JMLR}, 3:463--482, 2002.

\bibitem{bMA}
R.~Bhatia.
\newblock {\em Matrix Analysis}.
\newblock Springer, 1997.

\bibitem{bierens04}
H.-J. Bierens.
\newblock {\em Introduction to the Mathematical and Statistical Foundations of
  Econometrics}.
\newblock Cambridge University Press, 2004.

\bibitem{bkmUR}
C.~L. Blake, E.~Keogh, and C.J. Merz.
\newblock {UCI} repository of machine learning databases, 1998.
\newblock {\texttt{http://www.ics.uci.edu/$\sim$mlearn/MLRepository.html}}.

\bibitem{christen12}
P.~Christen.
\newblock {\em Data matching: concepts and techniques for record linkage,
  entity resolution, and duplicate detection}.
\newblock Springer Science \& Business Media, 2012.

\bibitem{cpAS}
P.~Christen and A.~Pudjijono.
\newblock Accurate synthetic generation of realistic personal information.
\newblock In {\em PAKDD}, pages 507–--514, 2009.

\bibitem{cIO}
W.-W. Cohen.
\newblock Integration of heterogeneous databases without common domains using
  queries based on textual similarity.
\newblock In {\em SIGMOD}, pages 201--212, 1998.

\bibitem{dmsTC}
A.~Deligkas, G.-B. Mertzios, and P.-G. Spirakis.
\newblock The computational complexity of weighted greedy matching.
\newblock In {\em AAAI'17}, pages 466--474, 2017.

\bibitem{dhhiopstwPP}
M.~Djatmiko, S.~Hardy, W.~Henecka, H.~Ivey-Law, M.~Ott, G.~Patrini, G.~Smith,
  B.~Thorne, and D.~Wu.
\newblock Privacy-preserving entity resolution and logistic regression on
  encrypted data.
\newblock In {\em ICML workshop on Private and Secure ML}, 2017.

\bibitem{eahEA}
P.-M. Esperan{\c{c}}a, L.-J.-M. Aslett, and C.-C. Holmes.
\newblock Encrypted accelerated least squares regression.
\newblock In {\em 20$^{th}$ AISTATS}, pages 334--343, 2017.

\bibitem{gascon2017privacy}
A.~Gasc{\'o}n, P.~Schoppmann, B.~Balle, M.~Raykova, J.~Doerner, S.~Zahur, and
  D.~Evans.
\newblock Privacy-preserving distributed linear regression on high-dimensional
  data.
\newblock {\em PoPET}, 2017.

\bibitem{gmER}
L.~Getoor and A.~Machanavajjhala.
\newblock Entity resolution for big data.
\newblock In {\em 19$^{th}$ KDD}, page 1527, 2013.

\bibitem{gjpyPP}
I.~Giacomelli, S.~Jha, C.-D. Page, and K.~Yoon.
\newblock Privacy-preserving ridge regression on distributed data.
\newblock {\em {IACR} Cryptology ePrint Archive}, 2017:707, 2017.

\bibitem{hhinpstPF}
S.~Hardy, W.~Henecka, H.~Ivey{-}Law, R.~Nock, G.~Patrini, G.~Smith, and
  B.~Thorne.
\newblock Private federated learning on vertically partitioned data via entity
  resolution and additively homomorphic encryption.
\newblock {\em CoRR}, abs/1711.10677, 2017.

\bibitem{hsRW}
M.-A. Hern{\'{a}}ndez and S.-J. Stolfo.
\newblock Real-world data is dirty: Data cleansing and the merge/purge problem.
\newblock {\em DMKD}, 2:9--37, 1998.

\bibitem{kstOT}
S.~Kakade, K.~Sridharan, and A.~Tewari.
\newblock On the complexity of linear prediction: Risk bounds, margin bounds,
  and regularization.
\newblock In {\em NIPS*21}, pages 793--800, 2008.

\bibitem{kmyrtsbFL}
J.~Kone\u{c}n{\'y}, H.-B. McMahan, F.-X. Yu, P.~Richtarik, A.-T. Suresh, and
  D.~Bacon.
\newblock Federated learning: Strategies for improving communication
  efficiency.
\newblock In {\em NIPS Workshop on Private Multi-Party Machine Learning}, 2016.

\bibitem{kssRL}
N.~Koudas, S.~Sarawagi, and D.~Srivastava.
\newblock Record linkage: similarity measures and algorithms.
\newblock In {\em Proc.\ of the ACM SIGMOD International Conference on
  Management of Data}, pages 802--803, 2006.

\bibitem{kTH}
H.-W. Kuhn.
\newblock The {Hungarian} method for the assignment problem.
\newblock {\em Naval Research Logistics Quarterly}, 2:83–--97, 1955.

\bibitem{nwijbtPP}
V.~Nikolaenko, U.~Weinsberg, S.~Ioannidis, M.~Joye, D.~Boneh, and N.~Taft.
\newblock Privacy-preserving ridge regression on hundreds of millions of
  records.
\newblock In {\em {IEEE} Symposium on Security and Privacy}, pages 334--348,
  2013.

\bibitem{nnBD}
R.~Nock and F.~Nielsen.
\newblock Bregman divergences and surrogates for learning.
\newblock {\em IEEE Trans.PAMI}, 31:2048--2059, 2009.

\bibitem{pnhcFL}
G.~Patrini, R.~Nock, S.~Hardy, and T.~Caetano.
\newblock Fast learning from distributed datasets without entity matching.
\newblock In {\em IJCAI}, 2016.

\bibitem{pnrcAN}
G.~Patrini, R.~Nock, P.~Rivera, and T.~Caetano.
\newblock ({A}lmost) no label no cry.
\newblock In {\em NIPS*27}, 2014.

\bibitem{rwCB}
M.-D. Reid and R.-C. Williamson.
\newblock Composite binary losses.
\newblock {\em JMLR}, 11, 2010.

\bibitem{ssIBj}
R.~E. Schapire and Y.~Singer.
\newblock Improved boosting algorithms using confidence-rated predictions.
\newblock {\em MLJ}, 37:297--336, 1999.

\bibitem{sEP}
R.~Schnell.
\newblock Efficient private record linkage of very large datasets.
\newblock In {\em 59$^{th}$ World Statistics Congress}, 2013.

\bibitem{vmwLW}
B.~{van Rooyen}, A.~Menon, and R.-C. Williamson.
\newblock Learning with symmetric label noise: The importance of being
  unhinged.
\newblock In {\em NIPS*28}, 2015.

\bibitem{wTR}
D.~Wells.
\newblock The rise of the data marketplace.
\newblock Eckerson group, 2017.
\newblock
  \url{http://go.podiumdata.com/hubfs/EckersonGroup_Podium_DW_RiseofDataMarketplace_3-2-17.pdf}.

\bibitem{wRL}
W.-E. Winkler.
\newblock Record linkage.
\newblock In {\em Handbook of Statistics}, pages 351--380. Elsevier, 2009.

\bibitem{DBLP:journals/corr/XieWBB16}
Wei Xie, Yang Wang, Steven~M. Boker, and Donald~E. Brown.
\newblock Privlogit: Efficient privacy-preserving logistic regression by
  tailoring numerical optimizers.
\newblock {\em CoRR}, 2016.

\end{thebibliography}

\clearpage

\section{Appendix: table of contents}\label{sec-toc}

\noindent Proof of Theorem \ref{lemmaTaylor}\hrulefill Pg \pageref{proof_lemmaTaylor}\\
\noindent Proof of Lemma \ref{lemmaTaylor2}\hrulefill Pg \pageref{proof_lemmaTaylor2}\\
\noindent Proof of Theorem \ref{thAPPROX1}\hrulefill Pg \pageref{proof_thAPPROX1}\\
\noindent Proof of Theorem \ref{thIMMUNE}\hrulefill Pg \pageref{app:proof-thIMMUNE}\\
\noindent Proof of Theorem \ref{thDIFFLOSS}\hrulefill Pg \pageref{app:proof-thDIFFLOSS}\\
\noindent Proof of Theorem \ref{thGENTHETA}\hrulefill Pg \pageref{app:proof-thGENTHETA}\\

\newpage

\subsection{Proof of Theorem \ref{lemmaTaylor}}\label{proof_lemmaTaylor}

We proceed in two steps, first assuming that $F$ is convex and then
relaxing the assumption.\\

\noindent Case 1 --- $F$ convex. In order not to laden our notations, we fold $\gamma$ and $\Gamma_F$ in
the regularizer and consider without loss of generality any convex Ridge regularized loss $\loss_F({\hat{S}}, \ve{\theta};
\Gamma_F) = L + R$ with $R \defeq
\ve{\theta}^\top \Gamma_F \ve{\theta} $ and
\begin{eqnarray}
L\defeq \frac{1}{m}
\cdot \sum_i F(y_i \ve{\theta}^\top \hat{\ve{x}}_i) \label{ridgeregLoss2}\:\:,
\end{eqnarray} 
for some convex twice differentiable $F$. We first focus on the
approximation of $L$ via a Taylor loss.
We perform a local Taylor-Lagrange expansion of each $F(y_i
\ve{\theta}^\top \hat{\ve{x}}_i)$ in eq. (\ref{ridgeregLoss2}) around
0 and
obtain that there exists $c_1, c_2, ..., c_m \in F''(\mathbb{I}(\Theta_*))
\subseteq \mathbb{R}_+$ such
that
\begin{eqnarray}
L & = & F(0) + \frac{F'(0)}{m}
\cdot \sum_i y_i \ve{\theta}^\top \hat{\ve{x}}_i + J \:\:,\label{taylor1appr}
\end{eqnarray}
where $\mathbb{I} \defeq [-\hat{X}_* \Theta_*, \hat{X}_* \Theta_*]$ 
(since $|y_i
\ve{\theta}^\top \hat{\ve{x}}_i| \leq \hat{X}_* \Theta_*$ by Cauchy-Schwartz
inequality)
 and $J \defeq (1/2m) \cdot \sum_i c_i (y_i \ve{\theta}^\top
\hat{\ve{x}}_i)^2$. Here, we have assumed that there exists some
$\Theta_* > 0$ such that $\|\ve{\theta}\|_2 \leq \Theta_*$; we shall see
that such a bound $\Theta_*$ indeed exists for the $\ve{\theta}$ which
interests us.
Let 
\begin{eqnarray}
c' & \defeq & \frac{\sum_i c_i (y_i \ve{\theta}^\top
\hat{\ve{x}}_i)^2}{\sum_i (y_i \ve{\theta}^\top
\hat{\ve{x}}_i)^2}\:\:.
\end{eqnarray} 
It trivially follows that $c' \in
F''(\mathbb{I})$ and 
\begin{eqnarray}
J & = & \frac{c'}{2m} \cdot \sum_i (y_i \ve{\theta}^\top
\hat{\ve{x}}_i)^2\:\:.
\end{eqnarray}
What we thus get is that for any $\forall \hat{S}, \ve{\theta}$, there
exists $c \in (1/2)\cdot F''(\mathbb{I}) \subseteq \mathbb{R}_{+}$ such that
\begin{eqnarray}
L & = & F(0) + \frac{F'(0)}{m}
\cdot \sum_i y_i \ve{\theta}^\top \hat{\ve{x}}_i + \frac{c}{m} \cdot \sum_i (y_i \ve{\theta}^\top
\hat{\ve{x}}_i)^2\:\:,\label{approx1}
\end{eqnarray}
and we also observe that $L$ is convex. We now consider the choice $c \defeq c^*$ obtained for 
\begin{eqnarray}
\ve{\theta}^* & \defeq & \arg\min_{\ve{\theta}} \loss_F({\hat{S}}, \ve{\theta};
\Gamma_F) \:\:.\label{eqMIN1}
\end{eqnarray}
Let us denote $\opttaylorloss$ the particular Taylor loss obtained,
which therefore matches $\loss_F({\hat{S}}, \ve{\theta};
\Gamma_F)$ for the choice $\ve{\theta} = \ve{\theta}^*$. We now
design the regularizer of the Taylor loss to ensure that its
\textit{optimum} is also achieved by $\ve{\theta}^*$. It is not hard
to check that the optimum of the Ridge regularized Taylor loss $\opttaylorloss({\hat{S}}, \ve{\theta};
\Gamma_T)$,
$\ve{\theta}^\circ$, satisfies:
\begin{eqnarray}
c^* \hat{\X}\hat{\X}^\top \ve{\theta}^\circ + 2m \Gamma_T
\ve{\theta}^\circ & = &
-F'(0) \ve{\mu}_{\hat{S}}\:\:, \label{approxTaylor1}
\end{eqnarray}
where $\ve{\mu}_{\hat{S}} \defeq \sum_i y_i \hat{\ve{x}}_i$ is the
mean operator \cite{pnrcAN}. Let us find the equivalent expression
for loss $\ell_F$ via a series of Taylor-Lagrange expansions, letting
$z_i \defeq y_i {\ve{\theta}^*}^\top \hat{\ve{x}}_i$ for short:
\begin{eqnarray}
\forall i\in [m], \exists c'_i \in F''(\mathbb{I}): F(z_i) & = & F(0) + F'(0) z_i + \frac{c'_i}{2} z^2_i\:\:.
\end{eqnarray}
Define $\ve{c} \in \mathbb{R}_+^m$ the vector with $c_i \defeq
c'_i/2$. It follows that because of eq. (\ref{eqMIN1}),
$\ve{\theta}^*$ satisfies $\sum_i c_i ( {\ve{\theta}^*}^\top
\hat{\ve{x}}_i) \hat{\ve{x}}_i + 2m \Gamma_F \ve{\theta}^* = -F'(0)
\ve{\mu}_{\hat{S}}$, or more concisely,
\begin{eqnarray}
\hat{\X}\mathrm{Diag}(\ve{c})\hat{\X}^\top \ve{\theta}^* + 2m \Gamma_F
\ve{\theta}^* & = &
-F'(0) \ve{\mu}_{\hat{S}}\:\:. \label{approxTaylor2}
\end{eqnarray}
Now, we want $\ve{\theta}^\circ=\ve{\theta}^*$, which imposes from
eqs (\ref{approxTaylor1}) and (\ref{approxTaylor2}), $c^* \hat{\X}\hat{\X}^\top \ve{\theta}^* + 2m \Gamma_T
\ve{\theta}^*  = \hat{\X}\mathrm{Diag}(\ve{c})\hat{\X}^\top \ve{\theta}^* + 2m \Gamma_F
\ve{\theta}^*$, or equivalently, after simplifying,
\begin{eqnarray}
\Gamma_T \ve{\theta}^* & = & \matrice{k} \ve{\theta}^* + \Gamma_F \ve{\theta}^*\:\:,\label{eqDELTA1}
\end{eqnarray}
where 
\begin{eqnarray}
\matrice{k} & \defeq & \hat{\X}\left(\frac{1}{2m}(\mathrm{Diag}(\ve{c} -
c^*\ve{1}))\right)\hat{\X}^\top 
\end{eqnarray} 
is symmetric but not necessarily positive
definite. We clearly have $\ve{\theta}^\top \matrice{k} \ve{\theta}
\geq - \hat{X}_*^2 \sup F''(\mathbb{I})  / 2$ for any unit $\ve{\theta}$. So, if we
fix 
\begin{eqnarray}
\Gamma_T & \defeq & \matrice{k} + \Gamma_F
\end{eqnarray}
after picking $\Gamma_F$ such that its smallest eigenvalue satisfies,
for some fixed $\lambda^\circ > 0$,
\begin{eqnarray}
\lambda_1^\uparrow(\Gamma_F) & \geq & \lambda^\circ + \frac{\hat{X}^2_*}{2} \sup
F''(\mathbb{I}) \:\:,\label{blambdastar}
\end{eqnarray} 
then we
shall have eq. (\ref{eqDELTA1}) ensured with $\Gamma_T$ symmetric
positive definite with $\lambda_1^\uparrow(\Gamma_T) \geq \lambda^\circ$. We can also remark that eq. (\ref{approxTaylor2})
yields, because $\hat{\X}\mathrm{Diag}(\ve{c})\hat{\X}^\top$ is
positive semi-definite,
\begin{eqnarray}
\|\ve{\theta}^*\|_2 & \leq & \frac{|F'(0)|\hat{X}_*}{2\lambda_1^{\uparrow}(\Gamma_F)}\:\:,\label{bsupnorm}
\end{eqnarray}
so we can posit $\Theta_* \defeq
|F'(0)|\hat{X}_*/(2\lambda_1^{\uparrow}(\Gamma_F))$ and in fact we can pick
\begin{eqnarray}
\mathbb{I} & \defeq & \frac{|F'(0)|\hat{X}^2_*}{2\lambda_1^{\uparrow}(\Gamma_F)}
\cdot \left[-1,1\right]\:\:.\label{defII}
\end{eqnarray}
For any finite
$\lambda^\circ, \hat{X}_*$, let us define
\begin{eqnarray}
\mathbb{J}(\lambda^\circ, \hat{X}_*) & \defeq & \left\{z \in \mathbb{R} :
  z \geq \lambda^\circ + \frac{\hat{X}^2_*}{2} \sup F''\left(\lim_{z'
      \rightarrow z} \frac{|F'(0)|\hat{X}^2_*}{2z'}
\cdot \left[-1,1\right]\right)\right\} \cap \mathbb{R}_{+} \:\:.\label{defJJ}
\end{eqnarray}
Picking $\lambda_1^\uparrow(\Gamma_F)$ in $\mathbb{J}(\lambda^\circ,
\hat{X}_*)$ guarantees that it satisfies eq. (\ref{blambdastar}). Let
ue denote for short $\mathbb{J}'$ to be the leftmost set in the
intersection in eq. (\ref{defJJ}). Because the argument of $F''$ is
the same for any $\pm z$, if there exists any $z<0$ in $\mathbb{J}'$,
then $-z$ is also in $\mathbb{J}'$. We 
remark that because $|F'(0)|\ll \infty$, the argument set of $F''(.)$ converges to $\{0\}$ with
$z \rightarrow \pm \infty$; since $F''$ is continuous and
$|F''(0)| = F''(0) \ll \infty$ by assumption, we get that $\mathbb{J}'$ is
non-empty, and so $\mathbb{J}' \cap \mathbb{R}_+$ is non-empty, thus
\begin{eqnarray}
\mathbb{J}(\lambda^\circ, \hat{X}_*) & \neq & \emptyset\:\:.\label{propJEMPTY}
\end{eqnarray}
So, let us define
\begin{eqnarray}
\lambda^* & \defeq & \inf \mathbb{J}(\lambda^\circ, \hat{X}_*) \:\: (\geq 0)\:\:,\label{defLAMBDASTAR}
\end{eqnarray}
removing the dependence of $\lambda^*$ in 
$\lambda^\circ, \hat{X}_*$ for clarity. To summarize, for any $\lambda^\circ > 0$ and any Ridge regularized loss $\loss_F({\hat{S}}, \ve{\theta};
\Gamma_F)$ satisfying $F \in C^2$, $|F'(0)|, F''(0) \ll \infty$ and $\lambda_1^\uparrow(\Gamma_F) \geq \lambda^*$ where
$\lambda^*$ is finite and defined in eq. (\ref{defLAMBDASTAR}), there exists a Taylor loss $\opttaylorloss({\hat{S}}, \ve{\theta};
\Gamma_T)$ such that
\begin{enumerate}
\item $\loss_F({\hat{S}}, \ve{\theta}^*;
\Gamma_F) = \opttaylorloss({\hat{S}}, \ve{\theta}^*;
\Gamma_T)$  where $\ve{\theta}^* \defeq \arg\min_{\ve{\theta}} \loss_F({\hat{S}}, \ve{\theta};
\Gamma_F)$;
\item $\arg\min_{\ve{\theta}} \loss_F({\hat{S}}, \ve{\theta};
\Gamma_F) = \arg\min_{\ve{\theta}} \opttaylorloss({\hat{S}}, \ve{\theta};
\Gamma_T)$;
\item $\lambda_1^\uparrow(\Gamma_T) \geq \lambda^\circ$.
\end{enumerate}
We also check that $a = F(0), b = F'(0)$, and we get the statement of the Theorem when $F$ is convex.\\

\noindent Case 2 --- $F$ not convex. When $F$ is not convex, 
we still have for any $\ve{\theta}^* \in \mathcal{C}$ because $F$ is twice differentiable,
\begin{eqnarray}
\loss_F({\hat{S}}, \ve{\theta}^*;
\Gamma_F) & = & \loss_F({\hat{S}}, \ve{0};
\Gamma_F) + {\ve{\theta}^*}^\top \nabla_{\ve{\theta}} \loss_F({\hat{S}}, \ve{\theta};
\Gamma_F)_{|\ve{\theta} = \ve{0}} + \frac{1}{2} \cdot
{\ve{\theta}^*}^\top \nabla \nabla_{\ve{\theta}} \loss_F({\hat{S}}, \ve{\theta};
\Gamma_F)_{|\ve{\theta} = \ve{u}} {\ve{\theta}^*}\:\:,
\end{eqnarray}
for some $\ve{u} = t \cdot \ve{\theta}^*$ with $t\in [0,1]$, where
$\nabla\nabla$ denote the Hessian, given by
\begin{eqnarray}
\nabla \nabla_{\ve{\theta}} \loss_F({\hat{S}}, \ve{\theta};
\Gamma_F)_{|\ve{\theta} = \ve{u}} & = & \sum_i F''(y_i \ve{u}^\top \hat{\ve{x}}_i) \cdot
\hat{\ve{x}}_i \hat{\ve{x}}^{\top}_i
 + 2 \Gamma_F\:\:,
\end{eqnarray}
positive semi-definite since $\ve{\theta}^* \in \mathcal{C}$. $F$
being $C^2$, $F''$ being continuous, $\ve{\theta}^*$ is a local
minimum of the loss in an open neighborhood $\mathcal{N}(\ve{\theta}^*)$
of $\ve{\theta}^*$. We still can build the equivalent Taylor loss and
first its $L$ part as in eq. (\ref{taylor1appr}). However, $L$ is
\textit{not} necessarily convex this time. The Hessian of the Taylor
loss \textit{regularized} is now
\begin{eqnarray}
\nabla \nabla_{\ve{\theta}} \opttaylorloss({\hat{S}}, \ve{\theta};
\Gamma_T) & = & c^* \sum_i \hat{\ve{x}}_i \hat{\ve{x}}^{\top}_i
 + 2 \Gamma_T\:\:,
\end{eqnarray}
and so to obtain a convex regularized Taylor loss, it is sufficient to
ensure,
for some fixed $\lambda^\circ > 0$,
\begin{eqnarray}
2\lambda_1^\uparrow(\Gamma_T) & \geq & \lambda^\circ + \hat{X}^2_* \sup
F''(\mathbb{I}) \:\:,\label{blambdastar2}
\end{eqnarray} 
which is exactly ineq. (\ref{blambdastar}) with its argument
$\lambda^\circ$ halved. So, the regularized Taylor loss is in fact
convex, and the only other modification is to now ensure $|F''(0)|\ll
\infty$ since $F$ can be concave in $0$.\\

\noindent \textbf{Remark}: in all that follows, we assume without loss of generality that the
mean operator $\ve{\mu}_{\hat{S}} \neq \ve{0}$, which implies, from
eqs (\ref{approxTaylor1}) and (\ref{approxTaylor2}) that $\ve{0}$
cannot be a critical point of the losses.

\subsection{Proof of Lemma \ref{lemmaTaylor2}}\label{proof_lemmaTaylor2}

Since $\psi$ is strictly convex differentiable, its convex conjugate
is $\psi^\star(z) = z \psi'^{-1}(z) - \psi(\psi'^{-1}(z))$, from which
we easily get $F''_{\psi}(z) = 1 / (b_\psi
\psi''(\psi'^{-1}(-z)))$. Because $\psi'$ is concave on $[0,1/2]$,
$\psi''$ is decreasing on $[0,1/2]$ and therefore increasing on
$[1/2, 1]$, achieving its minimum for $\psi'^{-1}(-z) = 1/2$, which
gives $-z = \psi'(1/2) = 0$ and $z=0$ for the arg max of
$F''_\psi(z)$. Hence, $\mathbb{J}(\lambda^\circ, \hat{X}_*)$
becomes more explicit:
\begin{eqnarray}
\mathbb{J}(\lambda^\circ, \hat{X}_*) & \defeq & \left\{z \in \mathbb{R}_+ :
  z \geq \lambda^\circ + \frac{F_\psi''(0) \hat{X}^2_*}{2}\right\}\:\:,\label{defJJ2}
\end{eqnarray}
so we can just pick
\begin{eqnarray}
\lambda^* & \defeq & \lambda^\circ + \frac{F_{\psi}''(0) \hat{X}^2_*}{2} \:\:,
\end{eqnarray}
as claimed.

\subsection{Proof of Theorem \ref{thAPPROX1}}\label{proof_thAPPROX1}

The proof is obtained in three steps: we first define additional
assumptions useful for the proof, then prove a helper Theorem of
independent interest, and finally prove Theorem \ref{thAPPROX1}. We
remind that the Taylor loss we are concerned with (main file, Section
\ref{sec:taylor}) is
\begin{eqnarray}
\taylorlossparams{F(0)}{F'(0)}{c} ({\hat{S}}_t, \ve{\theta}; \gamma,
\Gamma)\:\:,\label{mainLoss}
\end{eqnarray}
with $c\neq 0$ and $t = 0, 1, ..., T$.

\subsubsection{Related notations and additional properties}

\begin{definition}\label{defmeano}
The mean operator associated to $\hat{S}_t$ is $\mathbb{R}^d \ni
\ve{\mu}_t \defeq \sum_i y_i \cdot \hat{\ve{x}}_{ti}$.
\end{definition}
The mean operator is a sufficient statistics for the class in linear
models \cite{pnrcAN}. We can make at this point a remark that is
going to be crucial in our results, and obvious from its definition:
the mean operator is \textit{invariant} to permutations made within
classes, \textit{i.e.} $\ve{\mu}_T = \ve{\mu}_0$ if $\PERM_{*}$ factorizes as two permutations, one affecting the positive
class only, and the other one affecting the negative class only. Since
the optimal classifier for the Taylor loss is a linear mapping of the
mean operator (Lemma \ref{lem11} below), our bounds will appear
significantly better when $\PERM_{*}$
factorizes in such a convenient way.

We now show an additional property of our notations in (main file,
Section \ref{sec:defs}).
\begin{lemma}\label{lemUAUB}
The following holds for any $t \geq 1$:
\begin{eqnarray}
(\hat{\ve{x}}_{t\ua{t}})_\shuffle & = &
(\ve{x}_{\ub{t}})_\shuffle\:\:,\label{eqm1EX1}\\
(\hat{\ve{x}}_{t\va{t}})_\shuffle & = & (\ve{x}_{\vb{t}})_\shuffle\:\:. \label{eqm1EX2}
\end{eqnarray}
\end{lemma}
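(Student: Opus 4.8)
The plan is to prove Lemma~\ref{lemUAUB} by a direct unfolding of the one-step construction of the sequence $\hat{\X}_0, \hat{\X}_1, \ldots, \hat{\X}_T$ together with the definitions of $\ua{t}, \va{t}, \ub{t}, \vb{t}$; no substantive estimate is involved, it is purely a bookkeeping statement. The one observation that does all the work is that, for each $t\geq 1$, the identity $\hat{\X}_{t\shuffle} = \X_\shuffle \prod_{j=1}^t \PERM_j = \hat{\X}_{(t-1)\shuffle}\PERM_t$, combined with the block convention~(\ref{blockm}) (the anchor rows are never touched) and the fact that $\PERM_t$ is the \emph{elementary} permutation transposing columns $\ua{t}$ and $\va{t}$, means that $\hat{\X}_t$ is obtained from $\hat{\X}_{t-1}$ by leaving the anchor block fixed and swapping exactly the two shuffle sub-columns occupying positions $\ua{t}$ and $\va{t}$. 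In particular, for every $i\notin\{\ua{t},\va{t}\}$ one has $(\hat{\ve{x}}_{ti})_\shuffle=(\hat{\ve{x}}_{(t-1)i})_\shuffle$, while the shuffle parts at positions $\ua{t}$ and $\va{t}$ are interchanged.

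First I would record the precise content of the definitions of $\ub{t}$ and $\vb{t}$: they are the indices in $[m]$ identifying which original $\X$-columns' shuffle sub-vectors $\PERM_t$ permutes out of (resp.\ into) slots $\ua{t}$ and $\va{t}$, i.e.\ $(\hat{\ve{x}}_{(t-1)\ua{t}})_\shuffle$ and $(\hat{\ve{x}}_{(t-1)\va{t}})_\shuffle$ equal $(\ve{x}_{\ub{t}})_\shuffle$ and $(\ve{x}_{\vb{t}})_\shuffle$ in the order dictated by that convention; note these indices are well defined because $\hat{\X}_{(t-1)\shuffle}$ is just a column permutation of $\X_\shuffle$. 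Then I would simply compose the two facts above: applying $\PERM_t$ moves the shuffle part out of slot $\va{t}$ (resp.\ $\ua{t}$) in $\hat{\X}_{t-1}$ into slot $\ua{t}$ (resp.\ $\va{t}$) in $\hat{\X}_t$, which yields $(\hat{\ve{x}}_{t\ua{t}})_\shuffle=(\ve{x}_{\ub{t}})_\shuffle$ and $(\hat{\ve{x}}_{t\va{t}})_\shuffle=(\ve{x}_{\vb{t}})_\shuffle$, that is, eqs.~(\ref{eqm1EX1})--(\ref{eqm1EX2}). The claim is local in $t$, so no induction over $t$ is needed; if one wished to trace the shuffle parts all the way back to $\hat{\X}_0=\X$, a routine induction would do it, but that is not what is asked.

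The only real obstacle is notational rather than mathematical: one must keep the anchor/shuffle split straight (by~(\ref{blockm}) the anchor block is inert under every $\PERM_t$, so the statement is genuinely only about the shuffle block), and, more delicately, one must track which of the two swapped shuffle sub-columns is tagged $\ub{t}$ and which $\vb{t}$ so that the transposition lands the subscripts on the correct sides of~(\ref{eqm1EX1})--(\ref{eqm1EX2}). I would pin this orientation down by reading it off the single-transposition picture of Figure~\ref{fig:perm}, whose labels $\ua{t},\ub{t},\va{t},\vb{t}$ are arranged precisely to make the identification transparent, and then the two displayed equalities follow immediately.
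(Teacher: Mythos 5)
Your proposal is correct and matches the paper's approach: the paper states Lemma~\ref{lemUAUB} with no written proof, treating it as an immediate consequence of the one-step update $\hat{\X}_{t\shuffle}=\hat{\X}_{(t-1)\shuffle}\PERM_t$, the inertness of the anchor block under~(\ref{blockm}), and the labeling convention for $\ub{t},\vb{t}$ pinned down by Figure~\ref{fig:perm} and confirmed by Example~\ref{exampleEX1} (where $\ub{1}=\va{1}$, $\vb{1}=\ua{1}$). Your careful unfolding, including the explicit warning that the orientation of $\ub{t}$ versus $\vb{t}$ must be read off the figure so the transposition puts the subscripts on the correct sides, is exactly the bookkeeping the paper relies on implicitly.
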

\begin{example}\label{exampleEX1}
Denote for short $\{0,1\}^{m\times m} \ni \Theta_{u,v} \defeq \ve{1}_{u}\ve{1}^\top_{v} +
\ve{1}_{v}\ve{1}^\top_{u} - \ve{1}_{v}\ve{1}^\top_{v} -
\ve{1}_{u}\ve{1}^\top_{u}$ (symmetric) such that $\ve{1}_u$ is the $u^{th}$
canonical basis vector of $\mathbb{R}^n$. For $t=1$, it follows
\begin{eqnarray}
\ub{1} & = & \va{1}\:\:,\label{eq0EX1}\\
\vb{1} & = & \ua{1}\:\:.\label{eq0EX2}
\end{eqnarray} 
Thus, it follows:
\begin{eqnarray}
\lefteqn{\X_\anchor \Theta_{\ua{1},\va{1}}
  \hat{\X}^\top_{1\shuffle}}\nonumber\\
 & = &
(\ve{x}_{\ua{1}})_\anchor (\ve{x}_{1\va{1}})^\top_\shuffle +
(\ve{x}_{\va{1}})_\anchor (\ve{x}_{1\ua{1}})^\top_\shuffle -
(\ve{x}_{\va{1}})_\anchor (\ve{x}_{1\va{1}})^\top_\shuffle -
(\ve{x}_{\ua{1}})_\anchor (\ve{x}_{1\ua{1}})^\top_\shuffle\nonumber\\
 & = & (\ve{x}_{\ua{1}})_\anchor (\ve{x}_{\vb{1}})^\top_\shuffle +
(\ve{x}_{\va{1}})_\anchor (\ve{x}_{\ub{1}})^\top_\shuffle -
(\ve{x}_{\va{1}})_\anchor (\ve{x}_{\vb{1}})^\top_\shuffle -
(\ve{x}_{\ua{1}})_\anchor
(\ve{x}_{\ub{1}})^\top_\shuffle \label{eq1EX1}\\
 & = & (\ve{x}_{\ua{1}})_\anchor (\ve{x}_{\ua{1}})^\top_\shuffle +
(\ve{x}_{\va{1}})_\anchor (\ve{x}_{\va{1}})^\top_\shuffle -
(\ve{x}_{\va{1}})_\anchor (\ve{x}_{\ua{1}})^\top_\shuffle -
(\ve{x}_{\ua{1}})_\anchor
(\ve{x}_{\va{1}})^\top_\shuffle\label{eq1EX2}\\
 & = & (\ve{x}_{\ua{1}}-\ve{x}_{\va{1}})_\anchor(\ve{x}_{\ua{1}}-\ve{x}_{\va{1}})_\shuffle^\top\:\:.
\end{eqnarray}
In eq. (\ref{eq1EX1}), we have used eqs (\ref{eqm1EX1}, \ref{eqm1EX2})
and in eq. (\ref{eq1EX2}), we have used eqs (\ref{eq0EX1}, \ref{eq0EX2}).
\end{example}
\noindent \textbf{Key matrices} --- The proof of our helper Theorem is relatively heavy in linear
algebra notations: for example, it involves $T$ double applications
of Sherman-Morrison's inversion Lemma. We now define a series of
matrices and vectors that will be most useful to simplify notations
and proofs. 
Letting $\nu' \defeq 2m\gamma / c$ (where $\gamma$ is the parameter of the
Ridge regularization in our Taylor loss 
and $c \neq 0$ is defined in eq. (\ref{mainLoss})), we first define the matrix we will
use most often:
\begin{eqnarray}
\matrice{v}_t &\defeq & \left( \mathrm{sign}(c) \cdot \hat{\X}_t \hat{\X}_t^\top + \nu'\cdot \Gamma
  \right)^{-1} \:\:, t = 0, 1, ..., T\:\:,\label{defVT}
\end{eqnarray}
where $\Gamma$ is the 
Ridge regularization parameter matrix in eq. (\ref{mainLoss}).
Another matrix $\matrice{u}_t$, quantifies precisely the local
mistake made by each elementary permutation. To define it, we first
let (for $t = 1, 2, ..., T$):
\begin{eqnarray}
\ve{a}_t & \defeq & (\ve{x}_{\ua{t}}-\ve{x}_{\va{t}})_\anchor\:\:,\label{defAT}\\
\ve{b}_t & \defeq &
(\ve{x}_{\ub{t}}-\ve{x}_{\vb{t}})_\shuffle\:\:.\label{defST}
\end{eqnarray}
Also, let (for $t = 1, 2, ..., T$)
\begin{eqnarray}
\ve{a}^+_t & \defeq& \left[
\begin{array}{c}
(\ve{x}_{\ua{t}}-\ve{x}_{\va{t}})_\anchor\\\cline{1-1}
\ve{0}
\end{array}
\right] \in \mathbb{R}^d \:\:,\label{defAPLUST}\\
\ve{b}^+_t & \defeq &  \left[
\begin{array}{c}
\ve{0} \\\cline{1-1}
(\ve{x}_{\ub{t}} -\ve{x}_{\vb{t}})_\shuffle
\end{array}
\right] \in \mathbb{R}^d \:\:, \label{defBPLUST}
\end{eqnarray}
and finally (for $t = 1, 2, ..., T$),
\begin{eqnarray}
c_{0,t}
& \defeq & {\ve{a}^+_t}^\top  \matrice{v}_{t-1}
    \ve{a}^+_t \:\:,\\
c_{1,t} & \defeq & {\ve{a}^+_t}^\top \matrice{v}_{t-1}
    \ve{b}^+_t \:\:,\\
c_{2,t} & \defeq & {\ve{b}^+_t}^\top  \matrice{v}_{t-1}
    \ve{b}^+_t \:\:.
\end{eqnarray}
We now define $\matrice{u}_t$ as the following block matrix for $t = 1, 2, ..., T$:
\begin{eqnarray}
\matrice{u}_t & \defeq & \frac{1}{(1-\mathrm{sign}(c)\cdot c_{1,t})^2-c_{0,t}c_{2,t}}\cdot \left[\begin{array}{c|c}
c_{2,t} \cdot \ve{a}_t \ve{a}_t^\top & (1-\mathrm{sign}(c)\cdot c_{1,t}) \cdot \ve{a}_t
\ve{b}_t^\top \\ \cline{1-2}
(1-\mathrm{sign}(c)\cdot c_{1,t}) \cdot \ve{b}_t
\ve{a}_t^\top & c_{0,t} \cdot \ve{b}_t\ve{b}_t^\top 
\end{array}\right]\:\:.\label{defUT}
\end{eqnarray}
$\matrice{u}_t$ can be computed only when
$(1-\mathrm{sign}(c)\cdot c_{1,t})^2 \neq c_{0,t}c_{2,t}$. This shall be the subject of the
\textit{invertibility} assumption below.
Hereafter, we suppose without loss of generality that $\ve{b}_t \neq
\ve{0}$, since otherwise permutations would make no mistakes on the
shuffle part.

There is one important thing to remark on $\matrice{u}_t$: it is
defined from the indices $\ua{t}$ and $\va{t}$ in $\anchor$ that are affected by
$\PERM_{t}$. Hence, $\matrice{u}_1$ collects the two first such
indices (see Figure \ref{fig:perm}).
We also define matrix $\Lambda_t$ as follows:
\begin{eqnarray}
\Lambda_t & \defeq & -\frac{F'(0)}{|c|} \cdot \matrice{v}_{t} \matrice{u}_{t+1} \:\:, t = 0,
1, ..., T-1\:\:,\label{defL1}
\end{eqnarray}
where parameters $c, F'(0)$ are those defined in the Taylor loss in
eq. (\ref{mainLoss}). To finish up with matrices, we define a doubly indexed matrices that
shall be crucial to our proofs, $\matrice{h}_{i,j}$ for $0\leq
j\leq i \leq T$:
\begin{eqnarray}
\matrice{h}_{i,j} & \defeq & \left\{
\begin{array}{ccl}
\prod_{k=j}^{i-1}(\matrice{i}_d + \Lambda_k) & \mbox{ if }
& 0\leq j < i\\
\matrice{i}_d & \mbox{ if }
& j = i
\end{array}
\right.\:\:.\label{defHIJ}
\end{eqnarray}

\noindent \textbf{Key vectors} --- we let
\begin{eqnarray}
\ve{\epsilon}_{t} & \defeq & \ve{\mu}_{t+1} -
  \ve{\mu}_{t} \:\:, t = 0, 1, ..., T-1\:\:,\label{defEPSILONT}
\end{eqnarray}
which is the difference between two successive mean operators, and
\begin{eqnarray}
\ve{\lambda}_t & \defeq & -\frac{F'(0)}{|c|}\cdot \matrice{v}_{t+1} \ve{\epsilon}_{t}\:\:, t
= 0, 1, ..., T-1\:\:.\label{defL2}
\end{eqnarray}
\begin{figure}[t]
\centering
\includegraphics[trim=30bp 540bp 630bp
30bp,clip,width=.60\linewidth]{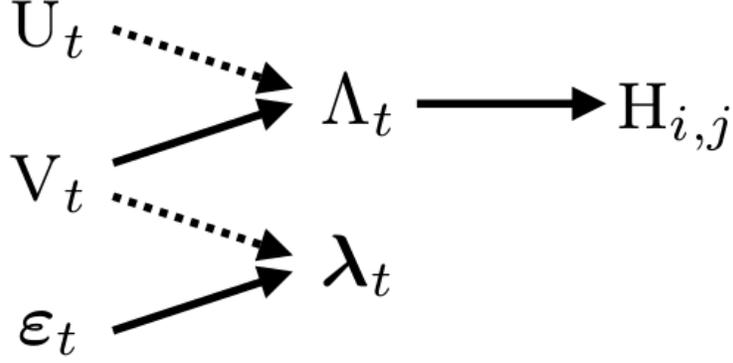} 
\caption{Summary of our key notations on matrices and vectors, and
  dependencies. The dashed arrow means indexes do not match (eq. (\ref{defL2})).\label{fig:notations}}
\end{figure}
Figure \ref{fig:notations} summarizes our key notations in this
Section. We are now ready to proceed through the proof of our key helper Theorem.

\subsubsection{Helper Theorem}

In this Section, we first show (Theorem \ref{thmeq1} below) that under
lightweight assumptions to ensure the existence of $\matrice{v}_t$,
the difference between two successive optimal classifiers in the
progressive computation of the overall permutation matrix that
generates the errors is \textit{exactly} given by:
\begin{eqnarray}
\ve{\theta}^*_{t+1} - \ve{\theta}^*_t & = & \nu\cdot \matrice{v}_t
\matrice{u}_{t+1} \ve{\theta}^*_t + \nu\cdot \matrice{v}_{t+1}
\ve{\epsilon}_{t}\nonumber\\
 & = & \Lambda_t \ve{\theta}^*_t
+ \ve{\lambda}_t\:\:, \forall t\geq 0\:\:,\label{sumeq1}
\end{eqnarray}
where $\Lambda_t, \ve{\epsilon}_{t}, \ve{\lambda}_t$ are defined in
eqs (\ref{defEPSILONT}, \ref{defL1},
\ref{defL2}) and $\nu$ is defined in Lemma \ref{lem11} below. This holds regardless of the permutation matrices in the
sequence.

We start by the trivial solutions to the minimization of a convex Taylor loss $\taylorlossparams{F(0)}{F'(0)}{c} ({\hat{S}}_t, \ve{\theta}; \gamma,
\Gamma)$ for
all $t = 1, 2, ..., T$.
\begin{lemma}\label{lem11}
The minimum of any convex Taylor loss $\taylorlossparams{F(0)}{F'(0)}{c} ({\hat{S}}_t, \ve{\theta}; \gamma,
\Gamma)$, for $c\in \mathbb{R}_*$, is
\begin{eqnarray}
\ve{\theta}^*_t & = & \nu \cdot \left( \mathrm{sign}(c) \cdot \hat{\X}_t \hat{\X}_t^\top +
  \nu' \cdot \Gamma
  \right)^{-1} \ve{\mu}(\hat{S})_t \nonumber\\
 & = & \nu \cdot \matrice{v}_t \ve{\mu}_t\:\:,
\end{eqnarray}
with 
\begin{eqnarray}
\nu & \defeq & -\frac{F'(0)}{|c|}\:\:,\nonumber\\
\nu' & \defeq & \frac{2m\gamma}{|c|}\:\:,
\end{eqnarray}
meeting $\nu, \nu' \neq 0$.
\end{lemma}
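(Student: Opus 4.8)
The plan is to observe that, once we use $y_i^2=1$, the Ridge-regularized Taylor loss is an explicit convex quadratic form in $\ve{\theta}$, so its minimizer is read off from a single linear system. For the sample $\hat{S}_t$ one has $\sum_i y_i\ve{\theta}^\top\hat{\ve{x}}_{ti}=\ve{\theta}^\top\ve{\mu}_t$, with $\ve{\mu}_t$ the mean operator of $\hat{S}_t$ (Definition~\ref{defmeano}), and $\sum_i(y_i\ve{\theta}^\top\hat{\ve{x}}_{ti})^2=\ve{\theta}^\top\hat{\X}_t\hat{\X}_t^\top\ve{\theta}$, so that
\begin{eqnarray}
\taylorlossparams{F(0)}{F'(0)}{c}(\hat{S}_t,\ve{\theta};\gamma,\Gamma) & = & F(0)+\frac{F'(0)}{m}\ve{\theta}^\top\ve{\mu}_t+\frac{c}{m}\ve{\theta}^\top\hat{\X}_t\hat{\X}_t^\top\ve{\theta}+\gamma\ve{\theta}^\top\Gamma\ve{\theta}\:\:.\nonumber
\end{eqnarray}
Its quadratic part is a positive multiple of $c\hat{\X}_t\hat{\X}_t^\top+m\gamma\Gamma$, which I claim is positive definite: for $c>0$ this is immediate from $\gamma>0$ and $\Gamma$ positive definite, and for $c<0$ it is exactly what the regularization strength guaranteed by Theorem~\ref{lemmaTaylor}(iii) together with the data--model calibration assumption (Definition~\ref{def:DMC}) is there to ensure. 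Hence the loss is strictly convex with a unique minimizer, equal to its unique critical point, and the matrix $\mathrm{sign}(c)\hat{\X}_t\hat{\X}_t^\top+\nu'\Gamma$ defining $\matrice{v}_t$ in eq.~(\ref{defVT}) is likewise positive definite, in particular invertible.

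The second step is simply to set the gradient to zero. This is one linear equation in $\ve{\theta}$, and after rescaling it so that the matrix multiplying $\ve{\theta}$ is put in the normalized form $\matrice{v}_t^{-1}=\mathrm{sign}(c)\hat{\X}_t\hat{\X}_t^\top+\nu'\Gamma$ of eq.~(\ref{defVT}), with $\nu'\defeq 2m\gamma/|c|$, it reads $\matrice{v}_t^{-1}\ve{\theta}=\nu\,\ve{\mu}_t$ for $\nu\defeq -F'(0)/|c|$; inverting yields $\ve{\theta}^*_t=\nu\,\matrice{v}_t\ve{\mu}_t$, the asserted closed form. The constants are nonzero: $\nu'\neq 0$ because $\gamma>0$ and $c\in\mathbb{R}_*$, and $\nu\neq 0$ because $F'(0)\neq 0$ --- indeed, with $\ve{\mu}_t\neq\ve{0}$ (a standing assumption), $\ve{\theta}^*_t=\nu\matrice{v}_t\ve{\mu}_t=\ve{0}$ would force $F'(0)=0$, contradicting the standing assumption that $\ve{0}$ is not a critical point of the Taylor loss; concretely $F'(0)=-1/(2b_\psi)\neq 0$ for every \rspl.

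I do not expect a genuine obstacle here: the argument is an elementary minimization of a convex quadratic. The single point worth a moment's care --- which I would not belabor --- is the justification that the critical point is the \emph{global} minimum, equivalently that $\mathrm{sign}(c)\hat{\X}_t\hat{\X}_t^\top+\nu'\Gamma$ is nonsingular. I would settle this by invoking convexity of the Taylor loss together with the lower bound on $\gamma\lambda_1^\uparrow(\Gamma)$ built into Theorem~\ref{lemmaTaylor}(iii) and the data--model calibration assumption under which the lemma is used downstream, noting that for $c>0$ nothing beyond $\gamma>0$ and $\Gamma$ positive definite is needed.
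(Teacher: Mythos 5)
Your argument is the same one the paper gives: set the gradient of the (convex, Ridge-regularized) quadratic to zero and solve the resulting linear system. You fill in two details the paper's one-line proof skips --- why $\mathrm{sign}(c)\hat{\X}_t\hat{\X}_t^\top+\nu'\Gamma$ is nonsingular (trivial for $c>0$; for $c<0$ correctly attributed to the regularization strength built into Theorem~\ref{lemmaTaylor}(iii) and Definition~\ref{def:DMC}), and why $\nu\neq 0$ (correctly reduced to $F'(0)\neq 0$, itself enforced by the standing assumption that $\ve{0}$ is not optimal, with the explicit value $F'(0)=-1/(2b_\psi)$ for \rspl s). Both of these are sound and strengthen, rather than deviate from, the paper's approach.

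One caution, which applies equally to the paper's own statement: you assert, without writing out the differentiation, that after rescaling the first-order condition "reads $\matrice{v}_t^{-1}\ve{\theta}=\nu\ve{\mu}_t$" with $\nu=-F'(0)/|c|$, $\nu'=2m\gamma/|c|$. If one actually differentiates the Taylor loss as written in eq.~(\ref{ridgeregTaylorLoss}) (quadratic coefficient $c/m$, regularizer $\gamma\ve{\theta}^\top\Gamma\ve{\theta}$), the stationarity condition is $2c\hat{\X}_t\hat{\X}_t^\top\ve{\theta}+2m\gamma\Gamma\ve{\theta}=-F'(0)\ve{\mu}_t$, and normalizing gives $\nu=-F'(0)/(2|c|)$, $\nu'=m\gamma/|c|$ --- a factor of 2 off from the stated constants. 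The paper's eq.~(\ref{approxTaylor1b}) (and eq.~(\ref{approxTaylor1}) it derives from) carries the same factor-of-2 slip, so you are consistent with the paper, but a careful write-up of this step should either adopt the convention $c/(2m)$ for the quadratic term or adjust $\nu,\nu'$ accordingly; as it stands you have absorbed the paper's stated values rather than rederived them.
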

\begin{proof}
We reformulate eq. (\ref{approxTaylor1}): $\ve{\theta}^*_t$ satisfies:
\begin{eqnarray}
c \hat{\X}_t\hat{\X}_t^\top \ve{\theta}_t^* + 2m \gamma \cdot \Gamma
\ve{\theta}_t^* & = &
-F'(0) \cdot \ve{\mu}_t\:\:, \label{approxTaylor1b}
\end{eqnarray}
giving $\ve{\theta}_t^* = -F'(0) \cdot (c \hat{\X}_t\hat{\X}_t^\top + 2m \gamma \cdot
\Gamma)^{-1} \ve{\mu}_t = \nu \cdot \left( \mathrm{sign}(c) \cdot \hat{\X}_t \hat{\X}_t^\top +
  \nu' \cdot \Gamma
  \right)^{-1} \ve{\mu}(\hat{S})_t$, as claimed.
\end{proof}
\begin{lemma}\label{lemCONDVT}
Suppose $\matrice{v}_{t-1}$ exists. Then $\matrice{v}_{t}$ exists if
the following holds:
\begin{eqnarray}
\left\{
\begin{array}{rcl}
\mathrm{sign}(c) \cdot c_{1,t} & \neq & 1 \:\:,\label{cond1}\\
(1-\mathrm{sign}(c)  \cdot c_{1,t})^2 & \neq & c_{0,t}c_{2,t}\:\:.\label{cond2}
\end{array}
\right.
\end{eqnarray}
\end{lemma}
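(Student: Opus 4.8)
The plan is to exhibit $\matrice{v}_t^{-1}$ as a symmetric rank-two perturbation of $\matrice{v}_{t-1}^{-1}$ and then to process that update as two successive rank-one updates, applying the Sherman--Morrison formula at each step; the two nonvanishing conditions in the statement are exactly the two "invertibility of the denominator" conditions produced by these two applications. Write $\matrice{a}_t \defeq \mathrm{sign}(c)\cdot\hat{\X}_t\hat{\X}_t^\top + \nu'\cdot\Gamma$, so that $\matrice{v}_t = \matrice{a}_t^{-1}$ whenever $\matrice{a}_t$ is invertible, and $\matrice{a}_{t-1}$ is invertible by hypothesis (that is what "$\matrice{v}_{t-1}$ exists" means), with $\matrice{v}_{t-1}=\matrice{a}_{t-1}^{-1}$ symmetric since $\matrice{a}_{t-1}$ is.

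\noindent\textbf{Step 1 (the rank-two update).} Using the block decomposition $\hat{\X}_t = [\X_\anchor^\top \mid \hat{\X}_{t\shuffle}^\top]^\top$, the recursion $\hat{\X}_{t\shuffle}=\hat{\X}_{(t-1)\shuffle}\PERM_t$, and the elementary-transposition structure $\PERM_t = \matrice{i}_m + \Theta_{\ua{t},\va{t}}$ from Example \ref{exampleEX1}, one checks that the block $\X_\anchor\X_\anchor^\top$ is unchanged, that $\hat{\X}_{t\shuffle}\hat{\X}_{t\shuffle}^\top$ is unchanged because $\PERM_t$ is orthogonal, and that the off-diagonal block gains exactly $\ve{a}_t\ve{b}_t^\top$ (this is the computation carried out in Example \ref{exampleEX1}, with the bookkeeping of which shuffle column sits at positions $\ua{t},\va{t}$ before and after $\PERM_t$ supplied by Lemma \ref{lemUAUB}). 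Hence, with $\ve{a}^+_t,\ve{b}^+_t$ as in eqs (\ref{defAPLUST})--(\ref{defBPLUST}),
\[
\matrice{a}_t \;=\; \matrice{a}_{t-1} \;+\; \mathrm{sign}(c)\bigl(\ve{a}^+_t{\ve{b}^+_t}^\top + \ve{b}^+_t{\ve{a}^+_t}^\top\bigr),
\]
a symmetric perturbation of rank at most two.

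\noindent\textbf{Step 2 (two Sherman--Morrison applications).} Add the two rank-one pieces one at a time. Applying Sherman--Morrison to form $\matrice{b}\defeq \matrice{a}_{t-1}+\mathrm{sign}(c)\,\ve{a}^+_t{\ve{b}^+_t}^\top$: since $\matrice{v}_{t-1}$ is symmetric, ${\ve{b}^+_t}^\top\matrice{v}_{t-1}\ve{a}^+_t = c_{1,t}$, so $\matrice{b}$ is invertible precisely when $1-\mathrm{sign}(c)\,c_{1,t}\neq 0$, i.e.\ condition (\ref{cond1}) (this is the point at which the sign convention fixed in Step 1 pins down the minus sign rather than a plus), and Sherman--Morrison then gives an explicit formula for $\matrice{b}^{-1}$ in terms of $\matrice{v}_{t-1}$ and $c_{1,t}$. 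Applying Sherman--Morrison once more to form $\matrice{a}_t = \matrice{b}+\mathrm{sign}(c)\,\ve{b}^+_t{\ve{a}^+_t}^\top$, invertibility holds iff $1-\mathrm{sign}(c)\,{\ve{a}^+_t}^\top\matrice{b}^{-1}\ve{b}^+_t \neq 0$; substituting the formula for $\matrice{b}^{-1}$ and using $c_{0,t}={\ve{a}^+_t}^\top\matrice{v}_{t-1}\ve{a}^+_t$, $c_{2,t}={\ve{b}^+_t}^\top\matrice{v}_{t-1}\ve{b}^+_t$, this quantity simplifies to
\[
\frac{(1-\mathrm{sign}(c)\,c_{1,t})^2 - c_{0,t}c_{2,t}}{1-\mathrm{sign}(c)\,c_{1,t}},
\]
which is well defined and nonzero exactly when both (\ref{cond1}) and (\ref{cond2}) hold. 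Therefore $\matrice{a}_t$ is invertible and $\matrice{v}_t=\matrice{a}_t^{-1}$ exists. Note that the numerator here is precisely the denominator appearing in the definition (\ref{defUT}) of $\matrice{u}_t$, so (\ref{cond1})--(\ref{cond2}) are also exactly the conditions under which the Woodbury expression for $\matrice{v}_t$ in terms of $\matrice{v}_{t-1}$ and $\matrice{u}_t$ is meaningful, which is what the helper Theorem will use.

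\noindent\textbf{Main obstacle.} The only genuinely delicate part is Step 1: verifying that the rank-two correction is exactly $\mathrm{sign}(c)\bigl(\ve{a}^+_t{\ve{b}^+_t}^\top+\ve{b}^+_t{\ve{a}^+_t}^\top\bigr)$. This forces one to track carefully which shuffle column occupies positions $\ua{t}$ and $\va{t}$ in $\hat{\X}_{t-1}$ versus $\hat{\X}_t$ (via Lemma \ref{lemUAUB}) and to keep the signs straight through $\PERM_t=\matrice{i}_m+\Theta_{\ua{t},\va{t}}$ and the transposition $\Theta_{\ua{t},\va{t}}^2=-2\Theta_{\ua{t},\va{t}}$; getting exactly this identity is what produces the minus sign in $1-\mathrm{sign}(c)\,c_{1,t}$. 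Once the identity is established, the remainder is a routine double application of Sherman--Morrison.
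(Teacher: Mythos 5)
Your proof takes essentially the same route as the paper's: you decompose the symmetric rank-two perturbation of $\matrice{v}_{t-1}^{-1}$ into two successive rank-one updates and read off the two invertibility conditions from the Sherman--Morrison denominators, which is exactly what the paper does (the paper phrases it via Sylvester's determinant formula, but the two nonvanishing factors it produces are precisely your two Sherman--Morrison denominators, so this is the same argument, not a different one). Your final determinant/denominator expression $\bigl((1-\mathrm{sign}(c)\,c_{1,t})^2-c_{0,t}c_{2,t}\bigr)/(1-\mathrm{sign}(c)\,c_{1,t})$ matches the paper's $\mathrm{det}(\matrice{v}_t^{-1})\cdot\mathrm{det}(\matrice{v}_{t-1})$.

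One sign slip to fix in Step 1: with the paper's conventions (eqs.~(\ref{defAT})--(\ref{defBPLUST}), Lemma~\ref{lemUAUB}, and the computation in Example~\ref{exampleEX1}), the off-diagonal block gains $-\ve{a}_t\ve{b}_t^\top$, so $\hat{\X}_t\hat{\X}_t^\top=\hat{\X}_{t-1}\hat{\X}_{t-1}^\top-\ve{a}^+_t{\ve{b}^+_t}^\top-\ve{b}^+_t{\ve{a}^+_t}^\top$ and hence
\begin{equation*}
\matrice{a}_t \;=\; \matrice{a}_{t-1}\;-\;\mathrm{sign}(c)\bigl(\ve{a}^+_t{\ve{b}^+_t}^\top+\ve{b}^+_t{\ve{a}^+_t}^\top\bigr),
\end{equation*}
as in the paper's eq.~(\ref{eqDEFVT}), not with a $+\mathrm{sign}(c)$ factor as you wrote. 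Your Step 2 is then internally consistent: with the minus sign, Sherman--Morrison applied to $\matrice{b}=\matrice{a}_{t-1}-\mathrm{sign}(c)\,\ve{a}^+_t{\ve{b}^+_t}^\top$ gives denominator $1-\mathrm{sign}(c)\,c_{1,t}$, which is exactly what you used; with the $+$ you wrote in Step 1 it would instead be $1+\mathrm{sign}(c)\,c_{1,t}$, contradicting condition~(\ref{cond1}). So this is a typo rather than a gap in the argument, but it is worth correcting since you explicitly comment on the sign's role.
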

\begin{proof}
Throughout the proof, we let
\begin{eqnarray}
\varsigma & \defeq & \mathrm{sign}(c) 
\end{eqnarray}
for short.
We know that $\hat{\X}_{t}$ is obtained from $\hat{\X}_{t-1}$
  after permuting the shuffle part of observations at indexes $\ua{t}$ and $\va{t}$ in
  $\hat{\X}_{(t-1) {\shuffle}}$ by $\PERM_{t}$ (see Figure \ref{fig:perm}). So,
\begin{eqnarray}
\hat{\X}_{t{\shuffle}} & = & \hat{\X}_{(t-1){\shuffle}} + \hat{\X}_{(t-1){\shuffle}} (\PERM_t -
\matrice{i}_n)\nonumber\\
 & = & \hat{\X}_{(t-1){\shuffle}} + \hat{\X}_{(t-1){\shuffle}}
 (\ve{1}_{\ua{t}}\ve{1}^\top_{\va{t}} + \ve{1}_{\va{t}}\ve{1}^\top_{\ua{t}} - \ve{1}_{\va{t}}\ve{1}^\top_{\va{t}} - \ve{1}_{\ua{t}}\ve{1}^\top_{\ua{t}})\:\:,\label{eqth1PR1}
\end{eqnarray}
where $\ve{1}_u \in \mathbb{R}^n$ is the $u^{th}$ canonical basis
vector. We also have 
\begin{eqnarray}
\hat{\X}_{t}\hat{\X}_{t}^\top & = & \left[
\begin{array}{c|c}
\X_\anchor \X_\anchor^\top  & \X_\anchor \hat{\X}_{t\shuffle}^\top \\\cline{1-2}
\hat{\X}_{t\shuffle}\X_\anchor^\top & \hat{\X}_{t\shuffle} \hat{\X}_{t\shuffle}^\top
\end{array}
\right]\nonumber\\
 & = & \left[
\begin{array}{c|c}
\X_\anchor \X_\anchor^\top  & \X_\anchor \hat{\X}_{t\shuffle}^\top \\\cline{1-2}
\hat{\X}_{t\shuffle}\X_\anchor^\top &\hat{\X}_{(t-1)\shuffle} \PERM_{t} \PERM^\top_{t} \hat{\X}_{(t-1)\shuffle}^\top
\end{array}
\right]\nonumber\\
 & = & \left[
\begin{array}{c|c}
\X_\anchor \X_\anchor^\top  & \X_\anchor \hat{\X}_{t\shuffle}^\top \\\cline{1-2}
\hat{\X}_{t\shuffle}\X_\anchor^\top &\hat{\X}_{(t-1)\shuffle} \hat{\X}_{(t-1)\shuffle}^\top
\end{array}
\right]\label{eqth1PR2}\:\:,
\end{eqnarray}
because the
inverse of a permutation matrix is its transpose. We recall that
$\X_\anchor$ does not change throughout permutations, only
$\X_\shuffle$ does. Hence,
\begin{eqnarray}
\hat{\X}_{t}\hat{\X}_{t}^\top & = & \hat{\X}_{t-1}\hat{\X}_{t-1}^\top + \left[
\begin{array}{c|c}
0  & \X_\anchor (\hat{\X}_{t\shuffle}-\X _{(t-1)\shuffle})^\top \\\cline{1-2}
(\hat{\X}_{t\shuffle}-\X _{(t-1)\shuffle}) \X_\anchor^\top & 0
\end{array}
\right]\nonumber\\
 & = & \hat{\X}_{t-1}\hat{\X}_{t-1}^\top + \left[
\begin{array}{c|c}
0  & \X_\anchor \Theta_{\ua{t},\va{t}} \hat{\X}^\top_{(t-1)\shuffle}\\\cline{1-2}
\hat{\X}_{(t-1)\shuffle} \Theta_{\ua{t},\va{t}} \X_\anchor^\top & 0
\end{array}
\right]\:\:,
\end{eqnarray}
with $\Theta_{\ua{t},\va{t}} \defeq
\ve{1}_{\ua{t}}\ve{1}^\top_{\va{t}} +
\ve{1}_{\va{t}}\ve{1}^\top_{\ua{t}} -
\ve{1}_{\va{t}}\ve{1}^\top_{\va{t}} -
\ve{1}_{\ua{t}}\ve{1}^\top_{\ua{t}}$ (symmetric, see
eq. (\ref{eqth1PR1}) and example \ref{exampleEX1}). 
Now, remark that
\begin{eqnarray}
\lefteqn{\X_\anchor \Theta_{\ua{t},\va{t}}
  \hat{\X}^\top_{(t-1)\shuffle}}\nonumber\\
 & = & \X_\anchor (\ve{1}_{\ua{t}}\ve{1}^\top_{\va{t}} +
\ve{1}_{\va{t}}\ve{1}^\top_{\ua{t}} -
\ve{1}_{\va{t}}\ve{1}^\top_{\va{t}} -
\ve{1}_{\ua{t}}\ve{1}^\top_{\ua{t}}) \hat{\X}^\top_{t\shuffle} \nonumber\\
 & = & (\ve{x}_{\ua{t}})_\anchor (\ve{x}_{t\va{t}})_\shuffle^\top +  (\ve{x}_{\va{t}})_\anchor
 (\ve{x}_{t\ua{t}})_\shuffle^\top -  (\ve{x}_{\va{t}})_\anchor
 (\ve{x}_{t\va{t}})_\shuffle^\top -  (\ve{x}_{\ua{t}})_\anchor
 (\ve{x}_{t\ua{t}})_\shuffle^\top\nonumber\\
 & = & (\ve{x}_{\ua{t}})_\anchor (\ve{x}_{\vb{t}})_\shuffle^\top +  (\ve{x}_{\va{t}})_\anchor
 (\ve{x}_{\ub{t}})_\shuffle^\top -  (\ve{x}_{\va{t}})_\anchor
 (\ve{x}_{\vb{t}})_\shuffle^\top -  (\ve{x}_{\ua{t}})_\anchor
 (\ve{x}_{\ub{t}})_\shuffle^\top\label{eqSIMPL1}\\
 & =& -((\ve{x}_{\ua{t}})_\anchor-(\ve{x}_{\va{t}})_\anchor)((\ve{x}_{\ub{t}})_\shuffle-(\ve{x}_{\vb{t}})_\shuffle)^\top\nonumber\\
 & =&
 -(\ve{x}_{\ua{t}}-\ve{x}_{\va{t}})_\anchor(\ve{x}_{\ub{t}}-\ve{x}_{\vb{t}})_\shuffle^\top
 = -\ve{a}_{t} \ve{b}_{t}^\top\:\:.
\end{eqnarray}
Eq. (\ref{eqSIMPL1}) holds because of Lemma \ref{lemUAUB}. We finally get 
\begin{eqnarray}
\hat{\X}_{t}\hat{\X}_{t}^\top & = & \hat{\X}_{t-1}\hat{\X}_{t-1}^\top - \varsigma \cdot \ve{a}^+_t
{\ve{b}^+_t}^\top - \varsigma \cdot  {\ve{b}^+_t}
{{\ve{a}^+_t}}^\top\:\:,\label{eqXTXTM1}
\end{eqnarray}
and so we have
\begin{eqnarray}
\matrice{v}_t & = & \left(\matrice{v}^{-1}_{t-1} - \varsigma \cdot \ve{a}^+_t
{\ve{b}^+_t}^\top - \varsigma \cdot {\ve{b}^+_t}
{{\ve{a}^+_t}}^\top\right)^{-1}\label{eqDEFVT}\:\:.
\end{eqnarray}
We analyze when $\matrice{v}_t$ can be computed. First notice that
assuming $\matrice{v}_{t-1}$ exists implies its inverse also exists, and so
\begin{eqnarray}
\mathrm{det}(\matrice{v}^{-1}_{t-1} - \varsigma \cdot \ve{a}^+_t
{\ve{b}^+_t}^\top) & = & \mathrm{det}(\matrice{v}^{-1}_{t-1})\mathrm{det}(\matrice{i}_d - \varsigma \cdot \matrice{v}_{t-1}\ve{a}^+_t
{\ve{b}^+_t}^\top) \nonumber\\
 & = &  \mathrm{det}(\matrice{v}^{-1}_{t-1}) (1 -
 \varsigma \cdot {\ve{b}^+_t}^\top\matrice{v}_{t-1}\ve{a}^+_t) \nonumber\\
 & = &
 \mathrm{det}(\matrice{v}^{-1}_{t-1}) (1-\varsigma \cdot  c_{1,t})\:\:,\label{eqSYL0}
\end{eqnarray}
where the middle identity comes from Sylvester's determinant
formula. So, if in addition $1-\varsigma \cdot  c_{1,t}\neq 0$, then
\begin{eqnarray}
\lefteqn{\mathrm{det}\left(\matrice{v}^{-1}_{t-1} - \varsigma \cdot \ve{a}^+_t
{\ve{b}^+_t}^\top - \varsigma \cdot {\ve{b}^+_t}
{{\ve{a}^+_t}}^\top\right)}\nonumber\\
 & = & \mathrm{det}(\matrice{v}^{-1}_{t-1} - \varsigma \cdot \ve{a}^+_t
{\ve{b}^+_t}^\top) \mathrm{det}\left(\matrice{i}_d - \varsigma \cdot \left(\matrice{v}^{-1}_{t-1} - \varsigma \cdot \ve{a}^+_t
{\ve{b}^+_t}^\top\right) {\ve{b}^+_t}
{{\ve{a}^+_t}}^\top\right) \nonumber\\
 &= & \mathrm{det}(\matrice{v}^{-1}_{t-1}) (1-\varsigma \cdot c_{1,t}) \mathrm{det}\left(\matrice{i}_d - \varsigma \cdot \left(\matrice{v}^{-1}_{t-1} - \varsigma \cdot \ve{a}^+_t
{\ve{b}^+_t}^\top\right)^{-1} {\ve{b}^+_t}
{{\ve{a}^+_t}}^\top\right) \label{eqSYL2}\\
 & = & \mathrm{det}(\matrice{v}^{-1}_{t-1}) (1-\varsigma \cdot c_{1,t}) \left(1 - \varsigma \cdot {{\ve{a}^+_t}}^\top\left(\matrice{v}^{-1}_{t-1} - \varsigma \cdot \ve{a}^+_t
{\ve{b}^+_t}^\top\right)^{-1} {\ve{b}^+_t}\right)  \label{eqSYL3}\\
 & = & \mathrm{det}(\matrice{v}^{-1}_{t-1}) (1-\varsigma \cdot c_{1,t}) \left(1 -\varsigma \cdot 
   {{\ve{a}^+_t}}^\top\left(\matrice{v}_{t-1} + \frac{\varsigma}{1 -\varsigma \cdot 
       {\ve{b}^+_t}^\top \matrice{v}_{t-1}\ve{a}^+_t}\cdot
     \matrice{v}_{t-1}\ve{a}^+_t{\ve{b}^+_t}^\top\matrice{v}_{t-1}\right)
   {\ve{b}^+_t}\right) \label{eqSYL4}\\
 & = & \mathrm{det}(\matrice{v}^{-1}_{t-1}) (1-\varsigma \cdot c_{1,t}) \left(1 -\varsigma \cdot 
   c_{1,t} - \varsigma^2 \cdot \frac{c_{0,t} c_{2,t}}{1 -\varsigma \cdot 
       c_{1,t}}\right) \nonumber\\
 & = & \frac{ (1 -\varsigma \cdot 
   c_{1,t})^2 - c_{0,t} c_{2,t}}{\mathrm{det}(\matrice{v}_{t-1}) }\:\:.
\end{eqnarray}
Here, eq. (\ref{eqSYL2}) comes from
eq. (\ref{eqSYL0}). Eq. (\ref{eqSYL3}) is another application of
Sylvester's determinant formula. Eq. (\ref{eqSYL3}) is
Sherman-Morrison formula and the last equation uses the fact that
$\varsigma^2 = 1$. We immediately conclude on Lemma \ref{lemCONDVT}.
\end{proof}
If we now assume without loss of generality that $\matrice{v}_0$
exists --- which boils down to taking $\gamma >0, \Gamma \succ 0$ ---,
then we get the existence of the complete sequence of matrices
$\matrice{v}_{t}$ (and thus the existence of the sequence of optimal classifiers
$\ve{\theta}^*_0, \ve{\theta}^*_1, ...$) provided the following \textbf{invertibility}
condition is satisfied.
\begin{mdframed}[style=MyFrame]
(\textbf{invertibility}) For any $t\geq 1$, $(1-\mathrm{sign}(c) \cdot c_{1,t})^2 \not\in \{0, c_{0,t}c_{2,t}\}$.
\end{mdframed}
We shall check later (Corollary \ref{corBOUNDCT}) that the invertibility condition
indeed holds in our setting.
\begin{theorem}\label{thmeq1}
Suppose the invertibility assumption holds. Then we have:
\begin{eqnarray}
\frac{1}{\nu}\cdot( \ve{\theta}^*_{t+1} - \ve{\theta}^*_t) & = & \matrice{v}_t \matrice{u}_{t+1} \ve{\theta}^*_t + \matrice{v}_{t+1} \ve{\epsilon}_{t}\nonumber\:\:, \forall t\geq 0\:\:,
\end{eqnarray}
where $\ve{\epsilon}_{t}$ is defined in eq. (\ref{defEPSILONT}).
\end{theorem}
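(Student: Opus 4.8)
The plan is to start from the closed forms of the optimal classifiers. By Lemma \ref{lem11}, $\ve{\theta}^*_t = \nu\matrice{v}_t\ve{\mu}_t$ and $\ve{\theta}^*_{t+1} = \nu\matrice{v}_{t+1}\ve{\mu}_{t+1}$, with $\matrice{v}_t$ as in eq. (\ref{defVT}) and $\nu, \nu'$ as in Lemma \ref{lem11}. Writing $\ve{\mu}_{t+1} = \ve{\mu}_t + \ve{\epsilon}_t$ — this is just the definition of $\ve{\epsilon}_t$ in eq. (\ref{defEPSILONT}) — and subtracting gives $\tfrac1\nu(\ve{\theta}^*_{t+1}-\ve{\theta}^*_t) = (\matrice{v}_{t+1}-\matrice{v}_t)\ve{\mu}_t + \matrice{v}_{t+1}\ve{\epsilon}_t$. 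The second summand is already the $\matrice{v}_{t+1}\ve{\epsilon}_t$ term of the claim, so the whole statement reduces to the matrix identity $\matrice{v}_{t+1} = \matrice{v}_t + \matrice{v}_t\matrice{u}_{t+1}\matrice{v}_t$, after which one re-expresses $\matrice{v}_t\ve{\mu}_t$ in terms of $\ve{\theta}^*_t$ and matches scalar prefactors via Lemma \ref{lem11}. (An essentially equivalent and perhaps cleaner route is to subtract the first-order optimality conditions eq. (\ref{approxTaylor1b}) for $\ve{\theta}^*_t$ and $\ve{\theta}^*_{t+1}$ after inserting the rank-two update of $\hat{\X}\hat{\X}^\top$; this produces the $\matrice{v}_{t+1}\ve{\epsilon}_t$ term and a term proportional to $\matrice{v}_{t+1}(\ve{a}^+_{t+1}{\ve{b}^+_{t+1}}^\top + \ve{b}^+_{t+1}{\ve{a}^+_{t+1}}^\top)\ve{\theta}^*_t$ directly, so that one only has to recognise the latter as $\matrice{v}_t\matrice{u}_{t+1}\ve{\theta}^*_t$.)

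The heart of the argument is therefore a rank-two Sherman--Morrison inversion. The proof of Lemma \ref{lemCONDVT} already establishes (eqs. (\ref{eqXTXTM1}) and (\ref{eqDEFVT})) that $\matrice{v}_{t+1}^{-1}$ differs from $\matrice{v}_t^{-1}$ by the symmetric rank-two matrix built from $\ve{a}^+_{t+1}, \ve{b}^+_{t+1}$. I would apply Sherman--Morrison twice — first removing the $\ve{a}^+_{t+1}{\ve{b}^+_{t+1}}^\top$ update, then the $\ve{b}^+_{t+1}{\ve{a}^+_{t+1}}^\top$ update — following exactly the chain of identities already written out in the proof of Lemma \ref{lemCONDVT}. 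Those computations produce the scalar $(1-\mathrm{sign}(c)\,c_{1,t+1})^2 - c_{0,t+1}c_{2,t+1}$, which is precisely the normalising denominator of $\matrice{u}_{t+1}$ in eq. (\ref{defUT}). Collecting the four resulting rank-one pieces $\matrice{v}_t\ve{a}^+_{t+1}{\ve{a}^+_{t+1}}^\top\matrice{v}_t$, $\matrice{v}_t\ve{a}^+_{t+1}{\ve{b}^+_{t+1}}^\top\matrice{v}_t$, $\matrice{v}_t\ve{b}^+_{t+1}{\ve{a}^+_{t+1}}^\top\matrice{v}_t$, $\matrice{v}_t\ve{b}^+_{t+1}{\ve{b}^+_{t+1}}^\top\matrice{v}_t$ with their coefficients and simplifying — the key simplification being that the two contributions to the $\ve{a}^+_{t+1}{\ve{b}^+_{t+1}}^\top$ coefficient combine, via $e + c_{0,t+1}c_{2,t+1}/e = (e^2 - c_{0,t+1}c_{2,t+1} + c_{0,t+1}c_{2,t+1})/e$ with $e = 1-\mathrm{sign}(c)\,c_{1,t+1}$, into a single term — yields $\matrice{v}_t + \matrice{v}_t\matrice{u}_{t+1}\matrice{v}_t$ once the block-row/block-column structure of $\ve{a}^+_{t+1}, \ve{b}^+_{t+1}$ is unfolded into the two-by-two block form of $\matrice{u}_{t+1}$.

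Well-definedness of every step is exactly what the \textbf{invertibility} assumption buys: it is the hypothesis under which the intermediate matrix $\matrice{v}_t^{-1} - \mathrm{sign}(c)\,\ve{a}^+_{t+1}{\ve{b}^+_{t+1}}^\top$ and then $\matrice{v}_{t+1}^{-1}$ are invertible (cf. eqs. (\ref{cond1})--(\ref{cond2})), and $\matrice{v}_0$ exists since $\gamma>0$ and $\Gamma$ is symmetric positive definite. I expect the main obstacle to be purely the algebraic bookkeeping in this rank-two inversion: keeping the $\mathrm{sign}(c)$ factors and the scalars $\nu,\nu'$ straight, and checking that the Sherman--Morrison output is not merely some rank-two matrix but exactly the sandwiched $\matrice{v}_t\matrice{u}_{t+1}\matrice{v}_t$, i.e. that the cross-terms collapse to reproduce the precise off-diagonal coefficient $1-\mathrm{sign}(c)\,c_{1,t+1}$ and the denominator in the definition of $\matrice{u}_{t+1}$. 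Once that identity is secured, substituting back and invoking $\ve{\theta}^*_t = \nu\matrice{v}_t\ve{\mu}_t$ from Lemma \ref{lem11} closes the proof.
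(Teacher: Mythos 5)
Your proposal follows the paper's proof essentially step for step: start from the closed form $\ve{\theta}^*_t = \nu\matrice{v}_t\ve{\mu}_t$ of Lemma \ref{lem11}, split the difference as $\Delta_t\ve{\mu}_t + \matrice{v}_{t+1}\ve{\epsilon}_t$ with $\Delta_t = \matrice{v}_{t+1}-\matrice{v}_t$, and reduce to the identity $\matrice{v}_{t+1} = \matrice{v}_t + \matrice{v}_t\matrice{u}_{t+1}\matrice{v}_t$ obtained by two applications of Sherman--Morrison to the rank-two update of $\hat{\X}_t\hat{\X}_t^\top$, with the invertibility assumption ensuring each intermediate inverse exists --- exactly what the paper does. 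Your caution about ``matching scalar prefactors via Lemma \ref{lem11}'' when trading $\matrice{v}_t\ve{\mu}_t$ for $\ve{\theta}^*_t$ is well placed, as that is precisely the step where a stray factor of $\nu$ must be tracked carefully.
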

\begin{proof}
Throughout the proof, we let
\begin{eqnarray}
\varsigma & \defeq & \mathrm{sign}(c) 
\end{eqnarray}
for short.
We have from Lemma \ref{lem11}, for any $t\geq 1$,
\begin{eqnarray}
\frac{1}{\nu}\cdot( \ve{\theta}^*_{t} - \ve{\theta}^*_{t-1}) & = & \matrice{v}_{t} \ve{\mu}_{t} - \matrice{v}_{t-1}\ve{\mu}_{t-1}\nonumber\\
 & = & \Delta_{t-1} \ve{\mu}_{t-1} + \matrice{v}_{t}\ve{\epsilon}_{t-1}\:\:,
\end{eqnarray}
with $\Delta_t \defeq \matrice{v}_{t+1} - \matrice{v}_{t}$. It comes
from eq. (\ref{eqXTXTM1}),
\begin{eqnarray}
\Delta_{t-1} & = & \left( \hat{\X}_{t-1} \hat{\X}_{t-1}^\top + \nu'\cdot \Gamma
  - \varsigma\cdot \ve{a}^+_t 
{\ve{b}^+_t}^\top - \varsigma\cdot\ve{b}^+_t
{\ve{a}^+_t}^\top\right)^{-1} -\matrice{v}_{t}\:\:.
\end{eqnarray}
To simplify this expression, we need two consecutive applications of
Sherman-Morrison's inversion formula:
\begin{eqnarray}
\lefteqn{\left(\hat{\X}_{t-1}\hat{\X}_{t-1}^\top + \nu'\cdot \Gamma - \varsigma\cdot\ve{a}^+_t 
{\ve{b}^+_t}^\top - \varsigma\cdot\ve{b}^+_t
{\ve{a}^+_t}^\top\right)^{-1}}\nonumber\\
 & = & \left(\hat{\X}_{t-1}\hat{\X}_{t-1}^\top + \nu'\cdot \Gamma - \varsigma\cdot\ve{a}^+_t 
{\ve{b}^+_t}^\top \right)^{-1} +
  \frac{\varsigma}{1-\varsigma\cdot{\ve{a}^+_t}^\top\left( \hat{\X}_{t-1}\hat{\X}_{t-1}^\top + \nu'\cdot \Gamma - \varsigma\cdot\ve{a}^+_t 
{\ve{b}^+_t}^\top \right)^{-1}\ve{b}^+_t} \cdot \Q_t\:\:,\label{sm1}
\end{eqnarray}
with
\begin{eqnarray}
\Q_t & \defeq & \left( \hat{\X}_{t-1}\hat{\X}_{t-1}^\top + \nu'\cdot \Gamma  - \varsigma\cdot\ve{a}^+_t {\ve{b}^+_t}^\top \right)^{-1} \ve{b}^+_t {\ve{a}^+_t}^\top \left( \hat{\X}_{t-1}\hat{\X}_{t-1}^\top + \nu'\cdot \Gamma  - \varsigma\cdot{\ve{a}^+_t} 
{\ve{b}^+_t}^\top \right)^{-1}\:\:,\nonumber\\
\end{eqnarray}
and
\begin{eqnarray}
\left(\hat{\X}_{t-1}\hat{\X}_{t-1}^\top + \nu'\cdot \Gamma -\varsigma\cdot {\ve{a}^+_t} 
{\ve{b}^+_t}^\top \right)^{-1}   & = &
\matrice{v}_{t-1} + \frac{\varsigma}{1-\varsigma\cdot{\ve{b}^+_t}^\top  \matrice{v}_{t-1}   {{\ve{a}^+_t}}}\cdot  \matrice{v}_{t-1}{\ve{a}^+_t} {\ve{b}^+_t}^\top  \matrice{v}_{t-1} \:\:.\label{sm2}
\end{eqnarray}
Let us define the
following shorthand:
\begin{eqnarray}
\Sigma_t & \defeq & \matrice{v}_{t-1} + \frac{\varsigma}{1-\varsigma\cdot{\ve{b}^+_t}^\top  \matrice{v}_{t-1}  {{\ve{a}^+_t}}}\cdot
  \matrice{v}_{t-1} {\ve{a}^+_t} {\ve{b}^+_t}^\top  \matrice{v}_{t-1}\:\:.
\end{eqnarray}
Then, plugging together eqs. (\ref{sm1}) and
    (\ref{sm2}), we get:
\begin{eqnarray}
\lefteqn{\left(\hat{\X}_{t-1}\hat{\X}_{t-1}^\top + \nu'\cdot \Gamma - \varsigma\cdot{\ve{a}^+_t} 
{\ve{b}^+_t}^\top - \varsigma\cdot\ve{b}^+_t
{{\ve{a}^+_t}}^\top\right)^{-1}}\nonumber\\
  & = & \matrice{v}_{t-1}  + \frac{\varsigma}{1-\varsigma\cdot{\ve{b}^+_t}^\top  \matrice{v}_{t-1}
    {{\ve{a}^+_t}}}\cdot  \matrice{v}_{t-1} {\ve{a}^+_t} {\ve{b}^+_t}^\top
  \matrice{v}_{t-1} \nonumber\\
 &  & +
  \frac{\varsigma}{1-\varsigma\cdot{{\ve{a}^+_t}}^\top \matrice{v}_{t-1}\ve{b}^+_t -
    \frac{
    {\ve{a}^+_t}^\top \matrice{v}_{t-1} {\ve{a}^+_t} \cdot {\ve{b}^+_t}^\top
    \matrice{v}_{t-1} \ve{b}^+_t}{1-\varsigma\cdot{\ve{b}^+_t}^\top  \matrice{v}_{t-1}
    {{\ve{a}^+_t}}}} \cdot \Sigma_t  \ve{b}^+_t {\ve{a}^+_t}^\top \Sigma_t \nonumber\\
  & = & \matrice{v}_{t-1}  + \frac{\varsigma}{1-\varsigma\cdot c_{1,t}}\cdot  \matrice{v}_{t-1} {\ve{a}^+_t} {\ve{b}^+_t}^\top
  \matrice{v}_{t-1} \nonumber\\
 &  & +
  \frac{\varsigma}{1- \varsigma\cdot c_{1,t} -
    \frac{c_{0,t} c_{2,t}}{1-\varsigma\cdot c_{1,t}}} \cdot \left( 
\begin{array}{c}
\matrice{v}_{t-1}  \\
+\\
  \frac{\varsigma}{1- \varsigma\cdot c_{1,t}}\cdot
  \matrice{v}_{t-1} {\ve{a}^+_t} {\ve{b}^+_t}^\top  \matrice{v}_{t-1}
\end{array}\right) \ve{b}^+_t {\ve{a}^+_t}^\top \left( \begin{array}{c}
\matrice{v}_{t-1}  \\
+\\
  \frac{\varsigma}{1- \varsigma\cdot c_{1,t}}\cdot
  \matrice{v}_{t-1} {\ve{a}^+_t} {\ve{b}^+_t}^\top  \matrice{v}_{t-1}
\end{array}
\right) \nonumber\\
 & = & \matrice{v}_{t-1}  + \frac{\varsigma}{1-\varsigma\cdot c_{1,t}}\cdot  \matrice{v}_{t-1} {\ve{a}^+_t} {\ve{b}^+_t}^\top
  \matrice{v}_{t-1} + \frac{\varsigma\cdot}{1- \varsigma\cdot c_{1,t} -
    \frac{c_{0,t} c_{2,t}}{1-\varsigma\cdot c_{1,t}}} \cdot \matrice{v}_{t-1} \ve{b}^+_t {{\ve{a}^+_t}}^\top
  \matrice{v}_{t-1} \nonumber\\
 && + \frac{c_{0,t}}{(1- \varsigma\cdot c_{1,t})^2 -
    c_{0,t} c_{2,t}} \cdot \matrice{v}_{t-1} \ve{b}^+_t {\ve{b}^+_t}^\top
  \matrice{v}_{t-1} + \frac{c_{2,t}}{(1- \varsigma\cdot c_{1,t})^2 -
    c_{0,t} c_{2,t}} \cdot \matrice{v}_{t-1} {\ve{a}^+_t} {{\ve{a}^+_t}}^\top
  \matrice{v}_{t-1} \nonumber\\
 & &+ \frac{\varsigma c_{0,t} c_{2,t}}{(1- \varsigma\cdot c_{1,t}) ((1- \varsigma\cdot c_{1,t})^2 -
    c_{0,t} c_{2,t})} \cdot \matrice{v}_{t-1} {\ve{a}^+_t} {\ve{b}^+_t}^\top
  \matrice{v}_{t-1}\nonumber\\
 & = & \matrice{v}_{t-1}  + \frac{1-\varsigma\cdot c_{1,t}}{(1-\varsigma\cdot c_{1,t})^2-c_{0,t}c_{2,t}}\cdot  \left(\matrice{v}_{t-1} {\ve{a}^+_t} {\ve{b}^+_t}^\top
  \matrice{v}_{t-1} + \matrice{v}_{t-1} \ve{b}^+_t {{\ve{a}^+_t}}^\top
  \matrice{v}_{t-1} \right) \nonumber\\
 & & + \frac{c_{0,t}}{(1- \varsigma\cdot c_{1,t})^2 -
    c_{0,t} c_{2,t}} \cdot \matrice{v}_{t-1} \ve{b}^+_t {\ve{b}^+_t}^\top
  \matrice{v}_{t-1} + \frac{c_{2,t}}{(1- \varsigma\cdot c_{1,t})^2 -
    c_{0,t} c_{2,t}} \cdot \matrice{v}_{t-1} {\ve{a}^+_t} {{\ve{a}^+_t}}^\top
  \matrice{v}_{t-1} \nonumber\\
 & = & \matrice{v}_{t-1}  + \frac{1}{(1-\varsigma\cdot c_{1,t})^2-c_{0,t}c_{2,t}}\cdot \left\{
\begin{array}{c}
   (1-\varsigma\cdot c_{1,t})\cdot (\matrice{v}_{t-1} {\ve{a}^+_t} {\ve{b}^+_t}^\top
  \matrice{v}_{t-1} + \matrice{v}_{t-1} \ve{b}^+_t {{\ve{a}^+_t}}^\top
  \matrice{v}_{t-1}) \\
+ c_{0,t} \cdot \matrice{v}_{t-1} \ve{b}^+_t {\ve{b}^+_t}^\top
  \matrice{v}_{t-1} \\
+ c_{2,t} \cdot \matrice{v}_{t-1} {\ve{a}^+_t} {{\ve{a}^+_t}}^\top
  \matrice{v}_{t-1}
\end{array}\right\}\nonumber\\
  & = & \matrice{v}_{t-1}  + \matrice{v}_{t-1}\matrice{u}_t \matrice{v}_{t-1}\:\:.
\end{eqnarray}
So,
\begin{eqnarray}
\frac{1}{\nu}\cdot( \ve{\theta}^*_{t} - \ve{\theta}^*_{t-1}) & = &
\Delta_{t-1} \ve{\mu}_{t-1} +
\matrice{v}_{t}\ve{\epsilon}_{t-1}\nonumber\\
 & = & \matrice{v}_{t-1} \matrice{u}_t \matrice{v}_{t-1}\ve{\mu}_{t-1} + \matrice{v}_{t} \ve{\epsilon}_{t-1}\nonumber\\
 & = &  \matrice{v}_{t-1} \matrice{u}_t \ve{\theta}^*_{t-1} + \matrice{v}_{t} \ve{\epsilon}_{t-1} \:\:, \label{eq22}
\end{eqnarray}
as claimed (end of the proof of Theorem \ref{thmeq1}). 
\end{proof}
All that remains to do now is to unravel the relationship in Theorem \ref{thmeq1}
and quantify the exact variation $\ve{\theta}^*_{T} -
\ve{\theta}^*_0$ as a function of $\ve{\theta}^*_0$ (which is the
error-free optimal classifier), holding for any permutation
$\PERM_*$. We therefore suppose that the invertibility assumption holds.
\begin{theorem}\label{thEXACT}
Suppose the invertibility assumption holds. For any $T\geq 1$,
\begin{eqnarray}
\ve{\theta}^*_{T} - \ve{\theta}^*_{0} & = & (\matrice{h}_{T,0} - \matrice{i}_d)
\ve{\theta}^*_0 + \sum_{t=0}^{T-1} \matrice{h}_{T,t+1} \ve{\lambda}_{t}\:\:.
\end{eqnarray}
\end{theorem}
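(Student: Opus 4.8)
The plan is to recognise Theorem~\ref{thEXACT} as the explicit solution of the affine first-order recurrence supplied by Theorem~\ref{thmeq1}, and to derive it by a short induction on $T$. First I would rewrite the conclusion of Theorem~\ref{thmeq1} in the compact form already announced in eq.~(\ref{sumeq1}): multiplying through by $\nu$ and using the definitions (\ref{defL1}) of $\Lambda_t$ and (\ref{defL2}) of $\ve{\lambda}_t$ (recall $\nu = -F'(0)/|c|$, so that $\nu\matrice{v}_t\matrice{u}_{t+1} = \Lambda_t$ and $\nu\matrice{v}_{t+1}\ve{\epsilon}_{t} = \ve{\lambda}_t$), Theorem~\ref{thmeq1} becomes $\ve{\theta}^*_{t+1} = (\matrice{i}_d + \Lambda_t)\,\ve{\theta}^*_t + \ve{\lambda}_t$ for all $t \geq 0$. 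Under the invertibility assumption (hypothesised here, and established for our setting via Lemma~\ref{lemCONDVT}), the whole sequence $\ve{\theta}^*_0,\ve{\theta}^*_1,\dots,\ve{\theta}^*_T$ is well defined, so this recurrence is meaningful.

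Then I would prove by induction on $T\geq 1$ the closed form $\ve{\theta}^*_T = \matrice{h}_{T,0}\,\ve{\theta}^*_0 + \sum_{t=0}^{T-1}\matrice{h}_{T,t+1}\,\ve{\lambda}_t$, from which the statement follows by subtracting $\ve{\theta}^*_0$. The base case $T=1$ is immediate from eq.~(\ref{defHIJ}), since $\matrice{h}_{1,0} = \matrice{i}_d + \Lambda_0$ and $\matrice{h}_{1,1} = \matrice{i}_d$, so $\ve{\theta}^*_1 = \matrice{h}_{1,0}\ve{\theta}^*_0 + \matrice{h}_{1,1}\ve{\lambda}_0$. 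For the inductive step, apply the recurrence once more and substitute the inductive hypothesis: $\ve{\theta}^*_{T+1} = (\matrice{i}_d + \Lambda_T)\bigl(\matrice{h}_{T,0}\ve{\theta}^*_0 + \sum_{t=0}^{T-1}\matrice{h}_{T,t+1}\ve{\lambda}_t\bigr) + \ve{\lambda}_T$. The only algebraic fact required is the telescoping identity $(\matrice{i}_d + \Lambda_T)\,\matrice{h}_{T,j} = \matrice{h}_{T+1,j}$ for every $0\leq j\leq T$, which is just the statement that prepending the factor $k=T$ to the product in eq.~(\ref{defHIJ}) extends its upper range by one (and which reduces to $\matrice{i}_d + \Lambda_T = \matrice{h}_{T+1,T}$ in the degenerate case $j=T$); combined with $\matrice{h}_{T+1,T+1} = \matrice{i}_d$, which lets the trailing term $\ve{\lambda}_T$ be absorbed as the $t=T$ summand, this gives $\ve{\theta}^*_{T+1} = \matrice{h}_{T+1,0}\ve{\theta}^*_0 + \sum_{t=0}^{T}\matrice{h}_{T+1,t+1}\ve{\lambda}_t$, closing the induction.

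There is no genuine obstacle: the whole analytic content sits in Theorem~\ref{thmeq1}, and this step is pure bookkeeping. The only point that demands care is keeping the order of the factors in $\matrice{h}_{i,j}$ straight — the product $\prod_{k=j}^{i-1}(\matrice{i}_d + \Lambda_k)$ must be read with the highest index on the left, so that left-multiplication by $\matrice{i}_d + \Lambda_T$ genuinely corresponds to advancing one step of the recurrence — together with the two separate conventions for the boundary indices $j=i$ (where $\matrice{h}_{i,i}=\matrice{i}_d$) used both at the bottom of the sum and to handle the freshly produced $\ve{\lambda}_T$ term. Once those indexing conventions are fixed, the subtraction $\ve{\theta}^*_T - \ve{\theta}^*_0 = (\matrice{h}_{T,0}-\matrice{i}_d)\ve{\theta}^*_0 + \sum_{t=0}^{T-1}\matrice{h}_{T,t+1}\ve{\lambda}_t$ is immediate.
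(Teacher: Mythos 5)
Your proof is correct and follows essentially the same route as the paper: both start from the recurrence $\ve{\theta}^*_{t+1} = (\matrice{i}_d + \Lambda_t)\ve{\theta}^*_t + \ve{\lambda}_t$ given by Theorem~\ref{thmeq1} and unroll it, the paper doing so in one "unravelling" display while you make the induction and the telescoping identity $(\matrice{i}_d + \Lambda_T)\matrice{h}_{T,j} = \matrice{h}_{T+1,j}$ explicit. Your remark about the ordering of the non-commuting factors in $\matrice{h}_{i,j}$ (highest index on the left) is the right caveat and is consistent with the unrolled product the paper writes out.
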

\begin{proof}
We recall first that we have from Theorem \ref{thmeq1},
$\ve{\theta}^*_{t+1} - \ve{\theta}^*_t =  \Lambda_t \ve{\theta}^*_t
+ \ve{\lambda}_t$, $\forall t\geq 0$. Equivalently,
\begin{eqnarray}
\ve{\theta}^*_{t+1} & = & (\matrice{i}_d + \Lambda_t) \ve{\theta}^*_t + \ve{\lambda}_t\:\:.
\end{eqnarray}
Unravelling, we easily get $\forall T \geq 1$,
\begin{eqnarray}
\ve{\theta}^*_{T} & = & \prod_{t=0}^{T-1}(\matrice{i}_d + \Lambda_t)
\ve{\theta}^*_0  + \ve{\lambda}_{T-1} + \sum_{j=0}^{T-2} \prod_{t=j+1}^{T-1} (\matrice{i}_d
+ \Lambda_t) \ve{\lambda}_{j}\nonumber\\
 & = & \matrice{h}_{T,0}
\ve{\theta}^*_0 + \sum_{t=0}^{T-1} \matrice{h}_{T,t+1} \ve{\lambda}_{t}\:\:,
\end{eqnarray}
which yields the statement of Theorem \ref{thEXACT}.
\end{proof}
Since it applies to every permutation matrix, Theorem \ref{thEXACT} applies to \textit{every}
entity resolution algorithm. Theorem \ref{thEXACT} gives us a
interesting expression for the deviation $\ve{\theta}^*_{T} -
\ve{\theta}^*_0 $ which can be used to derive bounds on the distance
between the two classifiers. We apply it
now to derive one such bound.

\subsubsection{Finalizing the proof of Theorem \ref{thAPPROX1}}

We first need an intermediate technical Lemma. Let $\mu(\{a_i\}) \defeq (1/m) \cdot
\sum_i a_i$ denote for short the average of set $\{a_i\}_{i=1}^{m}$ with $a_i\geq 0, \forall i$. Let
$\gamma'\geq 0$ be \textit{any} real such that:
\begin{eqnarray}
\frac{\mu^2(\{a_i\})}{\mu(\{a^2_i\})} & \leq & (1-\gamma')\:\:.\label{eq0001}
\end{eqnarray}
Remark that the result is true for $\gamma' = 0$ since
$\mu(\{a^2_i\})-\mu^2(\{a_i\})$ is just the variance of $\{a_i\}$,
which is non-negative.
Remark also that we must have $\gamma'\leq 1$.
\begin{lemma}\label{lemsim}
$\sum_i \left( (1-\epsilon) a_i - q \right)^2 \geq \gamma'
(1-\epsilon)^2 \sum_i a_i^2$, $\forall \epsilon \leq 1, q\in \mathbb{R}$.
\end{lemma}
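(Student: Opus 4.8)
The plan is to reduce the inequality to a one–variable minimization over $q$ and then invoke the hypothesis (\ref{eq0001}) directly. Write $s \defeq 1-\epsilon$; since $\epsilon \leq 1$ we have $s \geq 0$, and in any case the quantity $(1-\epsilon)^2 = s^2 \geq 0$ will factor out cleanly, so the sign of $s$ is immaterial. First I would note that $q \mapsto \sum_i (s a_i - q)^2$ is a convex quadratic in $q$, minimized at $q^\star \defeq \frac{1}{m}\sum_i s a_i = s\,\mu(\{a_i\})$, with minimum value $\sum_i (s a_i)^2 - m (q^\star)^2 = s^2\bigl(\sum_i a_i^2 - \tfrac{1}{m}(\sum_i a_i)^2\bigr)$. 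Hence, for every $q\in\mathbb{R}$,
\begin{eqnarray}
\sum_i \bigl((1-\epsilon) a_i - q\bigr)^2 & \geq & (1-\epsilon)^2\Bigl(\textstyle\sum_i a_i^2 - \tfrac{1}{m}\bigl(\sum_i a_i\bigr)^2\Bigr) \;=\; (1-\epsilon)^2\, m\,\bigl(\mu(\{a_i^2\}) - \mu^2(\{a_i\})\bigr)\:\:.\nonumber
\end{eqnarray}

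Next I would rewrite the target right-hand side in the same normalization: $\gamma'(1-\epsilon)^2\sum_i a_i^2 = \gamma'(1-\epsilon)^2\, m\,\mu(\{a_i^2\})$. Dividing through by the nonnegative factor $(1-\epsilon)^2 m$ (treating the trivial case $(1-\epsilon)^2 m = 0$ separately, where both sides vanish), the claimed inequality is equivalent to $\mu(\{a_i^2\}) - \mu^2(\{a_i\}) \geq \gamma'\,\mu(\{a_i^2\})$, i.e. $(1-\gamma')\,\mu(\{a_i^2\}) \geq \mu^2(\{a_i\})$. If $\mu(\{a_i^2\}) > 0$ this is precisely hypothesis (\ref{eq0001}) after multiplying both sides by $\mu(\{a_i^2\})$; if $\mu(\{a_i^2\}) = 0$ then, as $a_i \geq 0$, all $a_i = 0$ and the inequality of the Lemma holds trivially ($0 \geq 0$). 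This completes the argument.

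There is really no hard step here: the only things to be careful about are (i) keeping the factor $(1-\epsilon)^2$ outside and observing it is nonnegative (so it does not reverse the inequality), and (ii) the degenerate cases $m$-normalization being zero, i.e. all $a_i=0$, where the ratio in (\ref{eq0001}) is undefined but the conclusion is immediate. The substance of the Lemma is entirely carried by the variance identity $\mu(\{a_i^2\}) - \mu^2(\{a_i\}) = \min_q \tfrac{1}{m}\sum_i(a_i - q)^2 \geq 0$ together with the slack $\gamma'$ quantifying how far below $1$ the normalized squared mean lies.
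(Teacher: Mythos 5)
Your proof is correct, and it takes a genuinely different and more direct route than the paper's. The paper proves the inequality by first writing $\sqrt{(1-\gamma')\mu(\{a_i^2\})}$ as an infimum $\inf_{k\geq 0}\tfrac12\bigl(k + \tfrac{1}{k}(1-\gamma')\mu(\{a_i^2\})\bigr)$, bounding $\mu(\{a_i\})$ by this via (\ref{eq0001}), substituting the specific value $k := q/(1-\epsilon)$, and then rearranging the resulting scalar inequality into the quadratic form $mq^2 - 2(1-\epsilon)q\sum_i a_i + (1-\gamma')(1-\epsilon)^2\sum_i a_i^2 \geq 0$ before recognizing the left-hand side of the Lemma inside it. You instead minimize $\sum_i((1-\epsilon)a_i - q)^2$ over $q$ globally, obtaining the variance identity $\min_q \sum_i(sa_i-q)^2 = s^2 m\bigl(\mu(\{a_i^2\})-\mu^2(\{a_i\})\bigr)$, and then observe that the claim reduces exactly to (\ref{eq0001}). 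Your route is shorter, avoids the auxiliary infimum trick, and handles all $q\in\mathbb{R}$ uniformly; the paper's substitution $k=q/(1-\epsilon)$ with $k\geq 0$ implicitly restricts to $q\geq 0$ and $\epsilon<1$ (harmless in the paper's application where $q=\tau\geq 0$, but your argument is the one that actually covers the stated $\forall q\in\mathbb{R}$). Your handling of the degenerate case $\mu(\{a_i^2\})=0$ is also a welcome completeness check that the paper elides.
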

\begin{proof}
Remark that 
\begin{eqnarray}
\sqrt{(1-\gamma') \mu(\{a^2_i\})} & = & \inf_{k\geq 0} \frac{1}{2}\cdot \left(k
+ \frac{1}{k}\cdot (1-\gamma') \mu(\{a^2_i\})\right)\:\:,\label{eqLG3}
\end{eqnarray}
so we have:
\begin{eqnarray}
\mu(\{a_i\}) & \leq & \sqrt{(1-\gamma')
  \mu(\{a^2_i\})} \label{eqLG1}\\
 & \leq & \frac{q}{2(1-\epsilon)} +
\frac{(1-\gamma')(1-\epsilon)}{2q} \cdot \mu(\{a^2_i\})\:\:.\label{eqLG2}
\end{eqnarray}
Ineq. (\ref{eqLG1}) holds because of ineq. (\ref{eq0001}) and
ineq. (\ref{eqLG1}) holds because of eq. (\ref{eqLG3}) and
substituting $k \defeq q / (1-\epsilon) \geq 0$. After
reorganising, we obtain:
\begin{eqnarray}
n q^2 - 2(1-\epsilon) q \sum_i a_i +(1-\gamma')
(1-\epsilon)^2 \sum_i a_i^2 & \geq & 0\:\:,
\end{eqnarray}
and so we obtain the inequality of:
\begin{eqnarray}
\sum_i \left( (1-\epsilon) a_i - q \right)^2 & = & n q^2 -
2(1-\epsilon) q \sum_i a_i + (1-\epsilon)^2 \sum_i a_i^2
\nonumber\\
 & \geq & \gamma' (1-\epsilon)^2 \sum_i a_i^2\:\:,\label{eqa01}
\end{eqnarray}
which allows to conclude the proof of Lemma \ref{lemsim}.
\end{proof}
Let
\begin{eqnarray}
\gamma'(\X,\ve{w}) & \defeq & 1 - \frac{\mu^2(\{\vstretch(\ve{x}_i,\ve{w})\}_{i=1}^m) }{\mu(\{\vstretch^2(\ve{x}_i,\ve{w})\}_{i=1}^m)}\:\:,
\end{eqnarray}
where we recall that $\mu(\{a_i\}_{i=1}^m)$ is the average in set
$\{a_i\}$. It is easy to remark that $\gamma'(\X,\ve{w}) \in [0,1]$
and it can be used in Lemma \ref{lemsim} for the choice
\begin{eqnarray}
\{a_i\} & \defeq & \{\vstretch(\ve{x}_i,\ve{w})\}_{i=1}^m\:\:.
\end{eqnarray}
It is also not hard to see that as long as there exists two $\ve{x}_i$
in $\X$
with a different \textit{direction}, we shall have $\gamma'(\X,\ve{w})
> 0, \forall \ve{w}$.

Following \cite{bMA}, for any symmetric matrix $\matrice{m}$, we let
$\ve{\lambda}^\downarrow(\matrice{m})$ (resp. $\ve{\lambda}^\uparrow(\matrice{m})$) denote the vector of eigenvalues
arranged in decreasing (resp. increasing) order. So, $\lambda_1^\downarrow(\matrice{m})$
(resp. $\lambda_1^\uparrow(\matrice{m})$)
denotes the maximal (resp. minimal) eigenvalue of $\matrice{m}$.
\begin{lemma}\label{lemV2}
For any set $\mathcal{S} \defeq \{a_i\}_{i=1}^m$, let
$\mu(\mathcal{S})$ and $\sigma(\mathcal{S})$
denote the mean and standard deviation of $\mathcal{S}$. 
If
$\PERM_t$ is $(\epsilon,
\tau)$-accurate, let us define, $\forall i\in \{-1, 1\}$ and 
$\ve{w}\in \mathbb{R}^d$,
\begin{eqnarray}
M_{\min}(i, \ve{w}) & \defeq & \left\{
\begin{array}{ccl}
(1-\epsilon)^2 \sigma^2(\{\vstretch(\ve{x}_i,\ve{w})\}_{i=1}^n) & \mbox{ if } & i = +1\:\:,\\
-(1+\epsilon)^2 \mu(\{\vstretch^2(\ve{x}_i,\ve{w})\}_{i=1}^n)  -
  \tau^2 & \multicolumn{2}{c}{\mbox{ otherwise}}
\end{array}
\right.\:\:,
\end{eqnarray}
and $M_{\max}(i, \ve{w}) \defeq -M_{\min}(-i, \ve{w})$. Also,
$M_{\min}(i) \defeq \inf_{\ve{w}} M_{\min}(i, \ve{w})$ and $M_{\max}(i) \defeq \sup_{\ve{w}} M_{\max}(i, \ve{w})$.
Then the eigenspectrum of $\matrice{v}_t$ is
bounded as indicated:
\begin{eqnarray}
\lambda_1^\downarrow(\matrice{v}_t) & \leq & \frac{1}{m}\cdot
\frac{1}{M_{\min}(\mathrm{sign}(c))  + \frac{2\gamma}{|c|}
  \lambda_1^\uparrow(\Gamma)} \label{eq002F}\:\:,\\
\lambda_1^\uparrow(\matrice{v}_t) & \geq & \frac{1}{m}\cdot \frac{1}{M_{\max}(\mathrm{sign}(c))  + \frac{2\gamma}{|c|} \lambda_1^\uparrow(\Gamma)} \label{eq002G}\:\:,
\end{eqnarray}
where $c$ is defined in Lemma \ref{lem11}.
\end{lemma}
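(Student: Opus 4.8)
The plan is to pass to $\matrice{v}_t^{-1} = \mathrm{sign}(c)\cdot\hat{\X}_t\hat{\X}_t^\top + \nu'\cdot\Gamma$ (with $\nu' = 2m\gamma/|c|$) and use the variational (Courant--Fischer) characterization of its extreme eigenvalues. Since $\matrice{v}_t$ exists and is positive definite under the regularization in force, $\lambda_1^\downarrow(\matrice{v}_t) = 1/\lambda_1^\uparrow(\matrice{v}_t^{-1})$ and $\lambda_1^\uparrow(\matrice{v}_t) = 1/\lambda_1^\downarrow(\matrice{v}_t^{-1})$, so it suffices to lower-bound $\lambda_1^\uparrow(\matrice{v}_t^{-1})$ and upper-bound $\lambda_1^\downarrow(\matrice{v}_t^{-1})$. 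For any unit $\ve{w}$ I would split the Rayleigh quotient, $\ve{w}^\top\matrice{v}_t^{-1}\ve{w} = \mathrm{sign}(c)\cdot\|\hat{\X}_t^\top\ve{w}\|_2^2 + (2m\gamma/|c|)\cdot\ve{w}^\top\Gamma\ve{w}$, bound the regularization summand by $(2m\gamma/|c|)$ times the appropriate extreme eigenvalue of $\Gamma$ (which produces the $\frac{2\gamma}{|c|}\lambda_1^\uparrow(\Gamma)$ term after dividing by $m$), and spend all the effort on the data summand $\|\hat{\X}_t^\top\ve{w}\|_2^2 = \sum_i(\hat{\ve{x}}_{ti}^\top\ve{w})^2$.

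For the data summand, the mechanism is to transfer $(\epsilon,\tau)$-accuracy from the vectors $\hat{\ve{x}}_{ti}-\ve{x}_i$ onto the scalars $\hat{\ve{x}}_{ti}^\top\ve{w}$. Because $\hat{\ve{x}}_{ti}$ and $\ve{x}_i$ coincide on the anchor block, $\hat{\ve{x}}_{ti}^\top\ve{w} - \ve{x}_i^\top\ve{w} = (\hat{\ve{x}}_{ti}-\ve{x}_i)_\shuffle^\top\ve{w}_\shuffle$, and Definition~\ref{defACCURATE} gives $|\hat{\ve{x}}_{ti}^\top\ve{w} - \ve{x}_i^\top\ve{w}|\le\epsilon|\ve{x}_i^\top\ve{w}| + \tau$ for unit $\ve{w}$, where $|\ve{x}_i^\top\ve{w}| = \vstretch(\ve{x}_i,\ve{w})$. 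When $\mathrm{sign}(c)=+1$ the data summand appears with a plus sign, so I need a \emph{lower} bound on $\sum_i(\hat{\ve{x}}_{ti}^\top\ve{w})^2$: from $|\hat{\ve{x}}_{ti}^\top\ve{w}|\ge(1-\epsilon)\vstretch(\ve{x}_i,\ve{w})-\tau$ and Lemma~\ref{lemsim} applied with $\{a_i\} = \{\vstretch(\ve{x}_i,\ve{w})\}$, $\gamma' = \gamma'(\X,\ve{w})$ and a suitable choice of the free scalar $q$ (of order $\tau$), one obtains $\sum_i(\hat{\ve{x}}_{ti}^\top\ve{w})^2\ge m(1-\epsilon)^2\sigma^2(\{\vstretch(\ve{x}_i,\ve{w})\}) = m\,M_{\min}(+1,\ve{w})$. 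When $\mathrm{sign}(c)=-1$ the data summand appears with a minus sign, so I instead bound it \emph{above}, using $|\hat{\ve{x}}_{ti}^\top\ve{w}|\le(1+\epsilon)\vstretch(\ve{x}_i,\ve{w})+\tau$ together with Cauchy--Schwarz to reach $-\|\hat{\X}_t^\top\ve{w}\|_2^2\ge m\,M_{\min}(-1,\ve{w})$. In both cases, taking $\inf_{\ve{w}}$ converts $M_{\min}(\mathrm{sign}(c),\ve{w})$ into $M_{\min}(\mathrm{sign}(c))$; adding back the regularization contribution and inverting gives ineq.~(\ref{eq002F}). The companion bound ineq.~(\ref{eq002G}) is the mirror image: upper-bound the data term when $\mathrm{sign}(c)=+1$ and lower-bound it when $\mathrm{sign}(c)=-1$, take $\sup_{\ve{w}}$ to produce $M_{\max}(\mathrm{sign}(c)) = -M_{\min}(-\mathrm{sign}(c))$, and invert. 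All of this is uniform in $t$ since $\PERM_*$ being $(\epsilon,\tau)$-accurate means every $\PERM_t$ is.

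The main obstacle is the $\mathrm{sign}(c)=+1$ data-term estimate: the additive penalty $\tau$ defeats a naive term-by-term lower bound $(\hat{\ve{x}}_{ti}^\top\ve{w})^2\ge((1-\epsilon)\vstretch(\ve{x}_i,\ve{w})-\tau)^2$ whenever $(1-\epsilon)\vstretch(\ve{x}_i,\ve{w})<\tau$, so the bound must be routed through the \emph{variance} of stretches rather than the raw second moment --- which is exactly why Lemma~\ref{lemsim} carries a free scalar $q$ and why $M_{\min}(+1,\ve{w})$ is a variance, insensitive to a uniform additive offset. Secondary care is needed to keep the regularization contribution matched to the eigenvalue of $\Gamma$ that appears in the statement, to handle the $\tau$-bookkeeping symmetrically in the upper-bound direction, and to confirm (via the invertibility assumption, checked later in Corollary~\ref{corBOUNDCT}) that the passage to reciprocals is legitimate throughout the sequence $t=0,1,\ldots,T$.
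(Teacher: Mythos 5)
Your proposal is correct and follows essentially the same route as the paper's proof: pass to the Rayleigh quotient of $\matrice{v}_t^{-1}$, transfer $(\epsilon,\tau)$-accuracy (using that the anchor block is unchanged) onto the scalars $\hat{\ve{x}}_{ti}^\top\ve{w}$ to sandwich $\|\hat{\X}_t^\top\ve{w}\|_2^2$, invoke Lemma~\ref{lemsim} with $q=\tau$ and $a_i=\vstretch(\ve{x}_i,\ve{w})$ for the variance-based lower bound, use $(a+b)^2\le 2a^2+2b^2$ for the upper bound, and then take $\inf_{\ve{w}}$/$\sup_{\ve{w}}$ before inverting — exactly the paper's steps. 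You also correctly pinpoint the crux (the additive $\tau$ defeats a naive term-by-term squared lower bound, which is why the bound must route through the \emph{variance} of stretches, not the raw second moment).
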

\noindent \textbf{Remark}: both $M_{\min}$ and $M_{\max}$ are in fact
of order $X_*^2$ in absolute value.
\begin{proof}
If $\PERM_t$ is $(\epsilon, \tau)$-accurate, it comes from the
triangle inequality 
\begin{eqnarray}
|\hat{\ve{x}}_{ti}^\top
\ve{w}| &  = & |\ve{x}_{i}^\top
\ve{w} + (\ve{x}_{ti_\shuffle} - \ve{x}_{i_\shuffle})^\top
\ve{w}_\shuffle| \nonumber\\
 & \geq & |\ve{x}_{i}^\top
\ve{w}| - |(\ve{x}_{ti_\shuffle} - \ve{x}_{i_\shuffle})^\top
\ve{w}_\shuffle| \nonumber\\
 & \geq & (1-\epsilon) |\ve{x}_i^\top
\ve{w}| - \tau\|\ve{w}\|_2 \:\:,\label{binfXW}
\end{eqnarray}
and also
\begin{eqnarray}
|\hat{\ve{x}}_{ti}^\top
\ve{w}| &  = & |\ve{x}_{i}^\top
\ve{w} + (\ve{x}_{ti_\shuffle} - \ve{x}_{i_\shuffle})^\top
\ve{w}_\shuffle| \nonumber\\
 & \leq & |\ve{x}_{i}^\top
\ve{w}| + |(\ve{x}_{ti_\shuffle} - \ve{x}_{i_\shuffle})^\top
\ve{w}_\shuffle| \nonumber\\
 & \leq & (1+\epsilon) |\ve{x}_i^\top
\ve{w}| + \tau\|\ve{w}\|_2 \:\:,\label{bsupXW}
\end{eqnarray}
so
\begin{eqnarray}
|\hat{\ve{x}}_{ti}^\top
\ve{w}| & \in & \left[(1-\epsilon) |\ve{x}_i^\top
\ve{w}| - \tau\|\ve{w}\|_2,  (1+\epsilon) |\ve{x}_i^\top
\ve{w}| + \tau\|\ve{w}\|_2\right]\:\:,\forall i \in [m], \forall t\geq
1\:\:.\label{intXW}
\end{eqnarray}
Using (\ref{intXW}) and Lemma \ref{lemsim} with $q \defeq \tau$ and 
$a_i \defeq \|\ve{x}_i\|_2 |\cos(\ve{x}_{i}, \ve{w})|$, we obtain the
last inequality of:
\begin{eqnarray}
\| \hat{\X}_{t}^\top \ve{w} \|_2^2 & = &  \sum_i (\hat{\ve{x}}_i^\top
\ve{w})^2 \nonumber\\
 & \geq & \sum_i ((1-\epsilon) |\ve{x}_i^\top
\ve{w}| - \tau\|\ve{w}\|_2)^2\nonumber\\
 & & = \|\ve{w}\|_2^2 \cdot \sum_i ((1-\epsilon) \vstretch(\ve{x}_i,\ve{w}) - \tau)^2\nonumber\\
& \geq & \|\ve{w}\|_2^2 (1-\epsilon)^2 \cdot \gamma'(\X, \ve{w})\sum_i \vstretch^2(\ve{x}_i,\ve{w}) \:\:,\label{eqb1}
\end{eqnarray}
but we also have, $\forall \ve{w}\in \mathbb{R}^d$,
\begin{eqnarray}
\sigma^2(\{\vstretch(\ve{x}_i,\ve{w})\}_{i=1}^n) & \defeq & \mu(\{\vstretch^2(\ve{x}_i,\ve{w})\}) - \mu^2(\{\vstretch(\ve{x}_i,\ve{w})\}) \nonumber\\
 & = & \left( 1 - \frac{\mu^2(\{\vstretch(\ve{x}_i,\ve{w})\}) }{\mu(\{\vstretch^2(\ve{x}_i,\ve{w})\})}\right)\cdot \mu(\{\vstretch^2(\ve{x}_i,\ve{w})\}) \nonumber\\
 & = & \left( 1 - \frac{\mu^2(\{\vstretch(\ve{x}_i,\ve{w})\}) }{\mu(\{\vstretch^2(\ve{x}_i,\ve{w})\})}\right)\cdot \frac{1}{m}\cdot\sum_i \vstretch^2(\ve{x}_i,\ve{w}) \nonumber\\
 & = & \frac{1}{m}\cdot \gamma'(\X, \ve{w}) \sum_i \vstretch^2(\ve{x}_i,\ve{w})\:\:,
\end{eqnarray}
so ineq. (\ref{eqb1}) yields
\begin{eqnarray}
\| \hat{\X}_{t}^\top \ve{w} \|_2^2 & \geq & m \|\ve{w}\|_2^2  (1-\epsilon)^2 \sigma^2(\{\vstretch(\ve{x}_i,\ve{w})\}_{i=1}^n)) \:\:.\label{binfXXW}
\end{eqnarray}
Using (\ref{intXW}) also yields
\begin{eqnarray}
\| \hat{\X}_{t}^\top \ve{w} \|_2^2 & = &  \sum_i (\hat{\ve{x}}_i^\top
\ve{w})^2 \nonumber\\
 & \leq & \|\ve{w}\|_2^2 \cdot \sum_i ((1+\epsilon) \vstretch(\ve{x}_i,\ve{w}) + \tau)^2\nonumber\\
& \leq & \|\ve{w}\|_2^2 \cdot \left(2 (1+\epsilon)^2\sum_i
  \vstretch^2(\ve{x}_i,\ve{w}) + 2m \tau^2\right)\nonumber\\
 & & = 2m \|\ve{w}\|_2^2 \left((1+\epsilon)^2 \mu(\{\vstretch^2(\ve{x}_i,\ve{w})\}_{i=1}^n)  + \tau^2 \right)\:\:,\label{eqb22}
\end{eqnarray}
because $(a+b)^2 \leq 2a^2 + 2b^2$. We get that if $\mathrm{sign}(c) =
+1$,
\begin{eqnarray}
\| \hat{\X}_{t}^\top \ve{w} \|_2^2 & \in & m \|\ve{w}\|_2^2\cdot
\left[(1-\epsilon)^2 \sigma^2(\{\vstretch(\ve{x}_i,\ve{w})\}_{i=1}^n)), (1+\epsilon)^2
  \mu(\{\vstretch^2(\ve{x}_i,\ve{w})\}_{i=1}^n)  +
  \tau^2\right]\:\:,\forall \ve{w}\:\:,\label{intXXW1}
\end{eqnarray}
while if $\mathrm{sign}(c) =
-1$,
\begin{eqnarray}
\| \hat{\X}_{t}^\top \ve{w} \|_2^2 & \in & -m \|\ve{w}\|_2^2\cdot
\left[(1+\epsilon)^2 \mu(\{\vstretch^2(\ve{x}_i,\ve{w})\}_{i=1}^n)  +
  \tau^2, (1-\epsilon)^2 \sigma^2(\{\vstretch(\ve{x}_i,\ve{w})\}_{i=1}^n))\right]\:\:,\forall \ve{w}\:\:.\label{intXXW2}
\end{eqnarray}
Now, we define $\forall i\in \{-1, 1\}$ and 
$\ve{w}\in \mathbb{R}^d$
\begin{eqnarray}
M_{\min}(i, \ve{w}) & \defeq & \left\{
\begin{array}{ccl}
(1-\epsilon)^2 \sigma^2(\{\vstretch(\ve{x}_i,\ve{w})\}_{i=1}^n)) & \mbox{ if } & i = +1\:\:,\\
-(1+\epsilon)^2 \mu(\{\vstretch^2(\ve{x}_i,\ve{w})\}_{i=1}^n)  -
  \tau^2 & \multicolumn{2}{c}{\mbox{ otherwise}}
\end{array}
\right.\:\:,
\end{eqnarray}
and $M_{\max}(i, \ve{w}) \defeq -M_{\min}(-i, \ve{w})$. We also let
$M_{\min}(i) \defeq \inf_{\ve{w}} M_{\min}(i, \ve{w})$ and $M_{\max}(i) \defeq \sup_{\ve{w}} M_{\max}(i, \ve{w})$.
Putting this
altogether, we obtain that 
if $\PERM_t$ is $(\epsilon, \tau)$-accurate, we have
\begin{eqnarray}
\lambda_1^\downarrow(\matrice{v}_t) & \defeq & \left(\inf_{\ve{w}} \frac{\ve{w}^\top\left( \mathrm{sign}(c)\hat{\X}_{t} \hat{\X}_{t}^\top + \nu'\cdot \Gamma
  \right) \ve{w}}{\|\ve{w}\|_2^2}\right)^{-1}\nonumber\\
 & \leq & \frac{1}{m \inf_{\ve{w}} M_{\min}(\mathrm{sign}(c), \ve{w}) + \nu'\lambda_1^\uparrow(\Gamma)}\nonumber\\
 & & = \frac{1}{m}\cdot \frac{1}{M_{\min}(\mathrm{sign}(c), \ve{w})  + \frac{2\gamma}{|c|} \lambda_1^\uparrow(\Gamma)}\:\:,
\end{eqnarray}
and
\begin{eqnarray}
\lambda_1^\uparrow(\matrice{v}_t) & \defeq & \left(\sup_{\ve{w}}
  \frac{\ve{w}^\top\left( \mathrm{sign}(c)\hat{\X}_{t} \hat{\X}_{t}^\top + \nu' \cdot \Gamma
  \right) \ve{w}}{\|\ve{w}\|_2^2}\right)^{-1}\nonumber\\
 & \geq & \frac{1}{ m \sup_{\ve{w}} M_{\max}(\mathrm{sign}(c), \ve{w}) + \nu'\lambda_1^\uparrow(\Gamma)}\nonumber\\
 & & = \frac{1}{m}\cdot \frac{1}{M_{\max}(\mathrm{sign}(c))  + \frac{2\gamma}{|c|} \lambda_1^\uparrow(\Gamma)}\:\:.
\end{eqnarray}
This ends the proof of Lemma \ref{lemV2}.
\end{proof}

\begin{lemma}\label{lemLAMBDAUT}
Suppose $(1-c_{1,t})^2-c_{0,t}c_{2,t} \neq 0$\footnote{This is implied
by the invertibility assumption.} and $\ve{a}_t
\neq \ve{0}$. Then $\matrice{u}_t$ is negative semi-definite iff
$(1-c_{1,t})^2-c_{0,t}c_{2,t} < 0$. Otherwise, $\matrice{u}_t$ is
indefinite. In all cases, for any $z\in
\{\lambda_1^\downarrow(\matrice{u}_t),
|\lambda_1^\uparrow(\matrice{u}_t)|\}$, we have
\begin{eqnarray}
z & \leq & \frac{2+ 3 (c_{0,t} + c_{2,t})}{2|(1-c_{1,t})^2 - c_{0,t}c_{2,t}|} \cdot \max\{\|\ve{a}_t\|^2_2,
\|\ve{b}_t\|^2_2\}\:\:.\label{ineqLAMBDAGENGEN}
\end{eqnarray}
\end{lemma}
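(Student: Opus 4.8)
The plan is to exploit the rank-$\le 2$ block structure of $\matrice{u}_t$: it is a congruence transform of a $2\times2$ matrix, so both its inertia and the magnitudes of its nonzero eigenvalues are governed by just two scalars, the trace and determinant of that $2\times2$ matrix. Write $\varsigma \defeq \mathrm{sign}(c)$, $d_t \defeq 1-\varsigma c_{1,t}$ and $D_t \defeq d_t^2 - c_{0,t}c_{2,t}$ (the denominator in eq.~(\ref{defUT}), nonzero by hypothesis; when $\varsigma = +1$ this is exactly $(1-c_{1,t})^2-c_{0,t}c_{2,t}$). Then, by expanding a $2\times2$ product, one checks that
\[
\matrice{u}_t \:=\: \frac{1}{D_t}\,\matrice{b}_t\,\matrice{k}_t\,\matrice{b}_t^\top, \qquad \matrice{b}_t \defeq \left[\begin{array}{cc}\ve{a}_t & \ve{0}\\ \ve{0} & \ve{b}_t\end{array}\right]\in\mathbb{R}^{d\times 2}, \quad \matrice{k}_t \defeq \left[\begin{array}{cc}c_{2,t} & d_t\\ d_t & c_{0,t}\end{array}\right].
\]
Since $\ve{a}_t\neq\ve{0}$ and $\ve{b}_t\neq\ve{0}$ (the latter w.l.o.g.\ as noted before the lemma), $\matrice{b}_t$ has full column rank and $\matrice{b}_t^\top\matrice{b}_t = \mathrm{Diag}(\|\ve{a}_t\|^2_2,\|\ve{b}_t\|^2_2)\succ 0$. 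Using that $PQ$ and $QP$ share their nonzero eigenvalues (with $P=\matrice{b}_t$, $Q=\matrice{k}_t\matrice{b}_t^\top$) and a conjugation by $(\matrice{b}_t^\top\matrice{b}_t)^{1/2}$, the nonzero eigenvalues of $\matrice{b}_t\matrice{k}_t\matrice{b}_t^\top$ are exactly the eigenvalues of the symmetric $2\times2$ matrix $\matrice{n}_t \defeq (\matrice{b}_t^\top\matrice{b}_t)^{1/2}\matrice{k}_t(\matrice{b}_t^\top\matrice{b}_t)^{1/2}$, all remaining eigenvalues of $\matrice{u}_t$ being $0$; moreover $\matrice{n}_t$ and $\matrice{k}_t$ have the same inertia by Sylvester's law.

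For the definiteness dichotomy I would use $\det(\matrice{k}_t) = c_{0,t}c_{2,t}-d_t^2 = -D_t$ and $\mathrm{tr}(\matrice{k}_t) = c_{0,t}+c_{2,t}>0$ (the regularization makes $\matrice{v}_{t-1}\succ 0$, so $c_{0,t}={\ve{a}^+_t}^\top\matrice{v}_{t-1}\ve{a}^+_t>0$ and likewise $c_{2,t}>0$). If $D_t<0$ then $\det(\matrice{k}_t)>0$ and $\mathrm{tr}(\matrice{k}_t)>0$, so $\matrice{k}_t\succ 0$, hence $\matrice{b}_t\matrice{k}_t\matrice{b}_t^\top\succeq 0$ and $\matrice{u}_t = D_t^{-1}\matrice{b}_t\matrice{k}_t\matrice{b}_t^\top\preceq 0$. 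If $D_t>0$ then $\det(\matrice{k}_t)<0$, so $\matrice{k}_t$ has one positive and one negative eigenvalue; the same is then true of $\matrice{b}_t\matrice{k}_t\matrice{b}_t^\top$, and dividing by $D_t>0$ preserves the signs, so $\matrice{u}_t$ is indefinite. (In the degenerate case $d=2$, $\matrice{u}_t$ may be full rank, so ``negative semi-definite'' should be read as allowing negative definiteness, which this argument still covers.) This is the claimed dichotomy.

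For the quantitative bound, note $\mathrm{tr}(\matrice{n}_t) = \mathrm{tr}(\matrice{k}_t\matrice{b}_t^\top\matrice{b}_t) = c_{2,t}\|\ve{a}_t\|^2_2 + c_{0,t}\|\ve{b}_t\|^2_2$ and $\det(\matrice{n}_t) = \det(\matrice{k}_t)\det(\matrice{b}_t^\top\matrice{b}_t) = -D_t\|\ve{a}_t\|^2_2\|\ve{b}_t\|^2_2$, so the nonzero eigenvalues of $\matrice{u}_t$ are $\mu_\pm/D_t$ with $\mu_\pm$ the (real) roots of $\mu^2 - \mathrm{tr}(\matrice{n}_t)\mu + \det(\matrice{n}_t)$, whence $|\mu_\pm| \le \tfrac12\bigl(|\mathrm{tr}(\matrice{n}_t)| + \sqrt{\mathrm{tr}(\matrice{n}_t)^2 + 4|\det(\matrice{n}_t)|}\bigr) \le |\mathrm{tr}(\matrice{n}_t)| + \sqrt{|\det(\matrice{n}_t)|}$ by $\sqrt{p+q}\le\sqrt{p}+\sqrt{q}$. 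Since $c_{0,t},c_{2,t}\ge 0$, we get $|\mathrm{tr}(\matrice{n}_t)|\le (c_{0,t}+c_{2,t})\max\{\|\ve{a}_t\|^2_2,\|\ve{b}_t\|^2_2\}$ and $\sqrt{|\det(\matrice{n}_t)|} = \sqrt{|D_t|}\,\|\ve{a}_t\|_2\|\ve{b}_t\|_2 \le \sqrt{|D_t|}\max\{\|\ve{a}_t\|^2_2,\|\ve{b}_t\|^2_2\}$. The last ingredient is $\sqrt{|D_t|}\le 1 + \tfrac12(c_{0,t}+c_{2,t})$: Cauchy--Schwarz for the inner product induced by $\matrice{v}_{t-1}\succ 0$ gives $c_{1,t}^2\le c_{0,t}c_{2,t}$, so $|c_{1,t}|\le\sqrt{c_{0,t}c_{2,t}}\le\tfrac12(c_{0,t}+c_{2,t})$ by AM--GM, whence $d_t^2\le(1+|c_{1,t}|)^2\le(1+\tfrac12(c_{0,t}+c_{2,t}))^2$ and $c_{0,t}c_{2,t}\le\tfrac14(c_{0,t}+c_{2,t})^2\le(1+\tfrac12(c_{0,t}+c_{2,t}))^2$, so that $|D_t| = |d_t^2-c_{0,t}c_{2,t}|\le\max\{d_t^2,c_{0,t}c_{2,t}\}\le(1+\tfrac12(c_{0,t}+c_{2,t}))^2$. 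Combining the three estimates, $|\mu_\pm|\le\bigl(1+\tfrac32(c_{0,t}+c_{2,t})\bigr)\max\{\|\ve{a}_t\|^2_2,\|\ve{b}_t\|^2_2\} = \tfrac{2+3(c_{0,t}+c_{2,t})}{2}\max\{\|\ve{a}_t\|^2_2,\|\ve{b}_t\|^2_2\}$, and since both $\lambda_1^\downarrow(\matrice{u}_t)$ and $|\lambda_1^\uparrow(\matrice{u}_t)|$ are at most $\max\{|\mu_+/D_t|,|\mu_-/D_t|\}$ (the only other eigenvalue being $0$), dividing by $|D_t|$ yields ineq.~(\ref{ineqLAMBDAGENGEN}).

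The one genuine idea is spotting the factorization $\matrice{u}_t = D_t^{-1}\matrice{b}_t\matrice{k}_t\matrice{b}_t^\top$, which collapses everything to the $2\times2$ matrix $\matrice{k}_t$; after that, the two facts ``$c_{1,t}^2\le c_{0,t}c_{2,t}$'' (Cauchy--Schwarz in the $\matrice{v}_{t-1}$-metric) and ``$|x-y|\le\max\{x,y\}$ for $x,y\ge 0$'' do the real work, with AM--GM and the triangle inequality handling the rest. The main things to be careful about are the $\mathrm{sign}(c)$ bookkeeping (the statement is written for $\mathrm{sign}(c)=+1$; in general replace $1-c_{1,t}$ by $1-\mathrm{sign}(c)c_{1,t}$, which changes nothing since only $d_t^2$ enters the estimates) and verifying $\matrice{v}_{t-1}\succ 0$ en route, which holds since $\hat{\X}_{t-1}\hat{\X}_{t-1}^\top\succeq 0$ and the Ridge regularization is positive definite (and strong enough when $\mathrm{sign}(c)=-1$).
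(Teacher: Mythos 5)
Your proof is correct and takes a genuinely cleaner route than the paper's. The paper works directly in $\mathbb{R}^d$: it inspects the block action of $\matrice{u}_t$ on vectors $(\alpha\ve{a}_t,\ve{b}_t)$, splits into cases $c_{1,t}=1$ (block-diagonal) versus $c_{1,t}\neq 1$, and solves a quadratic for the two nonzero eigenvalues before bounding them; the sign analysis of the discriminant and numerator then delivers the semi-definite/indefinite dichotomy. You instead expose the rank-$\le 2$ structure through the congruence $\matrice{u}_t=D_t^{-1}\matrice{b}_t\matrice{k}_t\matrice{b}_t^\top$ with $\matrice{b}_t=[\ve{a}^+_t\ \ \ve{b}^+_t]$ and a $2\times 2$ symmetric $\matrice{k}_t$, then push everything onto $\matrice{k}_t$: Sylvester's law handles the dichotomy in one stroke with no case split, and the trace/determinant give the eigenvalue bound uniformly (absorbing the paper's Case 1 as the $d_t=0$ specialization). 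A small stylistic difference in the last inequality: the paper derives $c_{1,t}\le\tfrac12(c_{0,t}+c_{2,t})$ from $(\ve{a}^+_t-\ve{b}^+_t)^\top\matrice{v}_{t-1}(\ve{a}^+_t-\ve{b}^+_t)\ge 0$, while you use Cauchy--Schwarz in the $\matrice{v}_{t-1}$-inner product ($c_{1,t}^2\le c_{0,t}c_{2,t}$) plus AM--GM, which gives the two-sided $|c_{1,t}|\le\tfrac12(c_{0,t}+c_{2,t})$ directly — slightly tighter, and it is exactly what is needed since $|d_t|\le 1+|c_{1,t}|$ is what enters. You also flag correctly that the lemma's displayed condition is written for $\mathrm{sign}(c)=+1$; with general sign it must be read as $(1-\mathrm{sign}(c)\,c_{1,t})^2-c_{0,t}c_{2,t}$, matching the denominator in the definition of $\matrice{u}_t$, and your argument goes through verbatim since only $d_t^2$ appears. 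The only thing worth making explicit (you gesture at it) is the dependence on $\matrice{v}_{t-1}\succ 0$, which is immediate when $\mathrm{sign}(c)=+1$ and requires the data-model calibration (or the regularization strength it encodes) when $\mathrm{sign}(c)=-1$; the paper relies on this at the same point.
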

\begin{proof}
Consider a block-vector following the column-block partition of $\matrice{u}_t$,
\begin{eqnarray}
\tilde{\ve{x}} & \defeq & \left[\begin{array}{c}
\ve{x}\\ \hline
\ve{y}
\end{array}\right]\:\:.\label{defVB}
\end{eqnarray}
Denote for short $\zeta \defeq (1-c_{1,t})^2-c_{0,t}c_{2,t}$. We have
\begin{eqnarray}
\matrice{u}_t \tilde{\ve{x}} & = & \frac{1}{\zeta}\cdot \left[\begin{array}{c}
(c_{2,t} (\ve{a}_t^\top \ve{x}) + (1-c_{1,t})(\ve{b}_t^\top \ve{y}))\cdot \ve{a}_t\\ \hline
((1 - c_{1,t}) (\ve{a}_t^\top \ve{x}) + c_{0,t}(\ve{b}_t^\top \ve{y}))\cdot \ve{b}_t
\end{array}\right]\:\:.\label{eqEIGV}
\end{eqnarray}
We see that the only possibility for $\tilde{\ve{x}}$ to be an
eigenvector is that $\ve{x} \propto \ve{a}_t$ and $\ve{y} \propto
\ve{b}_t$ (including the null vector for at most one vector). We now distinguish two cases.\\

\noindent \textbf{Case 1.} $c_{1,t} = 1$. In this case,
$\matrice{u}_t$ is block diagonal and so we get two eigenvectors:
\begin{eqnarray}
\matrice{u}_t \left[\begin{array}{c}
\ve{a}_t\\ \hline
\ve{0}
\end{array}\right] & = & -\frac{1}{c_{0,t}c_{2,t}}\cdot \left[\begin{array}{c|c}
c_{2,t} \cdot \ve{a}_t \ve{a}_t^\top & \matrice{0} \\ \cline{1-2}
\matrice{0} & c_{0,t} \cdot \ve{b}_t\ve{b}_t^\top 
\end{array}\right]\left[\begin{array}{c}
\ve{a}\\ \hline
\ve{0}
\end{array}\right]\nonumber\\
 & = & -\frac{1}{\lambda(\ve{a}^+_t)}\cdot \left[\begin{array}{c}
\ve{a}_t\\ \hline
\ve{0}
\end{array}\right]\:\:,
\end{eqnarray}
with (since $\|\ve{a}^+_t\|_2^2 = \|\ve{a}_t\|_2^2$):
\begin{eqnarray}
\lambda(\ve{a}^+_t) & \defeq & \frac{{\ve{a}^+_t}^\top  \matrice{v}_{t-1}
    \ve{a}^+_t}{\|\ve{a}^+_t\|_2^2}\:\:,
\end{eqnarray}
and
\begin{eqnarray}
\matrice{u}_t \left[\begin{array}{c}
\ve{0}\\ \hline
\ve{b}_t
\end{array}\right] & = & -\frac{1}{\lambda(\ve{b}^+_t)}\cdot \left[\begin{array}{c}
\ve{0}\\ \hline
\ve{b}_t
\end{array}\right] \:\:, \lambda(\ve{b}^+_t) \defeq \frac{{\ve{b}^+_t}^\top  \matrice{v}_{t-1}
    \ve{b}^+_t}{\|\ve{b}^+_t\|_2^2}\:\:.
\end{eqnarray}
We also remark that $\matrice{u}_t$ is negative semi-definite.

\noindent \textbf{Case 2.} $c_{1,t} \neq 1$. In this case, let us assume without loss of generality that for some
$\alpha \in \mathbb{R}_{*}$,
\begin{eqnarray}
\ve{x} & = & \alpha\cdot \ve{a}_t \:\:, \nonumber\\
\ve{y} & = & \ve{b}_t \:\:.\nonumber
\end{eqnarray}
In this case, we obtain 
\begin{eqnarray}
\matrice{u}_t \tilde{\ve{x}} & = & \frac{(1 - c_{1,t}) (\ve{a}_t^\top \ve{x}) + c_{0,t}(\ve{b}_t^\top \ve{y})}{(1-c_{1,t})^2-c_{0,t}c_{2,t}}\cdot \left[\begin{array}{c}
\frac{c_{2,t} (\ve{a}_t^\top \ve{x}) + (1-c_{1,t})(\ve{b}_t^\top
  \ve{y})}{(1 - c_{1,t}) (\ve{a}_t^\top \ve{x}) +
  c_{0,t}(\ve{b}_t^\top \ve{y})} \cdot \ve{a}_t\\ \hline
\ve{b}_t
\end{array}\right] \nonumber\\
 & = & \frac{\alpha (1 - c_{1,t})\|\ve{a}_t\|_2^2 + c_{0,t}\|\ve{b}_t\|_2^2}{(1-c_{1,t})^2-c_{0,t}c_{2,t}}\cdot \left[\begin{array}{c}
\frac{\alpha  c_{2,t} \|\ve{a}_t\|_2^2 + (1-c_{1,t})\|\ve{b}_t\|_2^2}{\alpha  (1-c_{1,t}) \|\ve{a}_t\|_2^2 + c_{0,t}\|\ve{b}_t\|_2^2}\cdot \ve{a}_t\\ \hline
\ve{b}_t
\end{array}\right]  \defeq \lambda \cdot \tilde{\ve{x}}\:\:,
\end{eqnarray}
and so we obtain the eigenvalue 
\begin{eqnarray}
\lambda & = & \frac{\alpha (1 - c_{1,t})\|\ve{a}_t\|_2^2 + c_{0,t}\|\ve{b}_t\|_2^2}{(1-c_{1,t})^2-c_{0,t}c_{2,t}}\:\:,
\end{eqnarray}
and we get from the eigenvector that $\alpha$ satisfies
\begin{eqnarray}
\alpha & = & \frac{\alpha  c_{2,t} \|\ve{a}_t\|_2^2 + (1-c_{1,t})\|\ve{b}_t\|_2^2}{\alpha (1-c_{1,t}) \|\ve{a}_t\|_2^2 + c_{0,t}\|\ve{b}_t\|_2^2}\:\:,
\end{eqnarray}
and so
\begin{eqnarray}
(1-c_{1,t}) \|\ve{a}_t\|_2^2 \alpha^2 +
(c_{0,t}\|\ve{b}_t\|_2^2-c_{2,t}\|\ve{a}_t\|_2^2) \alpha -
(1-c_{1,t})\|\ve{b}_t\|_2^2 & = & 0\:\:.
\end{eqnarray}
We note that the discriminant is
\begin{eqnarray}
\tau & = & 
(c_{0,t}\|\ve{b}_t\|_2^2-c_{2,t}\|\ve{a}_t\|_2^2)^2 + 4 (1-c_{1,t})^2
\|\ve{a}_t\|_2^2\|\ve{b}_t\|_2^2\:\:,
\end{eqnarray}
which is always $>0$.  Therefore we always have
two roots,
\begin{eqnarray}
\alpha_{\pm} & = & \frac{c_{2,t}\|\ve{a}_t\|_2^2 - c_{0,t}\|\ve{b}_t\|_2^2 \pm\sqrt{(c_{0,t}\|\ve{b}_t\|_2^2-c_{2,t}\|\ve{a}_t\|_2^2)^2 + 4 (1-c_{1,t})^2
\|\ve{a}_t\|_2^2\|\ve{b}_t\|_2^2}}{2 (1-c_{1,t}) \|\ve{a}_t\|_2^2 }\:\:.
\end{eqnarray}
yielding two non-zero eigenvalues,
\begin{eqnarray}
\lambda_{\pm}(\matrice{u}_t) & = & \frac{1}{2\zeta}\cdot\left(c_{2,t}\|\ve{a}_t\|_2^2 + c_{0,t}\|\ve{b}_t\|_2^2 \pm\sqrt{(c_{0,t}\|\ve{b}_t\|_2^2-c_{2,t}\|\ve{a}_t\|_2^2)^2 + 4 (1-c_{1,t})^2
\|\ve{a}_t\|_2^2\|\ve{b}_t\|_2^2}\right)\:\:.
\end{eqnarray}
Let us analyze the sign of both eigenvalues. For the
numerator of $\lambda_-$ to be negative, we have equivalently after simplification
\begin{eqnarray}
(c_{2,t}\|\ve{a}_t\|_2^2 + c_{0,t}\|\ve{b}_t\|_2^2)^2 & < & (c_{0,t}\|\ve{b}_t\|_2^2-c_{2,t}\|\ve{a}_t\|_2^2)^2 + 4 (1-c_{1,t})^2
\|\ve{a}_t\|_2^2\|\ve{b}_t\|_2^2\:\:,
\end{eqnarray}
which simplifies in $c_{0,t}c_{2,t} < (1-c_{1,t})^2$, \textit{i.e.} $\zeta
> 0$. Hence, $\lambda_- < 0$.

Now, for $\lambda_+$, it is easy to check that its sign is that of
$\zeta$. When $\zeta>0$, we have $\lambda_+ \geq |\lambda_-|$, and
because $a^2 + b^2 \leq (|a|+|b|)^2$, we get
\begin{eqnarray}
\lambda_1^\downarrow(\matrice{u}_t) = \lambda_+ & \leq & \frac{1}{2}\cdot\left(c_{2,t}\|\ve{a}_t\|_2^2 +
  c_{0,t}\|\ve{b}_t\|_2^2 + |c_{0,t}\|\ve{b}_t\|_2^2-c_{2,t}\|\ve{a}_t\|_2^2| + 2 (1-c_{1,t})
\|\ve{a}_t\|_2\|\ve{b}_t\|_2\right)\nonumber\\
 & \leq & c_{2,t}\|\ve{a}_t\|_2^2 +
  c_{0,t}\|\ve{b}_t\|_2^2 + (1-c_{1,t})
\|\ve{a}_t\|_2\|\ve{b}_t\|_2\:\:.\label{defLAMBDA}
\end{eqnarray}
Now, remark that because $\matrice{v}_t$ is positive definite,
\begin{eqnarray}
c_{0,t} - 2 c_{1,t} + c_{2,t} & \defeq & {{\ve{a}^+_t}}^\top  \matrice{v}_{t}
    {\ve{a}^+_t} - 2 {{\ve{a}^+_t}}^\top  \matrice{v}_{t}
    {\ve{b}^+_t} + {{\ve{b}^+_t}}^\top  \matrice{v}_{t}
    {\ve{b}^+_t}\nonumber\\
 & = & (\ve{a}^+_t-\ve{b}^+_t)^\top\matrice{v}_t
 (\ve{a}^+_t-\ve{b}^+_t) \nonumber\\
 & \geq & 0\:\:,
\end{eqnarray}
showing that $c_{1,t}  \leq (c_{0,t}+ c_{2,t})/2$. So we get from
ineq. (\ref{defLAMBDA}),
\begin{eqnarray}
\lambda_1^\downarrow(\matrice{u}_t) & \leq &  \frac{1}{\zeta}\cdot\left( c_{2,t}\|\ve{a}_t\|_2^2 +
  c_{0,t}\|\ve{b}_t\|_2^2 + \left(1+\frac{c_{0,t} + c_{2,t}}{2}\right)
\|\ve{a}_t\|_2\|\ve{b}_t\|_2\right)\nonumber\\
 & \leq & \frac{1}{\zeta}\cdot\left(1 +
  \frac{3}{2}\cdot (c_{0,t} + c_{2,t}) \right) \cdot \max\{\|\ve{a}_t\|^2_2,
\|\ve{b}_t\|^2_2\}\nonumber\\
 & \leq & \frac{2+ 3 (c_{0,t} + c_{2,t})}{2((1-c_{1,t})^2 - c_{0,t}c_{2,t})} \cdot \max\{\|\ve{a}_t\|^2_2,
\|\ve{b}_t\|^2_2\}\:\:.\label{ineqLAMBDAGEN}
\end{eqnarray}
When $\zeta<0$, we remark that $\lambda_+ < \lambda_-$ and so
$\matrice{u}_t$ is negative semi-definite.\\

Whenever $c_{1,t} \neq 1$, it is then easy to check that for any $z
\in \{|\lambda_+|, |\lambda_-|\}$,
ineq. (\ref{ineqLAMBDAGEN}) brings
\begin{eqnarray}
z & \leq &  \frac{2+ 3 (c_{0,t} + c_{2,t})}{2|(1-c_{1,t})^2 - c_{0,t}c_{2,t}|} \cdot \max\{\|\ve{a}_t\|^2_2,
\|\ve{b}_t\|^2_2\}\:\:.\label{ineqLAMBDAGEN2}
\end{eqnarray}
Whenever $c_{1,t} = 1$ (Case 1.), it is also immediate to check that
for any $z
\in \{|-1/\lambda(\ve{a}^+_t)|, |-1/\lambda(\ve{b}^+_t)|\}$,
\begin{eqnarray}
z & \leq & \max\left\{\frac{1}{c_{0,t}}, \frac{1}{c_{2,t}}\right\}\cdot \max\{\|\ve{a}_t\|^2_2,
\|\ve{b}_t\|^2_2\} \nonumber\\
 & < & \left(1+\frac{3}{c_{0,t}}+ \frac{3}{c_{2,t}}\right) \cdot \max\{\|\ve{a}_t\|^2_2,
\|\ve{b}_t\|^2_2\}\nonumber\\
 & & = \frac{2+ 3 (c_{0,t} + c_{2,t})}{2|(1-c_{1,t})^2 - c_{0,t}c_{2,t}|} \cdot \max\{\|\ve{a}_t\|^2_2,
\|\ve{b}_t\|^2_2\}\:\:.
\end{eqnarray}
Once we remark that $c_{1,t} = 1$ implies $\zeta < 0$, we obtain the statement of Lemma \ref{lemLAMBDAUT}.
\end{proof}

\begin{lemma}\label{lemV}
If $\PERM_t$ is $(\epsilon, \tau)$-accurate, then the following
holds true:
\begin{eqnarray}
\|\ve{b}^+_t\|_2^2 = \|\ve{b}_t\|_2^2 & \leq & 2\xi \cdot X_*^2\:\:,\label{b11}\\
\|\ve{a}^+_t\|_2^2 = \|\ve{a}_t\|_2^2 & \leq & 2\xi \cdot X_*^2\:\:,\label{b12}
\end{eqnarray}
where $\xi$ is defined in eq. (\ref{defXI}).
\end{lemma}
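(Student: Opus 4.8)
The plan is to bound $\|\ve{a}_t\|_2$ and $\|\ve{b}_t\|_2$ in two complementary ways and combine them multiplicatively. First, the equalities $\|\ve{a}^+_t\|_2 = \|\ve{a}_t\|_2$ and $\|\ve{b}^+_t\|_2 = \|\ve{b}_t\|_2$ are immediate from eqs.~(\ref{defAPLUST})--(\ref{defBPLUST}): $\ve{a}^+_t$ (resp.\ $\ve{b}^+_t$) is $\ve{a}_t$ (resp.\ $\ve{b}_t$) padded with a zero block, and zero-padding does not change the Euclidean norm, so it suffices to bound $\|\ve{a}_t\|_2^2$ and $\|\ve{b}_t\|_2^2$. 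The crude bound is the triangle inequality: since restricting a vector to a sub-block of coordinates cannot increase its norm and every column of $\X$ has norm at most $X_*$, we get $\|\ve{a}_t\|_2 = \|(\ve{x}_{\ua{t}} - \ve{x}_{\va{t}})_{\anchor}\|_2 \leq \|\ve{x}_{\ua{t}}\|_2 + \|\ve{x}_{\va{t}}\|_2 \leq 2X_*$, and identically $\|\ve{b}_t\|_2 \leq 2X_*$.

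The sharp bound is where $(\epsilon, \tau)$-accuracy enters. For $\ve{a}_t$, I would apply ineq.~(\ref{defACCURATE2}) with $\F = \anchor$ to the test vector $\ve{w}$ whose anchor block equals $\ve{a}_t$ and whose shuffle block is $\ve{0}$. Then the left-hand side becomes $|\ve{a}_t^\top \ve{a}_t| = \|\ve{a}_t\|_2^2$, while on the right-hand side $\|\ve{w}\|_2 = \|\ve{a}_t\|_2$ and, for each index $i$ appearing in the maximum, $|\ve{x}_i^\top \ve{w}| = |(\ve{x}_i)_{\anchor}^\top \ve{a}_t| \leq \|(\ve{x}_i)_{\anchor}\|_2\,\|\ve{a}_t\|_2 \leq X_*\,\|\ve{a}_t\|_2$ by Cauchy--Schwartz. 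Hence $\|\ve{a}_t\|_2^2 \leq (\epsilon X_* + \tau)\,\|\ve{a}_t\|_2$, and dividing by $\|\ve{a}_t\|_2$ (the case $\ve{a}_t = \ve{0}$ being trivial) gives $\|\ve{a}_t\|_2 \leq \epsilon X_* + \tau = \xi X_*$ by the definition~(\ref{defXI}) of $\xi$. The symmetric choice --- test vector supported on the shuffle block with value $\ve{b}_t$ there, and ineq.~(\ref{defACCURATE2}) with $\F = \shuffle$ --- yields $\|\ve{b}_t\|_2 \leq \xi X_*$.

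Combining, $\|\ve{a}_t\|_2^2 = \|\ve{a}_t\|_2 \cdot \|\ve{a}_t\|_2 \leq (2X_*)(\xi X_*) = 2\xi X_*^2$, and likewise $\|\ve{b}_t\|_2^2 \leq 2\xi X_*^2$, which is the claim. There is no genuine obstacle in this lemma; the only points worth pausing on are the choice of test vector $\ve{w}$ that converts the bilinear inequality of Definition~\ref{defACCURATE} into a norm bound, and the decision to keep one crude factor $2X_*$ rather than squaring the sharp bound --- it is exactly this that produces the stated $2\xi X_*^2$ instead of $\xi^2 X_*^2$, and hence a bound that remains informative precisely when $\xi$ is small, which is the regime the paper cares about.
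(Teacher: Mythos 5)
Your proof is correct, but the route is genuinely different from the paper's. The paper instantiates ineq.~(\ref{defACCURATE2}) with the \emph{observations} themselves as test vectors --- $\ve{w} = \ve{x}_{\vb{t}}/\|\ve{x}_{\vb{t}}\|_2$ and $\ve{w} = \ve{x}_{\ub{t}}/\|\ve{x}_{\ub{t}}\|_2$ --- obtaining $|\ve{b}_t^\top (\ve{x}_{\vb{t}})_\shuffle| \leq \xi X_*^2$ and $|\ve{b}_t^\top (\ve{x}_{\ub{t}})_\shuffle| \leq \xi X_*^2$, and then expands $\|\ve{b}_t\|_2^2 = \ve{b}_t^\top(\ve{x}_{\ub{t}} - \ve{x}_{\vb{t}})_\shuffle$ and adds the two bounds. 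You instead instantiate (\ref{defACCURATE2}) a single time with the \emph{error vector itself} (zero-padded) as the test direction, which is the natural dual move for extracting a norm bound and yields the clean intermediate estimate $\|\ve{a}_t\|_2 \leq \xi X_*$ (and likewise for $\ve{b}_t$), something the paper never isolates; you then multiply by the crude $2X_*$ bound to recover the stated constant. Both proofs are about the same length and both hinge on the same Cauchy--Schwartz step $|\ve{x}_i^\top \ve{w}| \leq X_* \|\ve{w}\|_2$; yours has the slight advantage of producing the sharper standalone inequality $\|\ve{a}_t\|_2 \leq \xi X_*$ as a byproduct. One small quibble with your closing remark: the reason to present $2\xi X_*^2$ rather than $\xi^2 X_*^2$ is not that the latter is less informative for small $\xi$ --- for $\xi < 2$ it is strictly tighter --- but simply that $2\xi X_*^2$ is the form the lemma states and the form the downstream lemmas (e.g.\ Lemma~\ref{lemBOUNDCT}) consume; the honest summary is that your argument actually proves the stronger bound $\min\{\xi^2, 4\}X_*^2$, of which $2\xi X_*^2$ is a uniformly valid relaxation.
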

\begin{proof}
To prove ineq. (\ref{b11}), we make two applications of point 2. in the $(\epsilon, \tau)$-accuracy
assumption with $\F \defeq \shuffle$:
\begin{eqnarray}
\vstretch((\ve{x}_{\ub{t}} -
\ve{x}_{\vb{t}})_{\F},\ve{w}_{\F}) & \leq & \epsilon \cdot \max_{i\in
  \{\ub{t}, \vb{t}\}} \vstretch(\ve{x}_{i},\ve{w})+
\tau \:\:, \nonumber\\
 & & \forall \ve{w}\in \mathbb{R}^d :
\|\ve{w}\|_2 = 1 \:\:.
\end{eqnarray}
Fix $\ve{w} \defeq (1/\|\ve{x}_{\vb{t}}\|_2)\cdot \ve{x}_{\vb{t}}$. We get:
\begin{eqnarray}
|(\ve{x}_{\vb{t}} - \ve{x}_{\ub{t}})_\shuffle^\top \ve{x}_{\vb{t}_\shuffle}|
& \leq & \epsilon \cdot \max\{|\ve{x}_{\ub{t}}^\top \ve{x}_{\vb{t}}|,
\|\ve{x}_{\vb{t}} \|_2^2\} + \tau \cdot \|\ve{x}_{\vb{t}} \|_2
\nonumber\\
 & \leq & \epsilon \cdot X_*^2 + \tau \cdot X_*  = \xi \cdot X_*^2\:\:.\label{feqq1}
\end{eqnarray} 
Fix $\ve{w} \defeq (1/\|\ve{x}_{\ub{t}}\|_2)\cdot \ve{x}_{\ub{t}}$. We get:
\begin{eqnarray}
|(\ve{x}_{\ub{t}} - \ve{x}_{\vb{t}})_\shuffle^\top \ve{x}_{\ub{t}_\shuffle}|
& \leq & \epsilon \cdot \max\{|\ve{x}_{\ub{t}}^\top \ve{x}_{\vb{t}}|,
\|\ve{x}_{\ub{t}} \|_2^2\} + \tau \cdot \|\ve{x}_{\ub{t}} \|_2
\nonumber\\
 & \leq & \epsilon \cdot X_*^2 + \tau \cdot X_* = \xi \cdot X_*^2 \:\:. \label{feqq2}
\end{eqnarray} 
Folding together ineqs. (\ref{feqq1}) and (\ref{feqq2}) yields
\begin{eqnarray}
\lefteqn{\|(\ve{x}_{\vb{t}} - \ve{x}_{\ub{t}})_\shuffle\|_2^2 = (\ve{x}_{\vb{t}} -
\ve{x}_{\ub{t}})^\top_\shuffle (\ve{x}_{\vb{t}} - \ve{x}_{\ub{t}})_\shuffle}
\nonumber\\
 & \leq & |(\ve{x}_{\vb{t}} -
\ve{x}_{\ub{t}})^\top_\shuffle \ve{x}_{\vb{t} _\shuffle}| + |(\ve{x}_{\vb{t}} -
\ve{x}_{\ub{t}})^\top_\shuffle \ve{x}_{\ub{t} _\shuffle}| \nonumber\\
 & \leq & 2  \xi \cdot X_*^2\:\:.
\end{eqnarray}
We get 
\begin{eqnarray}
\|\ve{b}^+_t\|_2^2 = \|\ve{b}_t\|_2^2 = \|(\ve{x}_{\vb{t}} - \ve{x}_{\ub{t}})_\shuffle\|_2^2 & \leq & 2  \xi \cdot X_*^2\:\:,\label{b11F}
\end{eqnarray}
which yields ineq. (\ref{b11}).
To get ineq. (\ref{b12}),
we switch $\F \defeq \shuffle$ by $\F \defeq \anchor$ in our
application of point 2. in the $(\epsilon, \tau)$-accuracy
assumption. 
\end{proof} 
\begin{lemma}\label{lemBOUNDCT}
If $\PERM_t$ is $(\epsilon, \tau)$-accurate and the data-model calibration assumption holds,
\begin{eqnarray}
c_{i,t} & \leq & \frac{1}{12}\:\:, \forall i \in \{0, 1, 2\}\:\:.
\end{eqnarray}
\end{lemma}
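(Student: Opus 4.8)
The plan is to reduce all three estimates to a single bound on the top eigenvalue of $\matrice{v}_{t-1}$, combined with the vector-norm control already supplied by Lemma~\ref{lemV}. First I would note that the Taylor loss in eq.~(\ref{mainLoss}) is convex and that $\gamma>0$, $\Gamma\succ 0$; consequently $\matrice{v}_{t-1}^{-1}=\mathrm{sign}(c)\,\hat{\X}_{t-1}\hat{\X}_{t-1}^\top+\nu'\Gamma\succ 0$, so $\matrice{v}_{t-1}$ is positive definite. In particular its operator norm equals $\lambda_1^\downarrow(\matrice{v}_{t-1})$, and $(\ve{x},\ve{y})\mapsto\ve{x}^\top\matrice{v}_{t-1}\ve{y}$ is a genuine inner product, which I will use to handle the cross term $c_{1,t}$. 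Lemma~\ref{lemV}, ineqs.~(\ref{b11})--(\ref{b12}), gives $\|\ve{a}^+_t\|_2^2,\|\ve{b}^+_t\|_2^2\le 2\xi X_*^2$, so everything reduces to proving $\lambda_1^\downarrow(\matrice{v}_{t-1})\le 1/(8mX_*^2)$.

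For that bound I would invoke Lemma~\ref{lemV2}, ineq.~(\ref{eq002F}), which reads $\lambda_1^\downarrow(\matrice{v}_{t-1})\le (1/m)\bigl(M_{\min}(\mathrm{sign}(c))+(2\gamma/|c|)\lambda_1^\uparrow(\Gamma)\bigr)^{-1}$, and then translate the denominator into the quantity controlled by the data--model calibration assumption. A short sign check shows $U(\mathrm{sign}(c))=|c|\cdot M_{\min}(\mathrm{sign}(c))$: when $c>0$ the infimum over $\ve{w}$ of $c(1-\epsilon)^2 v_{\mathrm{s}}(\ve{w})$ is $c$ times $\inf_{\ve{w}}(1-\epsilon)^2 v_{\mathrm{s}}(\ve{w})=M_{\min}(+1)$, while when $c<0$ the infimum of $c\bigl((1+\epsilon)^2\mu_{\mathrm{s}}(\ve{w})+\tau^2\bigr)$ turns into $c$ times a supremum, which equals $|c|\cdot M_{\min}(-1)$. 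Hence $M_{\min}(\mathrm{sign}(c))+(2\gamma/|c|)\lambda_1^\uparrow(\Gamma)=(2/|c|)\bigl(\frac{1}{2}U(\mathrm{sign}(c))+\gamma\lambda_1^\uparrow(\Gamma)\bigr)$. Condition~(a) of Definition~\ref{def:DMC} rearranges to $\frac{1}{2}U(\mathrm{sign}(c))+\gamma\lambda_1^\uparrow(\Gamma)\ge 2X_*^2\max\{|F'(0)|,2|c|\}\ge 4|c|X_*^2$, so the denominator is $\ge 8X_*^2$ and indeed $\lambda_1^\downarrow(\matrice{v}_{t-1})\le 1/(8mX_*^2)$.

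It then remains to assemble the bounds. From the eigenvalue estimate and Lemma~\ref{lemV}, $c_{0,t}={\ve{a}^+_t}^\top\matrice{v}_{t-1}\ve{a}^+_t\le \lambda_1^\downarrow(\matrice{v}_{t-1})\,\|\ve{a}^+_t\|_2^2\le \xi/(4m)$, and likewise $c_{2,t}\le \xi/(4m)$; for the cross term, Cauchy--Schwarz in the $\matrice{v}_{t-1}$-inner product gives $c_{1,t}\le|c_{1,t}|\le\sqrt{c_{0,t}c_{2,t}}\le\frac{1}{2}(c_{0,t}+c_{2,t})\le\xi/(4m)$. Finally condition~(b) of Definition~\ref{def:DMC}, $m\ge 4\xi$, yields $c_{i,t}\le \xi/(4m)\le 1/16\le 1/12$ for all $i\in\{0,1,2\}$, which proves the claim (with room to spare). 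The one genuinely delicate step is the middle paragraph: matching the $M_{\min}$-parametrisation produced by Lemma~\ref{lemV2} with the $U$-parametrisation used in the calibration assumption, getting the $\mathrm{sign}(c)$ bookkeeping right, and making sure the positive definiteness of $\matrice{v}_{t-1}$ (needed for the Cauchy--Schwarz step) really follows from convexity of the Taylor loss and not merely from $\gamma>0$, $\Gamma\succ 0$.
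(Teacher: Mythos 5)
Your proposal is correct and follows essentially the same route as the paper: bound $\|\ve{a}^+_t\|_2^2, \|\ve{b}^+_t\|_2^2$ via Lemma~\ref{lemV}, use Lemma~\ref{lemV2} for $\lambda_1^\downarrow(\matrice{v}_{t-1})$, translate $M_{\min}(\mathrm{sign}(c))$ to $U(\mathrm{sign}(c))/|c|$ so that condition~(a) of Definition~\ref{def:DMC} yields $\lambda_1^\downarrow(\matrice{v}_{t-1})\le 1/(8mX_*^2)$, and close with condition~(b) to reach $\xi/(4m)\le 1/16<1/12$, which is the same numerical bound the paper obtains. The only cosmetic difference is in the cross term: you bound $|c_{1,t}|\le\sqrt{c_{0,t}c_{2,t}}$ by Cauchy--Schwarz in the $\matrice{v}_{t-1}$ inner product, whereas the paper appeals to $(\ve{a}^+_t-\ve{b}^+_t)^\top\matrice{v}_{t-1}(\ve{a}^+_t-\ve{b}^+_t)\ge 0$ to get $c_{1,t}\le(c_{0,t}+c_{2,t})/2$; these are equivalent here (AM--GM), and your observation that positive definiteness of $\matrice{v}_{t-1}$ for $\mathrm{sign}(c)=-1$ is underwritten by the calibration assumption rather than by $\Gamma\succ 0$ alone is the right caveat.
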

\begin{proof}
We remark that
\begin{eqnarray}
c_{0,t} & \defeq & {\ve{a}^+_t}^\top  \matrice{v}_{t}
    {\ve{a}^+_t}\nonumber\\
 & \leq & \lambda_1^\downarrow(\matrice{v}_t) \|\ve{a}^+_t\|_2^2\nonumber\\
 & \leq & 2 \lambda_1^\downarrow(\matrice{v}_t) \xi \cdot X_*^2\:\:,\nonumber
\end{eqnarray}
and for the same reasons, $c_{2,t} \leq 2 \lambda_1^\downarrow(\matrice{v}_t) \xi \cdot X_*^2$. Hence, it comes from the proof of Lemma
 \ref{lemLAMBDAUT} that we also have $c_{2,t} \leq 2 \lambda_1^\downarrow(\matrice{v}_t) \xi
 \cdot X_*^2$. Using ineq. (\ref{eq002F}) in Lemma \ref{lemV2}, we thus obtain for any $i \in \{0,
 1, 2\}$:
\begin{eqnarray}
c_{i,t} & \leq & \frac{1}{m}\cdot \frac{2 \xi \cdot X_*^2}{ M_{\min}(\mathrm{sign}(c), \ve{w})  + \frac{2\gamma}{|c|} \lambda_1^\uparrow(\Gamma)} \nonumber\\
 & & = \frac{\xi}{m}\cdot |c| \cdot \frac{X_*^2}{
   \frac{|c|}{2}  M_{\min}(\mathrm{sign}(c), \ve{w})  +  \gamma \lambda_1^\uparrow(\Gamma)}\nonumber\\
 & \leq & \frac{1}{4} \cdot \frac{1}{4} < \frac{1}{12}\:\:,
\end{eqnarray}
as claimed. The last inequality uses the data-model calibration assumption.
\end{proof}

\begin{corollary}\label{corBOUNDCT}
Suppose $\PERM_t$ is $(\epsilon, \tau)$-accurate for any $t\geq 1$ and the data-model
calibration assumption holds. Then the invertibility assumption holds.
\end{corollary}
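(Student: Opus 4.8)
The plan is to prove the Corollary by induction on $t\geq 1$, establishing simultaneously that every $\matrice{v}_t$ exists --- so that the scalars $c_{0,t},c_{1,t},c_{2,t}$ and the matrix $\matrice{u}_t$ are well defined --- and that the invertibility condition holds at step $t$. The base case is immediate: taking $\gamma>0$ and $\Gamma\succ 0$ makes $\mathrm{sign}(c)\cdot\hat{\X}_0\hat{\X}_0^\top+\nu'\cdot\Gamma$ positive definite, so $\matrice{v}_0$ exists; and since $\hat{\X}_0=\X$, whose columns trivially satisfy the $(\epsilon,\tau)$-accuracy bounds (with $\epsilon=0$), the eigenvalue estimates of Lemma~\ref{lemV2} apply to $\matrice{v}_0$ too.

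For the inductive step, suppose $\matrice{v}_{t-1}$ exists. Then $c_{0,t},c_{1,t},c_{2,t}$ are well defined, and since $\matrice{v}_{t-1}\succ 0$ we have $c_{0,t}={\ve{a}^+_t}^\top\matrice{v}_{t-1}\ve{a}^+_t\geq 0$ and $c_{2,t}\geq 0$. Lemma~\ref{lemBOUNDCT} --- which is precisely where the hypotheses enter, through the eigenvalue bound of Lemma~\ref{lemV2}, the norm bounds $\|\ve{a}^+_t\|_2^2,\|\ve{b}^+_t\|_2^2\leq 2\xi X_*^2$ of Lemma~\ref{lemV}, and the data--model calibration inequality --- yields $c_{0,t},c_{2,t}\leq 1/12$. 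The coefficient $c_{1,t}$ is a priori signed, so I would bound its modulus separately: either by Cauchy--Schwartz for the inner product induced by $\matrice{v}_{t-1}\succ 0$, giving $|c_{1,t}|\leq\sqrt{c_{0,t}c_{2,t}}\leq 1/12$, or by expanding $(\ve{a}^+_t\pm\ve{b}^+_t)^\top\matrice{v}_{t-1}(\ve{a}^+_t\pm\ve{b}^+_t)\geq 0$ to get $|c_{1,t}|\leq (c_{0,t}+c_{2,t})/2\leq 1/12$ (the half of this inequality bounding $c_{1,t}$ from above already appears inside the proof of Lemma~\ref{lemLAMBDAUT}).

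With $|c_{1,t}|\leq 1/12$ in hand, the rest is a one-line arithmetic comparison. Since $\mathrm{sign}(c)\in\{-1,+1\}$, we have $1-\mathrm{sign}(c)\cdot c_{1,t}\in[11/12,\,13/12]$, hence $\mathrm{sign}(c)\cdot c_{1,t}\neq 1$ and $(1-\mathrm{sign}(c)\cdot c_{1,t})^2\geq (11/12)^2>0$; moreover $c_{0,t}c_{2,t}\leq (1/12)^2=1/144<121/144\leq(1-\mathrm{sign}(c)\cdot c_{1,t})^2$. Thus $(1-\mathrm{sign}(c)\cdot c_{1,t})^2\notin\{0,c_{0,t}c_{2,t}\}$, which is exactly the invertibility condition at step $t$; Lemma~\ref{lemCONDVT} then guarantees that $\matrice{v}_t$ exists, and the induction closes. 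I expect the only mildly delicate points to be keeping the induction non-circular --- checking that the eigenvalue bound feeding Lemma~\ref{lemBOUNDCT} at step $t$ depends only on objects already known to exist ($\matrice{v}_{t-1}$ and the $(\epsilon,\tau)$-accuracy of the permutations applied so far), not on $\matrice{v}_t$ itself --- and correctly controlling the sign of $c_{1,t}$; everything downstream of Lemma~\ref{lemBOUNDCT} is purely numerical.
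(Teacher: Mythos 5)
Your proof is correct and follows the same route as the paper's one-line argument (apply Lemma~\ref{lemBOUNDCT} and do a small arithmetic comparison), but it makes explicit two things the paper glosses over. First, the quantities $c_{i,t}$ are only defined once $\matrice{v}_{t-1}$ exists, and $\matrice{v}_t$'s existence is in turn what the invertibility condition is supposed to deliver at step $t$; the paper skips the induction you spell out, which is precisely what makes the argument non-circular. Second, the paper's proof writes $(1-c_{1,t})^2>121/144$, silently treating $\mathrm{sign}(c)=+1$, and its Lemma~\ref{lemBOUNDCT} only asserts the one-sided bound $c_{1,t}\le 1/12$; when $\mathrm{sign}(c)=-1$ one needs $|c_{1,t}|\le 1/12$ to conclude $(1+c_{1,t})^2\ge(11/12)^2$, and your Cauchy--Schwartz bound $|c_{1,t}|\le\sqrt{c_{0,t}c_{2,t}}$ (in the $\matrice{v}_{t-1}$-inner product) supplies exactly the missing two-sided estimate. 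One small remark: you do not even need the paper's strict inequality $c_{0,t}c_{2,t}>0$ (which would require $\ve{a}_t\neq\ve{0}$); your comparison $c_{0,t}c_{2,t}\le 1/144<121/144\le(1-\mathrm{sign}(c)\,c_{1,t})^2$ already separates the two sides whether or not $c_{0,t}c_{2,t}$ vanishes.
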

\begin{proof}
From Lemma \ref{lemBOUNDCT}, we conclude that $(1-c_{1,t})^2 > 121/144
> 1/144 > c_{0,t}c_{2,t} > 0$, hence the invertibility assumption holds.
\end{proof}

\begin{lemma}\label{boundLAMBDAT}
If $\PERM_t$ is $(\epsilon, \tau)$-accurate and the data-model
calibration assumption holds, the following
holds true: $\matrice{i}_d + \Lambda_t\succ 0$ and 
\begin{eqnarray}
\lambda_1^\downarrow\left(\Lambda_t\right) & \leq & \frac{\xi}{m} \:\:.\label{condLAMBDAT}
\end{eqnarray}
\end{lemma}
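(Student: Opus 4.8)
The plan is to view $\Lambda_t$ as a product of a positive definite matrix and a symmetric matrix and bound its spectrum factor by factor. By eq.~(\ref{defL1}), $\Lambda_t=\nu\,\matrice{v}_t\matrice{u}_{t+1}$ with $\nu=-F'(0)/|c|$; under the hypotheses the invertibility assumption holds (Corollary~\ref{corBOUNDCT}), so $\matrice{v}_t\succ 0$ and $\matrice{v}_t^{1/2}$ exists. Conjugating by $\matrice{v}_t^{1/2}$ shows $\Lambda_t$ is similar to the symmetric matrix $\nu\,\matrice{v}_t^{1/2}\matrice{u}_{t+1}\matrice{v}_t^{1/2}$, so its eigenvalues are real; moreover, by submultiplicativity of the operator norm and since $\|\matrice{v}_t\|_{\mathrm{op}}=\lambda_1^\downarrow(\matrice{v}_t)$ and $\|\matrice{u}_{t+1}\|_{\mathrm{op}}=\max\{\lambda_1^\downarrow(\matrice{u}_{t+1}),|\lambda_1^\uparrow(\matrice{u}_{t+1})|\}$,
\begin{eqnarray*}
\|\Lambda_t\|_{\mathrm{op}} & \leq & \frac{|F'(0)|}{|c|}\cdot\lambda_1^\downarrow(\matrice{v}_t)\cdot\max\{\lambda_1^\downarrow(\matrice{u}_{t+1}),|\lambda_1^\uparrow(\matrice{u}_{t+1})|\}\:\:.
\end{eqnarray*}

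The first step is to bound $\matrice{u}_{t+1}$. Lemma~\ref{lemBOUNDCT} gives $c_{i,t+1}\leq 1/12$ for $i\in\{0,1,2\}$, so in Lemma~\ref{lemLAMBDAUT} the prefactor satisfies $\frac{2+3(c_{0,t+1}+c_{2,t+1})}{2|(1-c_{1,t+1})^2-c_{0,t+1}c_{2,t+1}|}\leq\frac{3}{2}$ (a direct computation using $(1-1/12)^2-(1/12)^2=120/144$); together with Lemma~\ref{lemV} ($\max\{\|\ve{a}_{t+1}\|_2^2,\|\ve{b}_{t+1}\|_2^2\}\leq 2\xi X_*^2$) this yields $\max\{\lambda_1^\downarrow(\matrice{u}_{t+1}),|\lambda_1^\uparrow(\matrice{u}_{t+1})|\}\leq 3\xi X_*^2$. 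Next, Lemma~\ref{lemV2} gives $\lambda_1^\downarrow(\matrice{v}_t)\leq\frac{1}{m}\cdot\frac{1}{M_{\min}(\mathrm{sign}(c))+(2\gamma/|c|)\lambda_1^\uparrow(\Gamma)}$. Plugging both into the display above, clearing $|c|$ from the denominator, and using $U(\mathrm{sign}(c))=|c|\,M_{\min}(\mathrm{sign}(c))$ --- a one-line check from the definition of $U$ in Section~\ref{sec:relerr} against that of $M_{\min}$ in Lemma~\ref{lemV2}, done separately for $\mathrm{sign}(c)=+1$ and $\mathrm{sign}(c)=-1$ --- gives
\begin{eqnarray*}
\|\Lambda_t\|_{\mathrm{op}} & \leq & \frac{3\xi\,|F'(0)|}{2m}\cdot\frac{X_*^2}{\frac{1}{2}U(\mathrm{sign}(c))+\gamma\lambda_1^\uparrow(\Gamma)}\:\:.
\end{eqnarray*}
By part~(a) of the data-model calibration assumption, taking the $1/|F'(0)|$ branch of the minimum, the last fraction is at most $1/(2|F'(0)|)$, so $\|\Lambda_t\|_{\mathrm{op}}\leq\frac{3\xi}{4m}\leq\frac{\xi}{m}$; since the eigenvalues of $\Lambda_t$ are real, this gives in particular $\lambda_1^\downarrow(\Lambda_t)\leq\xi/m$.

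For $\matrice{i}_d+\Lambda_t\succ 0$, part~(b) of the assumption gives $m\geq 4\xi$, whence $\|\Lambda_t\|_{\mathrm{op}}\leq 3/16<1$; then $\matrice{i}_d+\Lambda_t$ is invertible with (real) eigenvalues $1+\lambda_j(\Lambda_t)\in(1-\|\Lambda_t\|_{\mathrm{op}},1+\|\Lambda_t\|_{\mathrm{op}})\subset(0,\infty)$, and its symmetric part $\matrice{i}_d+\frac{1}{2}(\Lambda_t+\Lambda_t^\top)$ has smallest eigenvalue at least $1-\|\Lambda_t\|_{\mathrm{op}}>0$, so $\matrice{i}_d+\Lambda_t\succ 0$ in the relevant sense. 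I expect the one genuinely delicate point to be the bookkeeping identity $U(\mathrm{sign}(c))=|c|\,M_{\min}(\mathrm{sign}(c))$ reconciling the notations of the body with those of Lemma~\ref{lemV2}, together with tracking the numerical constants carefully enough that the chain lands at $\xi/m$ rather than a larger multiple; everything else is just feeding the already-proved bounds of Lemmas~\ref{lemBOUNDCT},~\ref{lemV},~\ref{lemLAMBDAUT} and~\ref{lemV2} into the submultiplicativity estimate above.
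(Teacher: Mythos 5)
Your proof is correct, and it reaches the claim by a more direct route than the paper's. The ingredients are the same --- Lemmas~\ref{lemBOUNDCT}, \ref{lemV}, \ref{lemLAMBDAUT}, \ref{lemV2}, the similarity of $\Lambda_t$ to a symmetric matrix via conjugation by $\matrice{v}_t^{1/2}$, and part~(a) of the data--model calibration assumption --- but the execution differs. The paper does not directly bound $\lambda_1^\downarrow(\Lambda_t)$: after feeding the lemma bounds into a single product it introduces $U := |\nu|\lambda_1^\downarrow(\matrice{v}_t)\xi X_*^2$, poses the target inequality as a quadratic in $U$, lower-bounds $\sqrt{1+x}$ by its degree-two Taylor polynomial, and then checks that the calibration assumption makes the resulting sufficient condition hold (the arithmetic along the way uses $c_{0,t}+c_{2,t}\le 2U$, which silently assumes a specific relationship between $|\nu|$ and the $c_{i,t}$ bounds and is not clean). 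You instead bound each of $\|\matrice{u}_{t+1}\|_{\mathrm{op}}$ and $\|\matrice{v}_t\|_{\mathrm{op}}$ numerically, observe the bookkeeping identity $U(\mathrm{sign}(c))=|c|\,M_{\min}(\mathrm{sign}(c))$ that reconciles Lemma~\ref{lemV2}'s notation with the calibration assumption's, and multiply, landing directly at $\|\Lambda_t\|_{\mathrm{op}}\le 3\xi/(4m)$, slightly tighter than $\xi/m$. Two additional small points where your write-up is more careful than the paper's: you correctly use index $t+1$ on $\matrice{u}$ and on $c_{i,\cdot}$, consistent with the definition in eq.~(\ref{defL1}) (the paper's proof drops the shift and writes $\matrice{u}_t$); and you explicitly note that $\succ 0$ is ambiguous for a non-symmetric matrix and cover both natural readings (all real eigenvalues positive, and positive-definite symmetric part). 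A trivial nitpick: $\|\matrice{u}_{t+1}\|_{\mathrm{op}}$ is $\max\{|\lambda_1^\downarrow|,|\lambda_1^\uparrow|\}$ with an absolute value also on the first entry; your formula drops it, which only matters if $\matrice{u}_{t+1}\prec 0$, and Lemma~\ref{lemLAMBDAUT} already bounds $|\lambda_1^\uparrow|$ in that case so the conclusion is unaffected.
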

\begin{proof}
First note that $\lambda_1^\uparrow(\matrice{v}_{t}) \geq 1/(\gamma
\lambda_1^\downarrow(\Gamma)) > 0$ and so $\matrice{v}_{t} \succ 0$,
which implies that $\Lambda_t \defeq \nu\matrice{v}_{t}\matrice{u}_t =
\nu\matrice{v}^{1/2}_{t}(\matrice{v}^{1/2}_{t} \matrice{u}_t
\matrice{v}^{1/2}_{t}) \matrice{v}^{-1/2}_{t}$, \textit{i.e.}
$\Lambda_t$ is similar to a symmetric matrix ($\matrice{v}^{1/2}_{t} \matrice{u}_t
\matrice{v}^{1/2}_{t} $) and therefore has only
real eigenvalues. 
We get
\begin{eqnarray}
\lambda_1^\downarrow\left(\Lambda_t\right) & = &
\lambda_1^\downarrow\left(\nu \cdot \matrice{v}_{t}\matrice{u}_t\right) \nonumber\\
 & \leq & |\nu| \lambda_1^\downarrow(\matrice{v}_t)  \cdot \left(1 +
  \frac{3}{2}\cdot (c_{0,t} + c_{2,t}) \right) \cdot \max\{\|\ve{a}_t\|^2_2,
\|\ve{b}_t\|^2_2\} \label{eq111}\\
 & \leq & \frac{2 +
  3(c_{0,t} + c_{2,t})}{|(1-c_{1,t})^2 - c_{0,t}c_{2,t}|} \cdot  |\nu| \lambda_1^\downarrow(\matrice{v}_t)  \xi \cdot X_*^2 \label{eq112}\:\:.
\end{eqnarray}
Ineq. (\ref{eq111}) is due to Lemma \ref{lemLAMBDAUT} and
ineq. (\ref{eq112}) is due to Lemma \ref{lemV}. We now use Lemma
\ref{lemBOUNDCT} and its proof, which shows that 
\begin{eqnarray}
(1-c_{1,t})^2 -
c_{0,t}c_{2,t} & \geq & \left(1-\frac{1}{12}\right)^2 - \frac{1}{144}
\nonumber\\
 & & = \frac{5}{6}\:\:. 
\end{eqnarray}
Letting $U\defeq |\nu| \lambda_1^\downarrow(\matrice{v}_t)\xi \cdot X_*^2$ for short, we thus get from the proof of Lemma
\ref{lemBOUNDCT}:
\begin{eqnarray}
\lambda_1^\downarrow\left(\Lambda_t\right) & \leq & \frac{6}{5} \cdot (2+3(U +
U))U\nonumber\\
 & & = \frac{6}{5} \cdot (2U+6U^2)\:\:.
\end{eqnarray}
Now we want $\lambda_1^\downarrow\left(\Lambda_t\right) \leq \xi /m$, which translates into a second-order inequality for $U$,
whose solution imposes the following upperbound on $U$:
\begin{eqnarray}
6 U & \leq & -1 + \sqrt{1+\frac{5\xi}{m}}\:\:.\label{condUU}
\end{eqnarray}
We can indeed forget the lowerbound for $U$, whose sign is negative while
$U\geq 0$. 

Since $\sqrt{1+x}\geq 1 + (x/2) - (x^2/8)$ for $x\geq 0$
(and $\xi/m \geq 0$), we get the
sufficient condition for ineq. (\ref{condUU}) to be satisfied:
\begin{eqnarray}
6|\nu| \lambda_1^\downarrow(\matrice{v}_t) \xi
 \cdot X_*^2 & \leq & \frac{5 \xi}{2m} -\frac{25}{8}\cdot \left(\frac{\xi}{m}\right)^2\:\:.\label{constLAMBDA2}
\end{eqnarray}
Now, it comes from Lemma \ref{lemV2} that a sufficient condition for
ineq. (\ref{constLAMBDA2}) is that
\begin{eqnarray}
\frac{\xi}{m}\cdot \frac{6|\nu| X_*^2}{ M_{\min}(\mathrm{sign}(c), \ve{w})  + \frac{2\gamma}{|c|}
  \lambda_1^\uparrow(\Gamma)}  & \leq & \frac{5\xi}{2m} -\frac{25}{8}\cdot \left(\frac{\xi}{m}\right)^2\:\:,\label{constLAMBDA3}
\end{eqnarray}
which, after simplification, is equivalent to
\begin{eqnarray}
\frac{12}{5}\cdot \frac{|\nu| X_*^2}{ M_{\min}(\mathrm{sign}(c), \ve{w})  + \frac{2\gamma}{|c|}
  \lambda_1^\uparrow(\Gamma)}  + \frac{5\xi}{4m} & \leq & 1\:\:,\label{constLAMBDA31}
\end{eqnarray}
or,
\begin{eqnarray}
\frac{6 |F'(0)|}{5}\cdot \frac{ X_*^2}{\frac{|c|}{2} M_{\min}(\mathrm{sign}(c), \ve{w})  + \gamma
  \lambda_1^\uparrow(\Gamma)}  + \frac{5\xi}{4m} & \leq & 1\:\:,\label{constLAMBDA32}
\end{eqnarray}
But, the data-model calibration assumption implies that the left-hand
side is no more than $(3/5) + (5/16) = 73/80 < 1$, and 
ineq. (\ref{condLAMBDAT}) follows.\\

It also trivially follows that $\matrice{i}_d + \Lambda_t$ has only
real eigenvalues. To prove that they are all strictly positive, we
know that the only potentially negative eigenvalue of $\matrice{u}_t$,
$\lambda_-$ (Lemma \ref{lemLAMBDAUT}) is smaller in absolute value to
$\lambda_1^\downarrow(\matrice{u}_t)$. $\matrice{v}_t$ being positive
definite, we thus have under the $(\epsilon, \tau)$-accuracy
assumption and data-model calibration:
\begin{eqnarray}
\lambda_1^\uparrow(\matrice{i}_d + \Lambda_t) & \geq & 1 -
\frac{\xi}{m} \nonumber\\
 & \geq & 1 - \frac{1}{4}  = \frac{3}{4} > 0\:\:,
\end{eqnarray}
showing $\matrice{i}_d + \Lambda_t$ is positive definite.
This ends the
proof of Lemma \ref{boundLAMBDAT}.
\end{proof}

We recall that $0\leq T_+\leq T$ denote the number of elementary permutations
that act between classes, and $\rho \defeq T_+ / T$ denote the
proportion of such elementary permutations among all.

\begin{theorem}\label{thAPPROX1SMB}
Suppose $\PERM_*$ is $(\epsilon, \tau)$-accurate and
$\alpha$-bounded, and the data-model calibration assumption holds. Then the following holds for all $T\geq 1$:
\begin{eqnarray}
\|\ve{\theta}^*_{T} - \ve{\theta}^*_0 \|_2 & \leq & \frac{\xi}{m} \cdot T^2 \cdot \left( \|\ve{\theta}^*_0 \|_2 +
\frac{\sqrt{\xi}}{4 X_*} \cdot \rho\right)\nonumber\\
 & \leq & \left(\frac
  {\xi}{m}\right)^{\alpha} \cdot \left(
  \|\ve{\theta}^*_0 \|_2 + \frac{\sqrt{\xi}}{4 X_*}
  \cdot \rho
\right)\:\:.\label{eqthAPPROX0}
\end{eqnarray}
\end{theorem}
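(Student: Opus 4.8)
The plan is to unpack the exact identity of Theorem~\ref{thEXACT}, $\ve{\theta}^*_{T} - \ve{\theta}^*_{0} = (\matrice{h}_{T,0} - \matrice{i}_d)\ve{\theta}^*_0 + \sum_{t=0}^{T-1} \matrice{h}_{T,t+1}\ve{\lambda}_{t}$, and bound the two summands separately in Euclidean/operator norm. For the matrix products I would first upgrade Lemma~\ref{boundLAMBDAT} from its spectral-radius form to an operator-norm bound: since $\matrice{u}_t$ is symmetric and $\matrice{v}_t\succ 0$, submultiplicativity gives $\|\Lambda_t\|_{op}=|\nu|\,\|\matrice{v}_t\matrice{u}_t\|_{op}\le|\nu|\,\lambda_1^\downarrow(\matrice{v}_t)\cdot\max\{\lambda_1^\downarrow(\matrice{u}_t),|\lambda_1^\uparrow(\matrice{u}_t)|\}$, which is precisely the quantity already shown to be $\le\xi/m$ in the proof of Lemma~\ref{boundLAMBDAT} (via Lemmas~\ref{lemLAMBDAUT}, \ref{lemV}, \ref{lemV2} and data-model calibration). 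Hence $\|\matrice{i}_d+\Lambda_t\|_{op}\le 1+\frac{\xi}{m}$, so $\|\matrice{h}_{i,j}\|_{op}\le(1+\frac{\xi}{m})^{i-j}$, and, telescoping $\matrice{h}_{T,0}-\matrice{i}_d=\sum_{t}\matrice{h}_{T,t+1}\Lambda_t$, $\|\matrice{h}_{T,0}-\matrice{i}_d\|_{op}\le(1+\frac{\xi}{m})^{T}-1$. Under $\alpha$-boundedness together with $m\ge 4\xi$ one has $T\frac{\xi}{m}\le(\frac{\xi}{m})^{(1+\alpha)/2}\le 1$, and the elementary estimate $(1+x)^{T}-1\le T^2x$ for $0\le Tx\le 1$ (bound $(1+x)^{T}-1\le e^{Tx}-1\le Tx+(Tx)^2\le 2Tx\le T^2x$) then yields $\|\matrice{h}_{T,0}-\matrice{i}_d\|_{op}\le\frac{\xi}{m}T^2$, and likewise $\sum_{t=0}^{T-1}(1+\frac{\xi}{m})^{T-1-t}=\frac{(1+\xi/m)^{T}-1}{\xi/m}\le T^2$.

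For the $\ve{\lambda}_t$ terms the decisive structural observation is that $\ve{\lambda}_t$ is \emph{nonzero only at class-mismatch steps}: writing $\ve{\lambda}_t=-\frac{F'(0)}{|c|}\matrice{v}_{t+1}\ve{\epsilon}_t$ with $\ve{\epsilon}_t=\ve{\mu}_{t+1}-\ve{\mu}_t$, invariance of the mean operator under within-class permutations forces $\ve{\epsilon}_t=\ve 0$ unless $\PERM_{t+1}$ crosses classes, in which case Lemma~\ref{lemUAUB} identifies the swapped shuffle rows and gives $\ve{\epsilon}_t=(y_{\ua{t+1}}-y_{\va{t+1}})\ve{b}^{+}_{t+1}$, so $\|\ve{\epsilon}_t\|_2\le 2\|\ve{b}_{t+1}\|_2\le 2\sqrt{2\xi}\,X_*$ by Lemma~\ref{lemV}. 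Meanwhile Lemma~\ref{lemV2} combined with condition~(a) of data-model calibration --- after noting $\frac12 U(\mathrm{sign}(c))=\frac{|c|}{2}M_{\min}(\mathrm{sign}(c))$ --- gives $\frac{|F'(0)|}{|c|}\lambda_1^\downarrow(\matrice{v}_{t+1})\le\frac{1}{4mX_*^2}$. Since at most $T_+=\rho T$ of the $\ve{\lambda}_t$ are nonzero, $\sum_t\|\matrice{h}_{T,t+1}\|_{op}\|\ve{\lambda}_t\|_2$ collapses, using the coefficient estimate above and the count $T_+=\rho T$, to a quantity bounded by $\frac{\xi}{m}T^2\cdot\frac{\sqrt{\xi}}{4X_*}\rho$.

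Putting the pieces together, $\|\ve{\theta}^*_T-\ve{\theta}^*_0\|_2\le\|\matrice{h}_{T,0}-\matrice{i}_d\|_{op}\|\ve{\theta}^*_0\|_2+\sum_t\|\matrice{h}_{T,t+1}\|_{op}\|\ve{\lambda}_t\|_2\le\frac{\xi}{m}T^2\big(\|\ve{\theta}^*_0\|_2+\frac{\sqrt{\xi}}{4X_*}\rho\big)$, which is the first inequality. The second follows by inserting the $\alpha$-boundedness hypothesis $T^2\le(m/\xi)^{1-\alpha}$ into $\frac{\xi}{m}T^2$, since $\frac{\xi}{m}(m/\xi)^{1-\alpha}=(\xi/m)^{\alpha}$.

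The step I expect to be most delicate is the constant bookkeeping behind the second paragraph: one needs the $\ve{\lambda}_t$ contribution to be \emph{genuinely} dominated by $\frac{\xi}{m}T^2\cdot\frac{\sqrt{\xi}\rho}{4X_*}$ and not merely of that order, and this is exactly what the data-model calibration assumption is engineered for --- it simultaneously forces $\|\Lambda_t\|_{op}\le\xi/m$ (so the products $\matrice{h}_{T,t+1}$ barely grow and the geometric sum collapses to $T^2$) and $\frac{|F'(0)|}{|c|}\lambda_1^\downarrow(\matrice{v}_{t+1})\le\frac{1}{4mX_*^2}$ (so each per-step perturbation $\|\ve{\lambda}_t\|_2$ carries the small factor $\frac{1}{4mX_*}$), while the $Tx\le 1$ slack from $\alpha$-boundedness is what lets one replace $(1+\frac{\xi}{m})^{T}-1$ by $\frac{\xi}{m}T^2$ without picking up a constant. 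A secondary technical point is the operator-norm (rather than spectral-radius) control of $\matrice{i}_d+\Lambda_t$: since $\Lambda_t=\nu\matrice{v}_t\matrice{u}_t$ is a product of symmetric matrices it need not itself be symmetric, and this is handled by the submultiplicativity $\|\matrice{v}_t\matrice{u}_t\|_{op}\le\|\matrice{v}_t\|_{op}\|\matrice{u}_t\|_{op}$ used above.
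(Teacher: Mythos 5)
Your outline follows the paper's own route: start from the exact recursion of Theorem~\ref{thEXACT}, bound $\matrice{h}_{T,0}-\matrice{i}_d$ via the per-step control $\|\Lambda_t\|\le\xi/m$ of Lemma~\ref{boundLAMBDAT}, and handle the shift term using $\ve{\epsilon}_t=0$ on within-class swaps, $\|\ve{b}_t\|_2\le\sqrt{2\xi}X_*$ (Lemma~\ref{lemV}) and the $\matrice{v}$-eigenvalue bound (Lemma~\ref{lemV2}). Your rewrite in terms of the operator norm with submultiplicativity plus the telescoping $\matrice{h}_{T,0}-\matrice{i}_d=\sum_t\matrice{h}_{T,t+1}\Lambda_t$ is, in fact, slightly tighter than the paper's chain, which phrases everything in terms of $\lambda_1^\downarrow$ of products of non-symmetric matrices and quietly relies on a cited Bhatia problem to control $\matrice{g}_{T,t+1}$; you avoid that by noting that Lemma~\ref{lemLAMBDAUT} already bounds $\|\matrice{u}_t\|_{\mathrm{op}}$ on both sides of zero, so Lemma~\ref{boundLAMBDAT} upgrades to an operator-norm statement. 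You also (correctly) keep the factor $|y_{\ua{t}}-y_{\va{t}}|=2$ in $\|\ve{\epsilon}_t\|_2$, which the paper drops.

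There is, however, a genuine gap in the second paragraph, exactly at the step you flag as delicate: the assertion that the shift sum ``collapses'' to $\frac{\xi}{m}T^2\cdot\frac{\sqrt{\xi}}{4X_*}\rho$ is not derived, and your own per-step estimates do not yield it. You have $\|\ve{\lambda}_t\|_2\le\frac{|F'(0)|}{|c|}\lambda_1^\downarrow(\matrice{v}_{t+1})\|\ve{\epsilon}_t\|_2\le\frac{1}{4mX_*^2}\cdot2\sqrt{2\xi}X_*=\frac{\sqrt{2\xi}}{2mX_*}$, and $\|\matrice{h}_{T,t+1}\|_{\mathrm{op}}\le(1+\xi/m)^{T-1-t}$. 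Placing the $T_+$ class-mismatch steps at small $t$ (the worst configuration) and summing the geometric series gives at best a bound of order $\frac{\sqrt{\xi}\,T_+^2}{mX_*}$. The target $\frac{\xi}{m}T^2\cdot\frac{\sqrt{\xi}}{4X_*}\rho=\frac{\xi^{3/2}TT_+}{4mX_*}$ would require $T_+\lesssim\xi T$, i.e. $\rho\lesssim\xi$, which is not assumed. In short, one power of $\xi$ is missing, and a careful reader will find the same issue in the paper's own proof (between its eqs.~(\ref{upperRT}) and~(\ref{ineq14}) a factor $\xi/m$ appears where only $1/m$ is licensed, and a further factor $2$ is lost in the last display). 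Your proof strategy is therefore right, but to close it one either has to weaken the constant in $\deltaperm$ (absorbing the missing $\xi$), or impose an extra assumption relating $\rho$ to $\xi$, or find a per-step estimate that actually carries a factor $\xi/m$ rather than $1/m$.
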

\begin{proof}
We use Theorem \ref{thEXACT}, which yields from the triangle inequality:
\begin{eqnarray}
\|\ve{\theta}^*_{T} - \ve{\theta}^*_{0}\|_2 & = & \|(\matrice{h}_{T,0} - \matrice{i}_d)
\ve{\theta}^*_0\|_2 + \left\|\sum_{t=0}^{T-1} \matrice{h}_{T,t+1} \ve{\lambda}_{t}\right\|_2\:\:.\label{condAT2}
\end{eqnarray}
Denote for short $q\defeq \xi/m$. It comes from the definition of $\matrice{h}_{i,j}$ and
Lemma \ref{boundLAMBDAT} the first inequality of:
\begin{eqnarray}
\lambda_1^\downarrow\left(\matrice{h}_{T,0} - \matrice{i}_d\right) &
\leq & (1+q)^T-1 \nonumber\\
 & \leq & T^2 q\:\:,\label{condLAMBDAH}
\end{eqnarray}
where the second inequality holds because  ${T \choose k} q^k\leq
(Tq)^k\leq Tq$ for $k\geq 1$ whenever $Tq \leq 1$, which is equivalent
to
\begin{eqnarray}
T & \leq & \frac{m}{\xi}\:\:,\label{condT}
\end{eqnarray}
which is implied by the condition of $\alpha$-bounded permutation
size ($n/\xi \geq 4 \geq 1$ from the data-model
calibration assumption). We thus get 
\begin{eqnarray}
\|\left(\matrice{h}_{T,0} - \matrice{i}_d\right) \ve{\theta}^*_{0}\|_2 & \leq & T^2 q \cdot \|\ve{\theta}^*_{0}\|_2\:\:.\label{eq001}
\end{eqnarray}
Using ineq. (\ref{condAT2}), this shows the statement of the Theorem
with (\ref{condAT}). The upperbound comes from the fact that the factor in the right hand side is no more than
$(\xi/m)^\alpha$ for some $0\leq \alpha \leq 1$
provided this time the stronger constraint holds:
\begin{eqnarray}
T & \leq & \left(\frac{m}{\xi}\right)^{\frac{1-\alpha}{2}}\:\:,\label{condT2}
\end{eqnarray}
which is the condition of $\alpha$-boundedness.

Let us now have a look at the shift term in eq. (\ref{condAT2}), which depends only on the
mistakes between classes done during the permutation (which changes
the mean operator between permutations),
\begin{eqnarray}
\matrice{r} & \defeq & \sum_{t=0}^{T-1} \matrice{h}_{T,t+1} \ve{\lambda}_{t}\:\:.
\end{eqnarray}
Using eq. (\ref{defL2}), we can simplify $\matrice{r}$ since $\ve{\lambda}_{t} = \nu \matrice{v}_{t+1} \ve{\epsilon}_{t}$, so if we define $\matrice{g}_{.,.}$ from
$\matrice{h}_{.,.}$ as follows, for $0\leq j \leq i$:
\begin{eqnarray}
\matrice{g}_{i,j} & \defeq & \nu \matrice{h}_{i, j} \matrice{v}_{j} \:\:,\label{defGIJTILDE}
\end{eqnarray}
then we get
\begin{eqnarray}
\matrice{r} & \defeq & \sum_{t=0}^{T-1} \matrice{g}_{T,t+1} \ve{\epsilon}_{t}\:\:,
\end{eqnarray}
where we recall that $\ve{\epsilon}_{t} \defeq \ve{\mu}_{t+1} -
  \ve{\mu}_{t}$ is the shift in the mean operator, \textit{which is
    the null vector whenever $\PERM_t$ acts in a specific class} ($y_{\ua{t}}
  = y_{\va{t}}$). To see this, 
we remark
\begin{eqnarray}
\ve{\epsilon}_{t} & \defeq & \ve{\mu}_{t+1} -
  \ve{\mu}_{t}\nonumber\\
 & = & \sum_i y_i \cdot \left[
\begin{array}{c}
\ve{x}_{i_\anchor}\\\cline{1-1}
\ve{x}_{{(t+1)i}_\shuffle} 
\end{array}
\right] - \sum_i y_i \cdot \left[
\begin{array}{c}
\ve{x}_{i_\anchor}\\\cline{1-1}
\ve{x}_{{ti}_\shuffle} 
\end{array}
\right]\nonumber\\
 & = & \sum_i y_i \cdot \left[
\begin{array}{c}
0\\\cline{1-1}
\ve{x}_{{(t+1)i}_\shuffle} 
\end{array}
\right] - \sum_i y_i \cdot \left[
\begin{array}{c}
0\\\cline{1-1}
\ve{x}_{{ti}_\shuffle} 
\end{array}
\right]\nonumber\\
 & = & \left[
\begin{array}{c}
0\\\cline{1-1}
\sum_i y_i \cdot (\ve{x}_{{(t+1)i}_\shuffle} - \ve{x}_{{ti}_\shuffle}) 
\end{array}
\right] \defeq \left[
\begin{array}{c}
0\\\cline{1-1}
{\ve{\epsilon}_{t}}_\shuffle
\end{array}
\right]\:\:,
\end{eqnarray}
which can be simplified further since we work with the elementary
permutation $\PERM_{t}$,
\begin{eqnarray}
{\ve{\epsilon}_{t}}_\shuffle & = & y_{\ua{t}} \cdot
(\ve{x}_{{\vb{t}}}-\ve{x}_{{\ub{t}}})_\shuffle +  y_{\va{t}} \cdot
(\ve{x}_{{\ub{t}}}-\ve{x}_{{\vb{t}}})_\shuffle\nonumber\\
 & = & (y_{\ua{t}} - y_{\va{t}}) \cdot (\ve{x}_{{\vb{t}}}-\ve{x}_{{\ub{t}}})_\shuffle\:\:.
\end{eqnarray}
Hence, 
\begin{eqnarray}
\|\ve{\epsilon}_{t} \|_2 =\|{\ve{\epsilon}_{t}}_\shuffle \|_2 & = &
1_{y_{\ua{t}} \neq y_{\va{t}}} \cdot
\|(\ve{x}_{{\vb{t}}}-\ve{x}_{{\ub{t}}})_\shuffle\|_2\nonumber\\
 & \leq & 1_{y_{\ua{t}} \neq y_{\va{t}}} \cdot \sqrt{2\xi)} X_*\:\:,\label{beps}
\end{eqnarray}
from Lemma \ref{lemV},  and we see that indeed $\|\ve{\epsilon}_{t} \|_2 = 0$ when the
elementary permutation occurs within observations of the same class.\\

It follows from the data-model calibration assumption and Lemma \ref{lemV2} that
\begin{eqnarray}
\lambda_1^\downarrow(\matrice{v}_t) & \leq & \frac{1}{m}\cdot
\frac{1}{M_{\min}(\mathrm{sign}(c))  + \frac{2\gamma}{|c|}
  \lambda_1^\uparrow(\Gamma)} \nonumber\\
 & & = \frac{|c|}{2mX_*^2}\cdot
\frac{X_*^2}{\frac{|c|}{2}\cdot M_{\min}(\mathrm{sign}(c))  + \gamma
  \lambda_1^\uparrow(\Gamma)}\nonumber\\
& \leq & \frac{|c|}{2mX_*^2}\cdot\frac{1}{2}\cdot \min\left\{\frac{1}{|F'(0)|}, \frac{1}{2|c|}\right\}\nonumber\\
 & \leq & \frac{1}{8mX_*^2}\:\:.\label{boundLAMBDAVT2}
\end{eqnarray}
Using \cite{bMA} (Problem III.6.14), Lemma \ref{boundLAMBDAT} and
ineq. (\ref{boundLAMBDAVT2}), we also obtain
\begin{eqnarray}
\lambda^\downarrow_1\left(\matrice{g}_{T,t+1}\right) & \leq & 2 \cdot \left(1+
  \frac{\xi}{n}
\right)^{T-t-1} \cdot \frac{1}{8mX_*^2}\:\:.
\end{eqnarray}
So, 
\begin{eqnarray}
\|\matrice{r}\|_2 & \leq & \sum_{t=0}^{T-1}
\lambda_{\mathrm{max}}\left(\matrice{g}_{T,t+1}\right)
\|\ve{\epsilon}_{t}\|_2\nonumber\\
 & \leq & \frac{1}{2\sqrt{2}} \cdot \sum_{t=0}^{T-1}
1_{y_{\ua{t}} \neq y_{\va{t}}} \cdot \left(1+
  \frac{\xi}{n}
\right)^{T-t-1} \cdot \frac{\sqrt{\xi}}{mX_*}\nonumber\\
 & & = \frac{1}{2X_*} \cdot \sqrt{\frac{\xi}{2}}\cdot \sum_{t=0}^{T-1}
1_{y_{\ua{t}} \neq y_{\va{t}}} \cdot \left(1+
  \frac{\xi}{n}
\right)^{T-t-1} \cdot \frac{\xi}{m}\:\:,\label{upperRT}
\end{eqnarray}
from ineq. (\ref{beps}). Assuming $T_+ \leq T$ errors are made by permutations
between classes and recalling $q \defeq \xi/m$, we see that the largest upperbound for $\|\matrice{r}\|_2 $ in
ineq. (\ref{upperRT}) is obtained when all $T_+$ errors happen at the
last elementary permutations in the sequence in $\PERM_*$, so we get that
\begin{eqnarray}
\|\matrice{r}\|_2 & \leq & \frac{1}{2X_*} \cdot \sqrt{\frac{\xi}{2}}\cdot
\sum_{t=0}^{T_+-1} q(1+q)^{T-t-1}\nonumber\\
 & & =  \frac{1}{2X_*} \cdot \sqrt{\frac{\xi}{2}} \cdot q(1+q)^{T-T_+}
\sum_{t=0}^{T_+-1} (1+q)^{T_+-t-1}\nonumber\\
 & =  & \frac{1}{2X_*} \cdot \sqrt{\frac{\xi}{2}} \cdot (1+q)^{T-T_+}((1+q)^{T_+}-1) \:\:.\label{ineq14}
\end{eqnarray}
It comes from ineq. (\ref{condLAMBDAH}) $(1+q)^{T_+}-1 \leq T_+^2 q$
and 
\begin{eqnarray}
 (1+q)^{T-T_+} & \leq & (T-T_+)^2q + 1\nonumber\\
 & \leq & \left(\frac{n}{\xi}\right)^{1-\alpha} \cdot \frac{\xi}{n} + 1
 \nonumber\\
 & & = \left(\frac{\xi}{n}\right)^\alpha + 1\nonumber\\
 & \leq & \frac{1}{4} + 1 < \sqrt{2}\:\:.
\end{eqnarray}
The last line is due to the data-model calibration assumption. We
finally get from ineq. (\ref{ineq14})
\begin{eqnarray}
\|\matrice{r}\|_2 & \leq & \frac{\sqrt{\xi}}{2X_*} \cdot
\frac{\xi}{m} \cdot T_+^2\nonumber\\
 &  & = \frac{\xi^{\frac{3}{2}}}{4 X_* m} \cdot T_+^2 \:\:.
\end{eqnarray}
We also remark that if $\PERM_*$ is $\alpha$-bounded, since $T_+ \leq
T$, we also have:
\begin{eqnarray}
\frac{\xi^{\frac{3}{2}}}{4 X_* m} \cdot T_+^2 & \leq &
\frac{\xi^{\frac{3}{2}}}{4 X_* m} \cdot
\left(\frac{m}{\xi}\right)^{1-\alpha}\nonumber\\
 & & = \frac{\sqrt{\xi}}{4 X_*} \cdot
\left(\frac {\xi}{m}\right)^{\alpha}\:\:.
\end{eqnarray}
Summarizing, we get
\begin{eqnarray}
\|\ve{\theta}^*_{T} - \ve{\theta}^*_0 \|_2 & \leq & a(T) \cdot
\|\ve{\theta}^*_0 \|_2 + b(T_+)\:\:,\label{eqthAPPROX1}
\end{eqnarray}
where 
\begin{eqnarray}
a(T) & \defeq &  \frac{\xi}{m} \cdot T^2 \leq \left(\frac{\xi}{m}\right)^\alpha \:\:,\label{condAT}\\
b(T_+) & \defeq & \frac{\xi^{\frac{3}{2}}}{4 X_* m}
\cdot T_+^2 \leq \frac{\sqrt{\xi}}{4 X_*} \cdot
\left(\frac{\xi}{m}\right)^{\alpha}\:\:,\label{condBT}
\end{eqnarray}
which yields the proof
of Theorem \ref{thAPPROX1SMB}.
\end{proof}
Theorem \ref{thAPPROX1SMB}  easily yields the proof of Theorem
\ref{thAPPROX1}.

\subsection{Proof of Theorem \ref{thIMMUNE}}\label{app:proof-thIMMUNE}

Remark that for any example $(\ve{x}, y)$, we have from
Cauchy-Schwartz inequality:
\begin{eqnarray}
|y (\ve{\theta}^*_{T} - \ve{\theta}^*_0)^\top \ve{x}| = |(\ve{\theta}^*_{T} - \ve{\theta}^*_0)^\top \ve{x}| & \leq &
\|\ve{\theta}^*_{T} - \ve{\theta}^*_0\|_2 \|\ve{x}\|_2\nonumber\\
 & \leq &  \left(\frac
  {\xi}{m}\right)^{\alpha} \cdot \left(
  \|\ve{\theta}^*_0 \|_2 + \frac{\sqrt{\xi}}{4 X_*}
  \cdot \rho
\right) \cdot X_*\nonumber\\
 & & = \left(\frac
  {\xi}{m}\right)^{\alpha} \cdot \left(
  \|\ve{\theta}^*_0 \|_2 X_* + \frac{\sqrt{\xi}}{4}
  \cdot \rho
\right) \:\:.
\end{eqnarray}
So, to have $|y (\ve{\theta}^*_{T} - \ve{\theta}^*_0)^\top \ve{x}|
< \kappa$ for some $\kappa > 0$, it is sufficient that
\begin{eqnarray}
 m & > &  \xi \cdot \left(
  \frac{\|\ve{\theta}^*_0 \|_2 X_*}{\kappa} + \frac{\sqrt{\xi}}{4\kappa}
  \cdot \rho
\right)^{\frac{1}{\alpha}}\:\:.
\end{eqnarray}
In this case, for any example $(\ve{x},
y)$ such that $y (\ve{\theta}^*_0)^\top \ve{x} > \kappa$, then 
\begin{eqnarray}
y (\ve{\theta}^*_T)^\top \ve{x} & = & y (\ve{\theta}^*_0)^\top \ve{x}
+ y (\ve{\theta}^*_T - \ve{\theta}^*_0)^\top \ve{x}\nonumber\\
 & \geq & y (\ve{\theta}^*_0)^\top \ve{x} - |y (\ve{\theta}^*_T -
 \ve{\theta}^*_0)^\top \ve{x}|\nonumber\\
 & > & \kappa - \kappa = 0\:\:,
\end{eqnarray}
and we get the statement of the Theorem.

\subsection{Proof of Theorem \ref{thDIFFLOSS}}\label{app:proof-thDIFFLOSS}

We want to bound the difference between the loss \textit{over the true data}
for the optimal (unknown) classifier $\ve{\theta}^*_0$ and the
classifier we learn from entity resolved data, $\ve{\theta}^*_T$:
\begin{eqnarray}
\Delta_S(\ve{\theta}^*_0, \ve{\theta}^*_T) & \defeq & \taylorlossparams{F(0)}{F'(0)}{c} (S, \ve{\theta}^*_T; \gamma,
\Gamma) - \taylorlossparams{F(0)}{F'(0)}{c} (S, \ve{\theta}^*_0; \gamma,
\Gamma)\:\:.\label{defDelta}
\end{eqnarray}
We have
\begin{eqnarray}
\Delta_S(\ve{\theta}^*_0, \ve{\theta}^*_T) & = & \frac{1}{m}\cdot
\left( -F'(0) (\ve{\theta}^*_T-\ve{\theta}^*_0)^\top\left(\sum_i  y_i
    \ve{x}_i\right) + c \cdot \sum_i \left(((\ve{\theta}^*_0)^\top
\ve{x}_i)^2 - ((\ve{\theta}^*_T)^\top
\ve{x}_i)^2\right)\right)\nonumber\\
 & & + \gamma (\ve{\theta}^*_T)^\top \Gamma \ve{\theta}^*_T - \gamma
 (\ve{\theta}^*_0)^\top \Gamma \ve{\theta}^*_0\nonumber\\
 & = & \frac{1}{m}\cdot
\left( A + B\right)+ C\:\:,
\end{eqnarray}
with
\begin{eqnarray}
A & \defeq & -F'(0) (\ve{\theta}^*_T-\ve{\theta}^*_0)^\top\ve{\mu}_0  \nonumber\\
B & \defeq & c \cdot \sum_i \left(((\ve{\theta}^*_0)^\top
\ve{x}_i)^2 - ((\ve{\theta}^*_T)^\top
\ve{x}_i)^2\right)\nonumber\\
C & \defeq & \gamma \cdot ((\ve{\theta}^*_T)^\top \Gamma \ve{\theta}^*_T - 
 (\ve{\theta}^*_0)^\top \Gamma \ve{\theta}^*_0)\nonumber
\end{eqnarray}
Let $\Gamma \defeq \sum_i \lambda^\downarrow_i(\Gamma) \ve{u}_i
\ve{u}_i^\top$, where the $\ve{u}_i$s are orthonormal. Cauchy-Schwartz
inequality and the fact that the $\ve{u}_i$s are unit yields:
\begin{eqnarray}
(\ve{\theta}^*_T)^\top \Gamma \ve{\theta}^*_T - 
 (\ve{\theta}^*_0)^\top \Gamma \ve{\theta}^*_0 & = & \sum_i
 \lambda^\downarrow_i(\Gamma) (\ve{\theta}^*_T - \ve{\theta}^*_0)^\top\ve{u}_i
\ve{u}_i^\top (\ve{\theta}^*_T + \ve{\theta}^*_0)\nonumber\\
 & \leq & \sum_i
 \lambda^\downarrow_i(\Gamma) \|\ve{\theta}^*_T-\ve{\theta}^*_0\|_2
   \|\ve{\theta}^*_T+\ve{\theta}^*_0\|_2 \|\ve{u}_i\|_2^2\nonumber\\
& & = \|\ve{\theta}^*_T-\ve{\theta}^*_0\|_2
   \|\ve{\theta}^*_T+\ve{\theta}^*_0\|_2 \cdot \sum_i
 \lambda^\downarrow_i(\Gamma) \nonumber\\
 & \leq & d \lambda^\downarrow_1(\Gamma) \|\ve{\theta}^*_T-\ve{\theta}^*_0\|_2
   \|\ve{\theta}^*_T+\ve{\theta}^*_0\|_2 \:\:.
\end{eqnarray}
We also have, mutatis mutandis:
\begin{eqnarray}
B & = & c\left( \sum_i \left( (\ve{\theta}^*_0-\ve{\theta}^*_T)^\top \ve{x}_i
  \right)\left( (\ve{\theta}^*_0+\ve{\theta}^*_T)^\top \ve{x}_i
  \right)\right)\nonumber\\
 & \leq & m |c| \|\ve{\theta}^*_T-\ve{\theta}^*_0\|_2
   \|\ve{\theta}^*_T+\ve{\theta}^*_0\|_2 X_*^2\:\:,
\end{eqnarray}
and finally $A \leq |F'(0)| \|\ve{\theta}^*_T-\ve{\theta}^*_0\|_2
   \|\ve{\mu}_0\|_2 $. So,
\begin{eqnarray}
\Delta_S(\ve{\theta}^*_0, \ve{\theta}^*_T) & \leq &
\|\ve{\theta}^*_T-\ve{\theta}^*_0\|_2 \cdot \left( \frac{|F'(0)| 
   \|\ve{\mu}_0\|_2}{m} + 
   \|\ve{\theta}^*_T+\ve{\theta}^*_0\|_2 (|c|X_*^2 + d \gamma\lambda^\downarrow_1(\Gamma))\right)
\end{eqnarray}
We now need to bound $\|\ve{\theta}^*_T+\ve{\theta}^*_0\|_2$ in a
convenient way:
\begin{eqnarray}
\|\ve{\theta}^*_T+\ve{\theta}^*_0\|_2 & =& \|\ve{\theta}^*_T -
\ve{\theta}^*_0 + 2 \ve{\theta}^*_0\|_2\nonumber\\
 & \leq & \|\ve{\theta}^*_T -
\ve{\theta}^*_0\|_2 + 2 \|\ve{\theta}^*_0\|_2\:\:,
\end{eqnarray}
and so Theorem \ref{thAPPROX1SMB} yields:
\begin{eqnarray}
\|\ve{\theta}^*_T+\ve{\theta}^*_0\|_2 & \leq & 2 \|\ve{\theta}^*_0\|_2  + \frac{\xi}{n} \cdot T^2
\cdot \left( \|\ve{\theta}^*_0\|_2 +
\frac{\sqrt{\xi}}{4 X_*} \cdot \rho\right)\nonumber\\
 & & = 2 \|\ve{\theta}^*_0\|_2  + \frac{\xi (\deltatheta + \deltaperm)}{mX_*} \cdot T^2
\:\:.
\end{eqnarray}
Denote for short
\begin{eqnarray}
\eta & \defeq & \|\ve{\theta}^*_0\|_2 +
\frac{\sqrt{\xi}}{4 X_*} \cdot \rho = \frac{\deltatheta + \deltaperm}{X_*}\:\:.
\end{eqnarray}
We obtain:
\begin{eqnarray}
\Delta_S(\ve{\theta}^*_0, \ve{\theta}^*_T) & \leq & \frac{\xi \eta}{m} \cdot T^2
\cdot \left(
\begin{array}{l}
\frac{|F'(0)|\|\ve{\mu}_0\|_2}{m} + 2 |c|\|\ve{\theta}^*_0\|_2X^2_* + 2d
\gamma\lambda^\downarrow_1(\Gamma)\|\ve{\theta}^*_0\|_2 + \frac{|c|\xi
  X_*(\deltatheta + \deltaperm)}{m} \cdot T^2 \nonumber\\
+ \frac{d\xi
  (\deltatheta + \deltaperm) \gamma\lambda^\downarrow_1(\Gamma)}{mX_*} \cdot
T^2 
\end{array}\right)\nonumber\\
 & & = \frac{\xi (\deltatheta + \deltaperm)}{m} \cdot T^2
\cdot \left(
\begin{array}{l}
|F'(0)|\deltaset + 2 |c|\deltatheta + \frac{2d
\gamma\lambda^\downarrow_1(\Gamma)\deltatheta}{X^2_*} + \frac{|c|\xi
  (\deltatheta + \deltaperm)}{m} \cdot T^2 \nonumber\\
+ \frac{d\xi
  (\deltatheta + \deltaperm) \gamma\lambda^\downarrow_1(\Gamma)}{mX^2_*} \cdot
T^2 
\end{array}\right)\nonumber\\
& \leq & (\deltatheta + \deltaperm) C(m)
\cdot \left(|F'(0)|\deltaset + \left(|c|+\frac{d
      \gamma\lambda^\downarrow_1(\Gamma)}{X_*^2}\right)\left( 2\deltatheta +
  C(m)(\deltatheta + \deltaperm)\right)\right)\nonumber\\
& \leq & (\deltatheta + \deltaperm) C(m)
\cdot \left(|F'(0)|\deltaset + \left(|c|+\frac{d
      \gamma\lambda^\downarrow_1(\Gamma)}{X_*^2}\right)\left( 3\deltatheta +
  2\deltaperm\right)\right)\label{eqDELTA}\:\:.
\end{eqnarray}
We have used in the last inequality the fact that under the
data-model calibration assumption, $C(m) \leq (1/4)^\alpha \leq
1$. This ends the proof of Theorem \ref{thDIFFLOSS}.

\subsection{Proof of Theorem \ref{thGENTHETA}}\label{app:proof-thGENTHETA}

Letting $\Sigma_m \defeq \{-1,1\}^m$, the empirical
Rademacher complexity of hypothesis class ${\mathcal{H}}$ is \cite{bmRA}:
\begin{eqnarray}
R_m & \defeq & \expect_{\sigma \sim \Sigma_m}
 \sup_{h \in {\mathcal{H}}}
 \left\{
   \expect_{S}[\sigma(\ve{x}) h(\ve{x})]
   \right\} \:\:.\label{defRC1}
\end{eqnarray}
It is well-known that we have for linear classifiers whose $L_2$ norm is
bounded by $\theta_*$ \cite{kstOT} (Theorem 3):
\begin{eqnarray}
R_m & \leq & \frac{X_*\theta_*}{\sqrt{m}}\:\:.\label{bsupRC1}
\end{eqnarray}
We have from Theorem \ref{thAPPROX1} and
the triangle inequality:
\begin{eqnarray}
\|\ve{\theta}^*_T\|_2 & \leq & \left( 1+ C(m) \cdot \left( 1+ \frac{\deltaperm}{\deltatheta}\right)
\right) \cdot \|\ve{\theta}^*_0\|_2 \:\:,
\end{eqnarray}
so we can consider that 
\begin{eqnarray}
R_m & \leq & \frac{\deltatheta}{\sqrt{m}} \cdot \left( 1+ C(m) \cdot \left( 1+ \frac{\deltaperm}{\deltatheta}\right)
\right) \nonumber\\ 
 & & = R^*_m+ \frac{C(m)}{\sqrt{m}} \cdot
 \left( \deltatheta + \deltaperm\right) \:\:.\label{bsupRC12}
\end{eqnarray}
We split the Rademacher complexity this way because
$R^*_m \defeq \deltatheta/\sqrt{m}$ is the bound on the Rademacher complexity with
which we could have worked for $\ve{\theta}^*_0$.
Let
\begin{eqnarray}
A & \defeq & (\deltatheta + \deltaperm) C(m)
\cdot \left(|F'(0)|\deltaset + \left(|c|+\frac{d
      \gamma\lambda^\downarrow_1(\Gamma)}{X_*^2}\right)\left( 3\deltatheta +
  2\deltaperm\right)\right)
\end{eqnarray}
be the penalty appearing in ineq. (\ref{eqDELTA}).
Letting $L$ denote the Lipschitz constant for the Taylor loss, we get
from \cite{bmRA} (Theorem 7) that with probability $\geq 1 - \delta$
over the drawing of $S \sim \mathcal{D}^n$,
\begin{eqnarray}
\Pr_{(\ve{x}, y)\sim \mathcal{D}} \left[y (\ve{\theta}^*_T)^\top \ve{x}
\leq 0 \right] & \leq & \taylorlossparams{F(0)}{F'(0)}{c} (S, \ve{\theta}^*_T; \gamma,
\Gamma) + 2L R_m + \sqrt{\frac{\ln(2/\delta)}{2m}}\nonumber\\
 & \leq & \taylorlossparams{F(0)}{F'(0)}{c} (S, \ve{\theta}^*_0; \gamma,
\Gamma)  + A + 2LR^*_m + \frac{2LC(m)}{\sqrt{m}} \cdot
 \left( \deltatheta + \deltaperm\right) \nonumber\\
 & & + \sqrt{\frac{\ln(2/\delta)}{2m}}\:\:.\label{eqGEN1}
\end{eqnarray}
So let us denote
\begin{eqnarray}
B & \defeq & \taylorlossparams{F(0)}{F'(0)}{c} (S, \ve{\theta}^*_0; \gamma,
\Gamma) + 2LR^*_m + \sqrt{\frac{\ln(2/\delta)}{2m}}\:\:,
\end{eqnarray}
which would be \cite{bmRA}'s (Theorem 7) bound
guarantee (with high probability) on the optimal
classifier $\ve{\theta}^*_0$. We instead get
\begin{eqnarray}
\Pr_{(\ve{x}, y)\sim \mathcal{D}} \left[y (\ve{\theta}^*_T)^\top \ve{x}
\leq 0 \right] & \leq & \taylorlossparams{F(0)}{F'(0)}{c} (S, \ve{\theta}^*_T; \gamma,
\Gamma) + 2L R_m + \sqrt{\frac{\ln(2/\delta)}{2m}}\nonumber\\
 & \leq & B + D\:\:.\label{eqGEN22}
\end{eqnarray}
where
\begin{eqnarray}
D & \defeq & A + \frac{2LC(m)}{\sqrt{m}} \cdot
 \left( \deltatheta + \deltaperm\right)\nonumber\\
 & & = (\deltatheta + \deltaperm) C(m)
\cdot \left(\frac{2L}{\sqrt{m}} + |F'(0)|\deltaset + \left(|c|+\frac{d
      \gamma\lambda^\downarrow_1(\Gamma)}{X_*^2}\right)\left( 3\deltatheta +
  2\deltaperm\right)\right)
\end{eqnarray}

\end{document}